\newtheorem{theorem}{Theorem}[section]
\newtheorem{proposition}{Proposition}[section]
\newtheorem{lemma}[theorem]{Lemma}
\newtheorem{definition}[theorem]{Definition}
\newtheorem{corollary}[theorem]{Corollary}
\newtheorem{fact}[theorem]{Fact}
\renewenvironment{quote}
  {\list{}{\rightmargin=0.3cm \leftmargin=0.3cm}%
   \item\relax}
  {\endlist}
\newcommand{\cA}{\mathcal{A}}
\newcommand{\cB}{\mathcal{B}}
\newcommand{\cF}{\mathcal{F}}
\newcommand{\cI}{\mathcal{I}}
\newcommand{\cM}{\mathcal{M}}
\newcommand{\cS}{\mathcal{S}}
\newcommand{\cX}{\mathcal{X}}
\newcommand{\cY}{\mathcal{Y}}
\newcommand{\cZ}{\mathcal{Z}}
\newcommand{\N}{\mathbb{N}}
\newcommand{\Q}{\mathbb{Q}}
\newcommand{\R}{\mathbb{R}}
\newcommand{\Z}{\mathbb{Z}}
\newcommand{\range}[1]{\mathbb{N}_{< {#1}}}
\newcommand{\zeroton}[1]{\mathbb{N}_{\le {#1}}}
\newcommand\clamp[3]{\text{clamp} \bracket{ {#1}, {#2}, {#3} } }
\newcommand{\tvdistance}[2]{\mathds{TV} ( {#1} ,\, {#2} )}
\newcommand{\fdivergence}[2]{\mathds{D}_f \left( {#1} ,\, {#2} \right)}
\newcommand{\norm}[1]{\lVert {#1} \rVert}
\newcommand{\PAREN}[1]{{\left( {#1} \right)}}
\newcommand{\Biggparen}[1]{{\Bigg( {#1} \Bigg)}}
\newcommand{\Bigparen}[1]{{\Big( {#1} \Big)}}
\newcommand{\bigparen}[1]{{\big( {#1} \big)}}
\newcommand{\paren}[1]{{( {#1} )}}
\newcommand{\floor}[1]{{\lfloor {#1} \rfloor}}
\newcommand{\ceil}[1]{{\lceil {#1} \rceil}}
\newcommand{\bracket}[1]{{[ {#1} ]}}
\newcommand{\BRACKET}[1]{{\left[ {#1} \right]}}
\newcommand{\CARD}[1]{\left| {#1} \right|}
\newcommand{\card}[1]{\lvert {#1} \rvert}
\newcommand{\SET}[1]{\left\{ {#1} \right\}}
\newcommand{\set}[1]{\{ {#1} \}}
\newcommand{\eps}{\varepsilon}
\DeclareMathOperator*{\argmin}{arg\,min}
\NewDocumentCommand\p{ m g }{
  \ensuremath{
    \IfNoValueTF{#2}
    {\Pr [ #1 ]}
    {\Pr_{#1}[#2]}
  }
}
\RenewDocumentCommand\P{ m g }{
  \ensuremath{
    \IfNoValueTF{#2}
    {\Pr \left[#1\right]}
    {\Pr_{#1}\left[#2\right]}
  }
}
\DeclareMathOperator*{\Expectation}{\mathbb{E}}
\NewDocumentCommand\E{ m g }{
  \ensuremath{
    \IfNoValueTF{#2}
    {\Expectation \left[#1\right]}
    {\Expectation_{#1}\left[#2\right]}
  }
}
\DeclareMathOperator*{\Variance}{\mathbb{V}\mathrm{ar}}
\NewDocumentCommand\Var{ m g }{
  \ensuremath{
    \IfNoValueTF{#2}
    {\Variance \left[#1\right]}
    {\Variance_{#1}\left[#2\right]}
  }
}
\NewDocumentCommand\GammaDist{ m g }{
  \ensuremath{
    \IfNoValueTF{#2}
    {\operatorname{\mathcal{G}amma}\left( {#1} \right)}
    {\operatorname{\mathcal{G}amma}\left( {#1}, {#2} \right)}
  }
}
\NewDocumentCommand\LapNoise{ m g }{
  \ensuremath{
    \IfNoValueTF{#2}
    {\operatorname{\mathbb{L}ap}\left( {#1} \right)}
    {\operatorname{\mathbb{L}ap}\left( {#1}, {#2} \right)}
  }
}
\NewDocumentCommand\GumbelNoise{ m g }{
  \ensuremath{
    \IfNoValueTF{#2}
    {\operatorname{\mathbb{G}umbel}\left( {#1} \right)}
    {\operatorname{\mathbb{G}umbel}\left( {#1}, {#2} \right)}
  }
}
\newcommand{\DataSet}{\mathcal{X}}
\newcommand{\element}{i}
\newcommand{\user}{u}
\newcommand{\UserSet}{\mathcal{U}}
\newcommand{\UserData}[1]{x^{({#1})}}
\newcommand{\NeighborUserData}[1]{x'^{\,({#1})}}
\newcommand{\SortedUserData}[1]{y^{({#1})}}
\newcommand{\DataDomain}{\mathcal{D}}
\newcommand{\IntSet}[2]{[{#1}\,.\,.\,{#2}]}
\newcommand{\Binomial}[2]{\operatorname{\mathbb{B}in} ( {#1} , {#2} )}
\newcommand{\Bernoulli}[1]{\operatorname{\mathbb{B}ernoulli}\left( {#1} \right)}
\newcommand{\indicator}[1]{\mathds{1}_{\left[#1\right]}}
\newcommand{\SortCircuitCost}[2]{\text{CostSort}\paren{#1,#2}}
\newcommand{\MulCost}[1]{\text{CostMul}\paren{#1}}
\NewDocumentCommand\PurifiedApproxDiscreteLaplaceMechanism{ m m m g }{
  \ensuremath{
    \IfNoValueTF{#4}
    {
        \cM_{
            \operatorname{\mathbb{PA}px\mathbb{DL}ap} \paren{#1, #2, #3}
        }
    }
    {
        \cM_{
            \operatorname{\mathbb{PA}px\mathbb{DL}ap} \paren{#1, #2, #3}
        }
        \left( {#4} \right)
    }
  }
}
\newcommand{\DLapSampler}[2]{\cS_{\operatorname{DLap}} \paren{#1, #2}}
\NewDocumentCommand\algoPurify{ g }{
  \ensuremath{
    \IfNoValueTF{#1}
    {
        \cM_{\operatorname{purify}} 
    }
    {
        \cM_{\operatorname{purify}} \paren{#1}
    }
  }
}
\newcommand{\CoreSupportSet}[1]{\operatorname{C}_{#1}}
\newcommand{\LabelOracle}{\mathcal{O}_{\text{label}}}
\newcommand{\ProbOracle}{\mathcal{O}_{\text{bin}}}
\newcommand{\poly}[1]{\textsc{poly} ( {#1} ) }
\newcommand{\dlcenter}{q_{\text{center}}}
\newcommand{\BinaryExp}[2]{\mathbb{B}inary\left(#1, #2\right)}
\newcommand{\runningtime}[1]{T_{{#1}}}
\newcommand{\ourAlgoName}{\text{Private Tail Padding Histogram}\xspace}
\newcommand{\ourAlgo}{\mathcal{M}_{\textsc{\scriptsize Hist}}}
\NewDocumentCommand\algoBernoulli{g}{
  \ensuremath{
    \IfNoValueTF{#1}
    {{\mathcal{M}_{\mathbb{B}ern} } }
    {{\mathcal{M}_{\mathbb{B}ern} \left( {#1} \right)} }
  }
}
\NewDocumentCommand\algoGeo{g}{
  \ensuremath{
    \IfNoValueTF{#1}
    {{\mathcal{M}_{\mathbb{G}eo} } }
    {{\mathcal{M}_{\mathbb{G}eo} \left( {#1} \right)} }
  }
}
\NewDocumentCommand\algoDLap{g}{
  \ensuremath{
    \IfNoValueTF{#1}
    {{\mathcal{M}_{\mathbb{DL}ap} } }
    {{\mathcal{M}_{\mathbb{DL}ap} \left( {#1} \right)} }
  }
}
\NewDocumentCommand\algoBatchDLap{g}{
  \ensuremath{
    \IfNoValueTF{#1}
    {{\mathcal{M}_{\mathbb{BDL}ap} } }
    {{\mathcal{M}_{\mathbb{BDL}ap} \left( {#1} \right)} }
  }
}
\NewDocumentCommand\algoSBatchDLap{g}{
  \ensuremath{
    \IfNoValueTF{#1}
    {{\mathcal{M}_{\mathbb{S}\text{-}\mathbb{BDL}ap} } }
    {{\mathcal{M}_{\mathbb{S}\text{-}\mathbb{BDL}ap} \left( {#1} \right)} }
  }
}
\NewDocumentCommand\algoSBatchDLapPerturb{g}{
  \ensuremath{
    \IfNoValueTF{#1}
    {{\mathcal{M}_{\mathbb{S}\text{-}\mathbb{BDL}ap}.\textsc{\small Perturb} } }
    {{\mathcal{M}_{\mathbb{S}\text{-}\mathbb{BDL}ap}.\textsc{\small Perturb} \left( {#1} \right)} }
  }
}
\NewDocumentCommand\algoAlias{g}{
  \ensuremath{
    \IfNoValueTF{#1}
    {{\mathcal{M}_{\mathbb{A}lias} } }
    {{\mathcal{M}_{\mathbb{A}lias} \left( {#1} \right)} }
  }
}
\NewDocumentCommand\algoFiniteAlias{g}{
  \ensuremath{
    \IfNoValueTF{#1}
    {{\mathcal{M}_{\mathbb{FA}lias} } }
    {{\mathcal{M}_{\mathbb{FA}lias} \left( {#1} \right)} }
  }
}
\NewDocumentCommand\algoAliasInitialization{g}{
  \ensuremath{
    \IfNoValueTF{#1}
    {{\mathcal{M}_{\mathbb{A}lias}.\textsc{\small Initialization}}}
    {{\mathcal{M}_{\mathbb{A}lias}.\textsc{\small Initialization} \left( {#1} \right)}}
  }
}
\NewDocumentCommand\algoFiniteAliasInitialization{g}{
  \ensuremath{
    \IfNoValueTF{#1}
    {{\mathcal{M}_{\mathbb{FA}lias}.\textsc{\small Initialization}}}
    {{\mathcal{M}_{\mathbb{FA}lias}.\textsc{\small Initialization} \left( {#1} \right)}}
  }
}
\newcommand{\distSupport}{\mu}
\newcommand{\eventSelectIstart}{\mathcal{E}}
\newcommand{\eventSelectJstart}{\mathcal{F}}
\newcommand{\ConditionalSupportRatio}{\kappa}
\newcommand{\sample}{{\textsc{\small Sample}}}
\NewDocumentCommand\algoAliasSample{g}{
  \ensuremath{
    \IfNoValueTF{#1}
    {
        {\mathcal{M}_{\mathbb{A}lias}.\textsc{\small Sample}}
    }
    {
        {\mathcal{M}_{\mathbb{A}lias}.\textsc{\small Sample} \left( {#1} \right)}
    }
  }
}
\DeclareMathOperator\supp{\textsc{supp}}
\NewDocumentCommand\hist{ g }{
  \ensuremath{
    \IfNoValueTF{#1}
    {\vec{h}}
    {\vec{h}\BRACKET{#1}}
  }
}
\NewDocumentCommand\noisyhist{ g }{
  \ensuremath{
    \IfNoValueTF{#1}
    {\Tilde{h}}
    {\Tilde{h}\BRACKET{#1}}
  }
}
\NewDocumentCommand\intermediateHist{ g }{
  \ensuremath{
    \IfNoValueTF{#1}
    {\hat{h}}
    {\hat{h}\BRACKET{#1}}
  }
}
\NewDocumentCommand\PoissonNoise{ m g }{
  \ensuremath{
    \IfNoValueTF{#2}
    {\operatorname{\mathbb{P}oisson}\left( {#1} \right)}
    {\operatorname{\mathbb{P}oisson}\left( {#1}, {#2} \right)}
  }
}
\NewDocumentCommand\DiscreteLapNoise{ g g }{
  \ensuremath{
    \IfNoValueTF{#2}
    {
        \IfNoValueTF{#1}
        {\operatorname{\mathbb{DL}ap}}
        {\operatorname{\mathbb{DL}ap}\left( {#1} \right)}
    }
    {\operatorname{\mathbb{DL}ap}\left( {#1}, {#2} \right)}
  }
}
\NewDocumentCommand\TDiscreteLapNoise{ m g }{
  \ensuremath{
    \IfNoValueTF{#2}
    {\operatorname{\mathbb{TDL}ap}\left( {#1} \right)}
    {\operatorname{\mathbb{TDL}ap}\left( {#1}, {#2} \right)}
  }
}
\NewDocumentCommand\ExpNoise{ m g }{
  \ensuremath{
    \IfNoValueTF{#2}
    {\operatorname{\mathbb{E}xp}\left( {#1} \right)}
    {\operatorname{\mathbb{E}xp}\left( {#1}, {#2} \right)}
  }
}
\NewDocumentCommand\GeometricNoise{ m g }{
  \ensuremath{
    \IfNoValueTF{#2}
    {\operatorname{\mathbb{G}eo}\left( {#1} \right)}
    {\operatorname{\mathbb{G}eo}\left( {#1}, {#2} \right)}
  }
}
\newcommand{\ApproxDiscreteLaplaceNoise}[4]{{\operatorname{\mathbb{A}px\mathbb{DL}ap}}_{#2, #3, #4}\left( {#1} \right)}
\NewDocumentCommand\UniformNoise{ m g }{
  \ensuremath{
    \IfNoValueTF{#2}
    {\operatorname{\mathbb{U}niform}\left( {#1} \right)}
    {\operatorname{\mathbb{U}niform}\left( {#1}, {#2} \right)}
  }
}
\definecolor{LightBlue}{rgb}{0.31,0.52,0.78}
\newif\ifcomment
\definecolor{DarkGreen}{rgb}{0.1,0.5,0.1}
\newcommand{\hao}[1]{\textcolor{blue}{[HAO: #1]}}
\newcommand{\hao}[1]{%
  \@bsphack
  \@esphack
}
\begin{document}

\title{Optimal Pure Differentially Private Sparse Histograms in Deterministic Linear Time}

\author{Florian Kerschbaum }
\email{florian.kerschbaum@uwaterloo.ca}
\affiliation{%
  \institution{Cheriton School of Computer Science, University of Waterloo}
  \country{Canada}
}

\author{Steven Lee}
\email{hj44lee@uwaterloo.ca}
\affiliation{%
  \institution{Cheriton School of Computer Science, University of Waterloo}
  \country{Canada}}

\author{Hao Wu}
\email{hao.wu1@uwaterloo.ca}
\affiliation{%
  \institution{Cheriton School of Computer Science, University of Waterloo}
  \country{Canada}
}

\begin{abstract}
    We present an algorithm that releases a pure differentially private (under the replacement neighboring relation) sparse histogram for $n$ participants over a domain of size $d \gg n$. 
    Our method achieves the optimal $\ell_\infty$-estimation error and runs in strictly $O(n)$ time in the Word-RAM model, improving upon the previous best deterministic-time bound of $\Tilde{O}(n^2)$ and resolving the open problem of breaking this quadratic barrier~(\citeauthor{BalcerV19}~\citeyear{BalcerV19}).  
    Moreover, the algorithm admits an efficient circuit implementation, enabling the first near-linear communication and computation cost pure DP histogram MPC protocol with optimal $\ell_\infty$-estimation error.  
    Central to our algorithm is a novel \emph{private item blanket} technique with target-length padding, 
    which hides differences in the supports of neighboring histograms while remaining efficiently implementable.
\end{abstract}

\begin{CCSXML}
<ccs2012>
   <concept>
       <concept_id>10002978.10002991.10002995</concept_id>
       <concept_desc>Security and privacy~Privacy-preserving protocols</concept_desc>
       <concept_significance>500</concept_significance>
       </concept>
   <concept>
       <concept_id>10002978.10003018.10003019</concept_id>
       <concept_desc>Security and privacy~Data anonymization and sanitization</concept_desc>
       <concept_significance>500</concept_significance>
       </concept>
   <concept>
       <concept_id>10003752.10003809</concept_id>
       <concept_desc>Theory of computation~Design and analysis of algorithms</concept_desc>
       <concept_significance>500</concept_significance>
       </concept>
 </ccs2012>
\end{CCSXML}

\ccsdesc[500]{Security and privacy~Privacy-preserving protocols}
\ccsdesc[500]{Security and privacy~Data anonymization and sanitization}
\ccsdesc[500]{Theory of computation~Design and analysis of algorithms}

\keywords{Differential Privacy, Sparse Histogram}

\maketitle

\section{Introduction}
\label{sec:introduction}

Differential privacy is a rigorous mathematical framework for protecting individual data in statistical analyses. 
It ensures that algorithms produce similar output distributions on neighboring datasets---those differing in a single individual's data---making it difficult to infer whether any particular individual is included in the input. 
This strong privacy guarantee has made differential privacy widely adopted in both theory and practice.

We focus on one of the most fundamental tasks in differential privacy: histogram publishing.
In this setting, there are $n$ participants, each holding an element drawn from the domain $[d] = \IntSet{1}{d}$. 
The histogram, $\hist \in \IntSet{0}{n}^d$, is a vector where $\hist[i]$ equals the number of times element $i$ appears among the participants. 
The goal is to publish a differentially private version $\noisyhist$ that well approximates $\hist$ while protecting individual participant's data.

One of the earliest and simplest solutions for privatizing a histogram $\hist$ is the Laplace mechanism \citep{DworkMNS06}, which adds independent, continuous Laplace noise to each entry of $\hist$.
This approach achieves various asymptotically optimal error guarantees \citep{HardtT10, BeimelBKN14}.
However, implementing the mechanism presents subtle challenges, as it assumes operations on real numbers and requires operating on all entries of $\hist$. 
The former can introduce potential privacy vulnerabilities, while the latter may incur prohibitive computational costs, which we elaborate on below.

\paragraph{Floating-Point Attack}
Sampling Laplace noise requires representing real numbers on discrete machines, typically via double-precision floating-point arithmetic. 
As early as \citeyear{Mironov12}, \citet{Mironov12} observed that due to finite precision and rounding artifacts, certain floating-point values cannot be generated.
These artifacts can be exploited to distinguish exactly between neighboring inputs based on the outputs of the Laplace mechanism, thereby completely compromising its privacy guarantees.
A fundamental mitigation is to avoid real-number arithmetic altogether by using discrete analogues of Laplace noise---such as the discrete Laplace distribution \citep{GhoshRS09}.

\paragraph{Timing Attack}
However, avoiding floating-point vulnerabilities introduces a new threat: the timing attack.
Since discrete Laplace noise has unbounded support, its sampler can require unbounded memory and has running time only bounded in expectation \citep{Canonne0S20}. 
\citet{JinMRO22} showed a positive correlation between the sampled value and the algorithm’s running time, allowing adversaries to infer the output from timing information---again violating privacy.
This motivates the design of \emph{time-oblivious} algorithms, whose running time leaks no---or only a limited amount of---information about the input.  
For instance, one possible approach is to enforce strict upper bounds on running time (ideally polynomial in the bit length of the input), allowing all executions to be padded to run for the same duration.

\paragraph{Exponential-Size Domains}

Adding noise to all entries of $\hist$ results in a running time proportional to $d$, which can be exponential in the input size. 
For example, consider the task of identifying popular URLs of up to 20 characters in length,\footnote{Valid URL characters include digits ($0$--$9$), letters (A--Z, a--z), and a few special characters ("-", ".", "\_", "\texttildelow").} as discussed in \citet{FPE16}. 
This setting corresponds to a domain size of at least $d \ge 10^{36}$.
To avoid adding noise to all entries, a natural idea is to publish a sparse histogram $\noisyhist$ to approximate the original histogram $\hist$, since the latter is also sparse: it has at most $n$ nonzero entries. 
Some early works along this line, either satisfy only approximate differential privacy \citep{KorolovaKMN09, BunNS19}---in contrast to the pure differential privacy guaranteed by the Laplace mechanism \footnote{The distinction between pure and approximate differential privacy will be formally explained in \cref{sec:problem-description}}---or require sampling from complicated binomial distributions \citep{CormodePST12}, for which it remains open whether one can sample efficiently while avoiding timing attacks \citep{BalcerV19}.

\begin{quote}
   \vspace{1mm}
   \textit{\underline{Research Problem}: Design an efficient algorithm for publishing a pure differentially private histogram that avoids both floating-point and timing attacks.}
\end{quote}

\paragraph{The Quadratic Time Barrier.}
\citet{BalcerV19} initiated a systematic study of differentially private sparse histograms and proposed a sparsified variant of the Laplace mechanism suited for Word-RAM model. 
They designed a deterministic-time sampler for a relaxed discrete Laplace distribution supported on a finite domain, as well as an algorithm that reports only the top-$n$ elements with the highest noisy counts, without explicitly adding noise to all entries in $\hist$.
Their algorithm achieves asymptotically optimal error among sparse histograms. 
However, the noise sampler has a deterministic per-sample cost of $\tilde{O} \PAREN{1 / \eps}$, and identifying the top-$n$ elements requires $\tilde{O}(n^2)$ (deterministic) time. 
The paper concludes with the following open question:

\begin{quote}
    \vspace{1mm}
    \textit{\underline{Research Question}: 
        Can one design deterministic and linear-time pure differentially private histogram algorithms in the Word-RAM model while preserving privacy and accuracy guarantees?
    }
\end{quote}

\subsection{Our Contributions}
Our contributions are twofold. 
First, we provide a positive answer to the open question posed by \citet{BalcerV19}. 
Second, as a byproduct, we develop a general framework of deterministic-time approximate sampler for arbitrary discrete distributions, and instantiate it specifically for generating discrete Laplace noise.
We discuss both results separately.

\subsubsection*{\bf Histogram Construction.}
Consistent with \citep{BalcerV19}, our construction builds on Word-RAM model, and the privacy guarantee is respect to \emph{replacement neighboring relation}, both formally defined in \cref{sec:problem-description}.

\begin{table}[t!]
    \centering
    \begin{tabular}{lcccc}
    \specialrule{0.9pt}{0pt}{0pt} %
    Algorithm & $\ell_\infty$-Error & Model & Type & Running Time \\
    \specialrule{0.8pt}{0pt}{0pt} %
    Laplace Mechanism~\citep{Dwork06} & $\Theta \paren{1/\eps \cdot \ln d}$ & Real-RAM & Deterministic & $O(d)$ \\
    Geometric Mechanism~\citep{GhoshRS09} & $\Theta \paren{1/\eps \cdot \ln d}$ & Word-RAM & Randomized & Expected $O(d)$ \\
    Filter Mechanism~\citep{CormodePST12, DBLP:journals/corr/QiuY25}  & $\Theta \paren{1/\eps \cdot \ln d}$ & Real-RAM & Randomized & Expected $O(n)$ \\
    PureSparseHistogram~\citep{BalcerV19} & $\Theta \paren{1/\eps \cdot \ln d}$ & Word-RAM & Deterministic & $\tilde{O}(n^2)$ \\
    {\color{LightBlue!60!blue} Theorem~\ref{thm:private-sparse-histogram}} & {\color{LightBlue!60!blue} $\Theta \paren{1/\eps \cdot \ln d}$} & {\color{LightBlue!60!blue} Word-RAM} & {\color{LightBlue!60!blue} Deterministic} & {\color{LightBlue!60!blue} $O(n)$} \\
    \specialrule{0.9pt}{0pt}{0pt} %
    \end{tabular}
    \vspace{1mm}
    \caption{Comparison of sparse histogram algorithms: error, model, type, and running time. 
    The Real-RAM model assumes unit-cost arithmetic on infinite-precision floating-point numbers, while the Word-RAM model assumes unit-cost arithmetic on standard word-sized values. 
    }
    \vspace{-6mm}
    \label{tab:sparse-histograms}
\end{table}

\begin{theorem}[Private Sparse Histogram, Informal version of \cref{thm:private-sparse-histogram-formal}]
    \label{thm:private-sparse-histogram}
    Let $n, d \in \N_+$ and $\eps \in \Q_+$, with representations that fit in a constant number of machine words.
    Given a dataset $\DataSet$ of $n$ user-contributed elements from domain $[d]$, 
    there exists an $\eps$-differentially private algorithm that runs deterministically in $O(n)$ time
    and outputs a histogram $\noisyhist \in \zeroton{n}^d$ approximating the original histogram $\hist$ of $\DataSet$, 
    such that $\norm{\noisyhist}_0 \in O(n)$ and 
    $\E{\norm{\hist - \noisyhist}_\infty} \in O\!\left(\frac{1}{\eps} \cdot \ln d\right)$.
\end{theorem}

The expected error bound above is asymptotically optimal, matching the lower bounds established in \citep{HardtT10, BeimelBKN14, BalcerV19}.
Table~\ref{tab:sparse-histograms} compares our algorithm wiht the previous ones.
Beyond its applications to private data mining and machine learning, our algorithm also helps \emph{close a longstanding utility gap between the central and distributed differential privacy models for histograms.}

\subsubsection*{\bf \small From Central to Distributed DP: Matching Utility.}
Our algorithm operates in the \emph{central} DP model, where a trusted server collects participants' data to release a private histogram.
A commonly studied alternative is the \emph{distributed} DP (DDP) model, where the server is untrusted.
The simplest DDP setting is the \emph{local} model, where each participant perturbs their own data before reporting it to the server.
Unfortunately, this added privacy comes at a steep cost: local protocols incur substantial noise, and the best known frequency-estimation protocol achieves an expected $\ell_\infty$ error of $\Theta(\tfrac{1}{\eps} \cdot \sqrt{n \ln d})$~\citep{BS15, BNS19, DBLP:conf/aistats/WuW22}, creating a tight $\Omega(\sqrt{n})$ utility gap compared to the central model.

A significant body of work has attempted to bridge this gap by enabling participants (and, in some settings, multiple servers) to jointly and securely compute a function such that no party learns anything beyond their own input and the final DP output.
Initial efforts focused on the \emph{shuffle model}~\citep{BittauEMMRLRKTS17,BalleBGN19,CheuSUZZ19,GhaziG0PV21}, where the joint computation is restricted to randomly permuting locally perturbed messages.
Despite these advances, a gap remains: even the best near–linear-time shuffle protocols~\citep{GhaziG0PV21} still incur a polynomial (in $\log d$) degradation in utility compared to the optimal bounds achieved in the central model.

Recent works have further bridged this gap by exploiting the full potential of the DDP model, 
supporting richer classes of functionalities beyond shuffling via secure multiparty computation (MPC), 
as demonstrated by systems from Mozilla (Prio)~\cite{Corrigan-GibbsB17}, Google~\cite{GG0M0S22}, and Microsoft~\cite{AndersonCDLW24}. 
This line of work culminated in a recent $(\epsilon, \delta)$-DP sparse histogram algorithm~\citep{GG0M0S22} that achieves optimal $\ell_\infty$ error, on par with the best central-model guarantees.
\emph{However, a Pure $\eps$-DP counterpart with optimal error remains elusive.}
The primary obstacle has been the absence of efficient, time-oblivious Pure DP algorithms, even within the central model. 
Such algorithms are necessary in the MPC setting because data-dependent execution time or branching patterns can introduce side channels that compromise the privacy of the secret participant inputs.
Our algorithm overcomes this barrier: it admits an efficient circuit implementation, enabling secure execution in the MPC setting using standard techniques~\citep{BGW88,GMW87}.

\begin{theorem}[Private Sparse Histogram Circuit, Informal version of \cref{thm:purified-approximate-discrete-laplace-sampler-formal}]
    Let $n, d \in \N_+$ and $\eps \in \Q_+$, with representations that fit in a constant number of machine words.
    There exists a circuit with $\tilde{O}(n \omega)$ gates (where $\omega$ is the machine word size) implementing an $\eps$-DP mechanism that takes 
    a dataset $\DataSet$ of $n$ user-contributed elements from domain $[d]$ as input, and outputs a histogram 
    $\noisyhist \in \zeroton{n}^d$ approximating the original histogram $\hist$ of $\DataSet$, 
    such that $\norm{\noisyhist}_0 \in O(n)$ and 
    $\E{\norm{\hist - \noisyhist}_\infty} \in O \paren{ \frac{1}{\eps} \cdot \ln d }$.
\end{theorem}

The number of gates scales with $n \omega$ rather than $n$ alone because, in the circuit model, merely reading $n$ input elements from $[d]$ (with each element described using $\log d \in O(\omega)$ bits) requires $O(n \omega)$ gates; hence, $n \omega$ is considered linear in the input size. 
Additionally, simulating Random Access Memory (RAM) operations of our algorithm in the circuit model introduces a logarithmic overhead in the gate complexity.

\subsubsection*{\bf Noisy Generation.}
We develop a general framework for constructing deterministic-time approximate samplers for arbitrary discrete distributions. 
This framework is of independent interest and applies broadly to differentially private noise-generation tasks.
In addition, it strengthens the result of \citet*{DovDNT23} by achieving comparable running time and accuracy guarantees while using asymptotically less space.

\begin{table}[t!]
    \centering
    \begin{tabular}{lccc}
    \specialrule{0.9pt}{0pt}{0pt}
    Sampler & Space Usage (words) & Running Time / Sample & TV Distance \\
    \specialrule{0.8pt}{0pt}{0pt}
    \addlinespace[0.5mm] %
    
    \citet{DovDNT23} &
    $O\!\left( 
        \frac{1}{\omega} \cdot
        \card{\CoreSupportSet{\delta/2}} \cdot \frac{1}{\delta} \cdot
        \ln \card{\CoreSupportSet{\delta/2}} 
    \right)$ &
    $\frac{1}{\omega} \cdot \paren{ \log \card{\CoreSupportSet{\delta/2}} 
       + \log \frac{1}{\delta}} + O(1) $ &
    $\le \delta$ \\

    \addlinespace[1mm] %
    
    {\color{LightBlue!60!blue} Theorem~\ref{thm:time-oblivious-distribution-sampler-informal}} &
    \color{LightBlue!60!blue} $O\!\left( 
        \frac{1}{\omega} \cdot
        \card{\CoreSupportSet{\delta/2}} \cdot
        \bigl( \ln \frac{1}{\delta} 
               + \ln \card{\CoreSupportSet{\delta/2}} \bigr)
    \right)$ &
    \color{LightBlue!60!blue} $\frac{1}{\omega} \cdot \paren{ \log \card{\CoreSupportSet{\delta/2}} 
       + \log \frac{1}{\delta}} + O(1) $ &
    \color{LightBlue!60!blue} $\le \delta$ \\

    \addlinespace[0.5mm] %
    
    \specialrule{0.9pt}{0pt}{0pt}
    \end{tabular}
    \caption{
        Comparison of deterministic approximate samplers for a discrete distribution $\mu$.
        Here, $\CoreSupportSet{\delta/2}$ denotes a smallest subset satisfying 
        $\mu(\CoreSupportSet{\delta/2}) \ge 1 - \delta/2$, and $\omega$ is the word size.
        Our sampler improves space complexity by reducing its dependence on $\delta$
        from $1/\delta$ to $\ln(1/\delta)$.
    }    
    \vspace{-6mm}
    \label{tab:noise-sampler-comparison}
\end{table}

\begin{theorem}[Deterministic Approximate Sampler, Informal~\cref{thm:time-oblivious-distribution-sampler}]
\label{thm:time-oblivious-distribution-sampler-informal}
Let $\mu$ be a distribution over a discrete domain, let $\delta \in (0,1)$, and let 
$\CoreSupportSet{\delta/2}$ denote a smallest subset satisfying 
$\mu(\CoreSupportSet{\delta/2}) \ge 1 - \delta/2$.
Then there exists a deterministic-time sampler for $\mu$ with the following guarantees:
\begin{itemize}[label=$\triangleright$, leftmargin=0.7cm]
    \item \textbf{Memory:} 
    $O\!\left( 
        \frac{1}{\omega} \,
        \card{\CoreSupportSet{\delta/2}} \,
        \bigl( \ln \tfrac{1}{\delta} + \ln \card{\CoreSupportSet{\delta/2}} \bigr)
    \right)$,
    with $\omega$ the Word-RAM word size.
    \item \textbf{Sampling time per draw:} 
    $\frac{1}{\omega} \, \bigparen{ \log \card{\CoreSupportSet{\delta/2}} 
    + \log \tfrac{1}{\delta} } + O(1) $.
    \item \textbf{Accuracy:} The resulting distribution has total variation distance at most $\delta$ from $\mu$.
\end{itemize}
\end{theorem}

\noindent
\textit{Comparison to Prior Work.} 
For reference, \citet*{DovDNT23} (Claim~2.22) achieve the same guarantees for sampling time and accuracy (Items~2 and~3 above).  
However, their sampler requires space 
\(
O\!\left( 
    \omega^{-1} \card{\CoreSupportSet{\delta/2}} \,\delta^{-1} 
    \ln \card{\CoreSupportSet{\delta/2}} 
\right),
\)
which is exponentially worse in its dependence on $\delta$ compared to our construction.  
Table~\ref{tab:noise-sampler-comparison} summarizes the comparison.

We now instantiate our general framework to generate a variant of discrete Laplace noise tailored for perturbing elements in the bounded range $\IntSet{0}{n}$. 
This instantiation incorporates two key optimizations: 
(1) we adopt a technique from \citet{BalcerV19} (see \cref{sec:preliminaries}) to mix the sampled noise with a uniform distribution over $\IntSet{0}{n}$, which ensures that the truncated output preserves pure differential privacy; 
(2) we exploit the memoryless property of the discrete Laplace distribution to further reduce the space required for sampling.  
The formal guarantees of this sampler are presented below and serve as a crucial building block for our linear-time private histogram protocol.

\begin{theorem}[Relaxed Discrete Laplace, Informal version of \cref{thm:purified-approximate-discrete-laplace-sampler}]
\label{thm:purified-approximate-discrete-laplace-sampler-informal}
    Let $n \in \N_+$, $\eps \in \Q_+$ and $\gamma \in \Q_+ \cap (0, 1)$, with representations that fit in a constant number of machine words.
    There exists a randomized algorithm $\PurifiedApproxDiscreteLaplaceMechanism{n}{\eps}{\gamma} : \zeroton{n} \rightarrow \zeroton{n}$ with initialization time $\tilde{O}(\frac{1}{\eps})$, memory usage 
    $O\bigl( \frac{1}{\eps} + \ln \frac{1}{\gamma} + \ln n \bigr)$,
    deterministic per-query running time $O(1)$ after initialization.
    It satisfies the following privacy and utility guarantees:
    \begin{itemize}[label=$\triangleright$, leftmargin=0.7cm]
        \item For each $t \in [n]$, 
        $\P{ \PurifiedApproxDiscreteLaplaceMechanism{n}{\eps}{\gamma}{t - 1} = i} \big/ \P{ \PurifiedApproxDiscreteLaplaceMechanism{n}{\eps}{\gamma}{t} = i} \in [e^{-\eps}, e^{\eps}]$.
        \item For each $t \in \zeroton{n}$, $\beta > 2 \cdot \gamma$, 
        $\P{ \card{\PurifiedApproxDiscreteLaplaceMechanism{n}{\eps}{\gamma}{t} - t} \ge \alpha} \le \beta$
        for
        $\alpha \in O \paren{ \frac{1}{\eps} \cdot \ln \frac{1}{\beta} }$.
    \end{itemize}
\end{theorem}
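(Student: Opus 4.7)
The plan is to construct $\PurifiedApproxDiscreteLaplaceMechanism{n}{\eps}{\gamma}$ as a two-stage sampler. First I would build a bounded-support variant of discrete Laplace noise that already satisfies approximate differential privacy at low cost; then I would wrap it with a \emph{purification} transform that mixes in a ``blanket'' distribution over $\zeroton{n}$, upgrading the guarantee to pure $\eps$-DP while paying only a small utility tax calibrated by $\gamma$. The decomposition matches the macros introduced earlier (the approximate base mechanism plus $\algoPurify$) and explains the three additive terms in the memory bound: $1/\eps$ for the Laplace core, $\ln(1/\gamma)$ for the tail truncation radius, and $\ln n$ for word-sized addresses in $\zeroton{n}$.

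The inner sampler would, on input $t$, draw from a distribution whose shape follows discrete Laplace on a ``core'' window of width $O\bigl(\tfrac{1}{\eps}\ln\tfrac{1}{\gamma}\bigr)$ around $t$, intersected with $\zeroton{n}$. Because the core is finite, precomputing an alias table during initialization costs $\tilde O(1/\eps)$ time and yields deterministic $O(1)$ per-query sampling. This base sampler alone is only approximately DP: for neighboring centers $t-1,t$ the two supports differ by a boundary point, and the missing tail has mass at most $\gamma$ by calibration of the core radius. The purification step then draws a Bernoulli of bias roughly $\gamma$; on heads it outputs the inner sample, on tails it outputs a point sampled from a fixed blanket distribution over $\zeroton{n}$ chosen so that every point receives at least a $\Omega(\gamma/n)$ probability floor regardless of $t$.

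The pure-DP argument is the crux. I would compare the purified laws on inputs $t-1$ and $t$ pointwise: inside the common core the ratio is $e^{\pm\eps}$ directly from the discrete Laplace geometric decay; at an ``orphan'' point lying in one core but not the other, the blanket's $\Omega(\gamma/n)$ floor on the denominator dominates the missing Laplace contribution on the numerator, and both sides are bounded below by the blanket; outside both cores, only the blanket contributes and the ratio is exactly $1$. The hardest part will be choosing the core radius, the Bernoulli bias, and the blanket shape simultaneously so that the resulting ratio is bounded by $e^{\eps}$ at \emph{every} point of $\zeroton{n}$, not merely on average. I expect this reduces to a clean bookkeeping inequality equating the slack $(1-e^{-\eps})$ on one side with the blanket's floor on the other.

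Once purification is verified, the tail bound follows by a union bound over the two branches: the blanket branch contributes at most $\gamma$ to $\Pr[|M(t)-t|\ge \alpha]$, while on the inner branch the discrete Laplace tail gives $e^{-\Omega(\eps\alpha)}$. Requiring $\beta>2\gamma$ leaves at least $\beta/2$ budget for the Laplace branch, yielding $\alpha\in O(\tfrac{1}{\eps}\ln\tfrac{1}{\beta})$. Finally, I would verify that every quantity manipulated during initialization (core endpoints, alias probabilities, Bernoulli threshold, blanket weights) is a rational with numerator and denominator of $O(1/\eps+\ln(1/\gamma)+\ln n)$ bits, so all arithmetic fits in a constant number of machine words and the alias lookup remains deterministic $O(1)$, completing both the word-RAM cost and the per-query running time promised in the statement.
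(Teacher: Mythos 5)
Your high-level decomposition — an approximately-DP bounded-support sampler followed by a purification mix with a blanket distribution over $\zeroton{n}$ — is exactly the paper's architecture (\cref{proposition:purification} instantiated with a time-oblivious approximate discrete-Laplace sampler). However, there are two genuine gaps in your proposal that prevent it from achieving the stated guarantees.

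\textbf{The core radius is too small.} You correctly identify that the blanket contributes a probability floor of order $\gamma/(n+1)$ to every output, but you then fix the truncation radius at $O\bigl(\frac{1}{\eps}\ln\frac{1}{\gamma}\bigr)$, making the omitted discrete-Laplace mass of order $\gamma$. At an ``orphan'' point, the numerator in your pointwise ratio is therefore roughly $e^{-\eps R}\approx \gamma$ while the blanket floor in the denominator is only $\gamma/(n+1)$; the ratio blows up by a factor of $\Theta(n)$, not $e^{\eps}$. The correct calibration, which is what the paper uses when it sets $\delta = \frac{e^\eps-1}{e^\eps+1}\cdot\frac{\gamma}{1-\gamma}\cdot\frac{1}{1+n}$ in the proof of \cref{thm:purified-approximate-discrete-laplace-sampler}, forces the per-point error and hence the truncation radius to scale as $\Theta\bigl(\frac{1}{\eps}\ln\frac{n}{\gamma\eps}\bigr)$. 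Your utility argument implicitly needed only $\alpha\in O\bigl(\frac{1}{\eps}\ln\frac{1}{\beta}\bigr)$, which is fine; but the privacy argument requires the tighter calibration against $n$, and your radius cannot support it.

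\textbf{The memory bound does not follow from a single alias table.} Once the radius is corrected, a single alias table over the truncated window has $\Theta\bigl(\frac{1}{\eps}\ln\frac{n}{\gamma}\bigr)$ entries. That is a \emph{multiplicative} dependence, strictly larger than the claimed $O\bigl(\frac{1}{\eps}+\ln\frac{1}{\gamma}+\ln n\bigr)$. Your framing of the memory as the sum of ``$1/\eps$ for the Laplace core'' and ``$\ln(1/\gamma)$ for the tail radius'' is not justified: the number of alias entries is the product, not the sum, of those two scales. The paper achieves the additive bound through an extra idea you do not mention: the geometric decomposition (\cref{lemma:decomposition-of-geometric-distribution}). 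It writes the geometric part of the noise as $r\cdot X + Y$ with $r = 2^{\ceil{\log_2(1/\eps)}}$, where $X\sim\GeometricNoise{e^{-r\eps}}$ has an effective support of size $O\bigl(\ln\frac{n}{\gamma}\bigr)$ (since $r\eps\ge 1$) and $Y\sim\GeometricNoise{e^{-\eps}}{\range{r}}$ has support of size $r=O(1/\eps)$. Two small alias tables over these two geometric factors give the additive bound. Without this decomposition your construction does not achieve the stated space.

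A smaller issue: your final paragraph claims every precomputed quantity is a rational whose numerator and denominator have $O(1/\eps+\ln(1/\gamma)+\ln n)$ bits and that this ``fits in a constant number of machine words.'' That inference is wrong — a $\Theta(1/\eps)$-bit quantity does not fit in $O(\omega)$ bits unless $1/\eps=O(\omega)$, which is not assumed. The paper instead stores each alias-table probability to only $\ell=O\bigl(\ln\frac{n}{\gamma\eps}\bigr)$ bits of precision (see \cref{thm:time-oblivious-distribution-sampler}), which is $O(\omega)$ under the stated assumptions; the number of table entries, not the bit-length per entry, carries the $1/\eps$ cost.
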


\subsubsection*{\bf Technical Overview.}  
We provide a high-level overview of our private sparse histogram algorithm and noise sampler. 
More detailed overviews appear in the respective sections where each algorithm is formally presented and analyzed.

\subsubsection*{Private Histogram.}
The algorithm has a remarkably simple structure with non-trivial analysis.  
It first adds a relaxed variant of discrete Laplace noise to the non-zero elements of $\hist$, and selects those whose noisy counts exceed a specified threshold.  
For a neighboring histogram $\hist'$ with different support, this step might lead to a different set of selected elements.

To hide this difference, the algorithm pads the selected set with uniformly random elements (without replacement) until its size reaches $n$.  
These random elements serve as a \emph{privacy blanket}, and are drawn from all elements not selected in the first step---including both zero-count and possibly some non-zero elements.   
This sidesteps costly procedures such as the binomial-sampling step of \citet{CormodePST12}, and the identification of top-$n$ noisy counts among zero elements in $\hist$, as used by \citet{BalcerV19}.
Finally, for all selected elements, fresh noises are regenerated before releasing the output.

As we detail in \cref{sec:dp-sparse-histogram}, the utility guarantee follows directly from the properties of the relaxed discrete Laplace noise, and the running time is inherited from our sampler's efficiency (Theorem~\ref{thm:purified-approximate-discrete-laplace-sampler-informal}).  
The privacy analysis, however, is the most delicate part.  
It requires proving that the distribution over the set of selected elements remains stable between neighboring histograms, which involves a careful case-by-case analysis.

\subsubsection*{Noise Sampler.}
Our general framework for approximately sampling from arbitrary discrete distributions builds directly on the classic Alias method~\citep{Walker77}, applied to carefully chosen truncated supports and truncated probability representations to achieve a prescribed total variation distance. 
Despite its simplicity, this yields a deterministic-time sampler and improves upon the recent time-oblivious sampling scheme of~\citet*{DovDNT23}, making it worthwhile to document in this context.

Next, we instantiate the framework for the discrete Laplace distribution. 
Specifically, we reduce sampling from the discrete Laplace to sampling geometric noise, exploiting the fact that the discrete Laplace is almost a two-sided geometric distribution. 
We further leverage the property that geometric noise has identical conditional distributions over intervals of equal length, allowing decomposition into two smaller geometric components. 
Finally, following a technique by \citet{BalcerV19} (see \cref{sec:preliminaries}), we mix the sampled noise with a uniform distribution over $\IntSet{0}{n}$ to ensure that the truncated output preserves pure differential privacy.

\subsubsection*{\bf Organization.}
The remainder of the paper is organized as follows.
\Cref{sec:problem-description} formally introduces the problem.
\Cref{sec:preliminaries} reviews the relevant probability distributions and a key building block used in our algorithms.
Due to space constraints, we present only our private sparse histogram algorithm in \cref{sec:dp-sparse-histogram} and defer the description of our noise-sampling procedures to \cref{sec:private-noise-samplers}.
This reordering does not affect the exposition, as \cref{sec:dp-sparse-histogram} is self-contained and can be read independently of the sampler implementation details.
The circuit realization of our sparse histogram algorithm appears in \cref{subsec:mpc-circuit-sparse-hist}.
Finally, \cref{sec:related work} provides a comprehensive overview of related work.

\section{Problem Description}
\label{sec:problem-description}

\subsubsection*{Notation.}
Let $\N$ denote the set of natural numbers, $\Z$ the set of integers, $\Q$ the set of rational numbers, and $\R$ the set of real numbers.  
We use $\N_+$, $\Q_+$, and $\R_+$ to denote the sets of positive integers, positive rationals, and positive reals, respectively.
For any $n \in \N$, we write $\range{n}$ for the set $\set{0, 1, \dots, n - 1 }$, $\zeroton{n}$ for the set $\set{0, 1, \dots, n}$, and $[n]$ for the set $\set{1, \ldots, n}$.

\subsubsection*{Computational Model.}
We design our algorithms in the standard Word-RAM model, where each machine word consists of $\omega$ bits. 
In this model, basic operations on $\omega$-bit words—such as arithmetic, comparisons, bitwise manipulations, and memory access (read/write)—are assumed to take constant time.
We also assume that sampling a uniformly random integer from $\range{2^\omega}$ takes constant time.
Rational inputs are represented as pairs of integers.
We measure running time by counting the number of word-level operations, so the possible values of running time lie in $\N$.

\subsection{Differentially Private Histogram}

Let $\DataDomain \doteq \set{1, \ldots, d}$ be a set of $d$ elements and $\UserSet \doteq \set{1, \ldots, n}$ be a set of $n$ participants.  
Each participant $\user \in \UserSet$ holds an element $\UserData{\user} \in \DataDomain$.  
The dataset $\DataSet \doteq \set{\UserData{1}, \ldots, \UserData{n}}$ consists of all participant elements.  
The \emph{frequency} of an element $\element \in \DataDomain$, denoted by  
$
    \hist{\element} \doteq \set{
        \user \in \UserSet : \UserData{\user} = \element
    },
$
is defined as the number of participants holding the element $\element$.  
The \emph{histogram}, denoted by $\hist \doteq \paren{\hist{1}, \ldots, \hist{d}} \in \zeroton{n}^d$, is the frequency vector.  

The objective is to release a histogram $\noisyhist \in \N^d$ that accurately approximates $\hist$, while preserving the privacy of each individual participant's data, as formally defined below.

\subsubsection*{\bf Privacy Guarantee}
We require that the output $\noisyhist$ reveal only a limited amount of information about the input.
The privacy of the output is formally captured by the notion of differential privacy.

\subsubsection*{Neighboring Datasets.}
Consistent with \citet{BalcerV19}, we adopt the \emph{replacement neighboring} model%
\footnote{
    Another commonly used notion is the \emph{add/remove neighboring} model.
    \emph{However, under this model, no optimal-error private frequency histogram algorithm can run in deterministic time.}
    A formal definition of the model and further discussion appear in \cref{sec:add-remove-model}.
    \emph{This limitation is precisely why \citet{BalcerV19} restrict their study to the replacement model, where the question of breaking the quadratic running-time barrier was originally posed.}
}, 
where two datasets $\DataSet$ and $\DataSet'$ of equal size are \emph{neighboring}, denoted $\DataSet \sim \DataSet'$, if they differ in exactly one participant's data; that is, there exists a unique $\user \in [n]$ such that $\UserData{\user} \neq \NeighborUserData{\user}$ and $\UserData{i} = \NeighborUserData{i}$ for all $i \neq \user$.
Accordingly, two histograms $\hist, \hist' \in \N^d$ are called \emph{neighboring} if they are induced by neighboring datasets under one of the above definitions.

\subsubsection*{Private Algorithm.}
A histogram releasing algorithm is said to be \emph{differentially private} if its output distributions are similar on all pairs of neighboring inputs.
Formally:

\begin{definition}[$\paren{\eps, \delta}$-Indistinguishability]
    \label{def:indistinguishability}
    Let $\eps \in \R_+$, $\delta \in [0, 1]$, and let $(\cZ, \cF)$ be a measurable space. 
    Two probability measures $\mu$ and $\nu$ on this space are $\paren{\eps, \delta}$-indistinguishable if
    \begin{equation}
        e^{-\eps} \cdot \paren{ \mu(E) - \delta } 
        \le \nu(E) 
        \le e^{\eps} \cdot \mu(E) + \delta,
        \quad
        \forall E \in \cF.
    \end{equation}
    Similarly, two random variables are $\paren{\eps, \delta}$-indistinguishable if their corresponding distributions are.
\end{definition}

\begin{definition}[$\paren{\eps, \delta}$-Private Algorithm~\citep{DR14}] 
    \label{def: Differential Privacy}
    Given $\eps \in \R_+$, $\delta \in [0, 1]$, a randomized algorithm 
    $\cM: \DataDomain^n \rightarrow \N^d$
    is called~$\paren{\eps, \delta}$-differentially private (DP),
    if for every~$\DataSet, \DataSet' \in \cY$ such that~$\DataSet \sim \DataSet'$, 
    $\cM (\DataSet)$ and $\cM (\DataSet')$ are $\paren{\eps, \delta}$-indistinguishable.
\end{definition}

An algorithm~$\cM$ is said to be \emph{pure differentially private}, or simply \emph{$\eps$-DP}, if it satisfies $(\eps, 0)$-differential privacy. 
If instead $\cM$ satisfies $(\eps, \delta)$-differential privacy for some $\delta > 0$, it is referred to as \emph{approximate differentially private}.

Although we present differential privacy in the context of histogram release, the definition applies more generally to any randomized algorithm $\cM: \cY \rightarrow \cZ$, where $\cY$ is an arbitrary input domain endowed with a symmetric neighboring relation~$\sim$.

\subsubsection*{\bf Utility Guarantee.}
We evaluate utility using the $\ell_\infty$ error \citep{BalcerV19}.

\begin{definition}[$(\alpha, \beta)$-Simultaneous Accurate Estimator]
\label{def:simultaneous-accuracy}
    Let $\alpha \in \R_+$ and $\beta \in [0, 1]$.  
    A random vector $\vec{X} \doteq \paren{ X_1, \ldots, X_m }$ is an $(\alpha, \beta)$-simultaneously accurate estimator of a vector $\vec{t} \doteq \paren{ t_1, \ldots, t_m } \in \R^m$ if \,$\P{ \norm{ \vec{X} - \vec{t} }_\infty \ge \alpha } \le \beta$.
\end{definition}

In addition to $(\alpha,\beta)$-simultaneous guarantees for $\noisyhist$, we also provide bounds on the expected error $\mathbb{E} \big[\norm{\noisyhist - \hist}_\infty \big]$; the latter follows from the former using standard integration techniques (see \cref{sec:dp-sparse-histogram} for details).

\section{Preliminaries}
\label{sec:preliminaries}

In this section, we define several probability distributions used throughout the paper, and review a framework of converting approximate DP into pure DP algorithms. 

\subsection{Probability Distributions}

\begin{definition}[Geometric Distribution]
\label{def:geometric-distribution}
    Given $p \in [0, 1)$ and $u \in \N \cup \set{ \infty }$,  
    let $\GeometricNoise{p}{\zeroton{u}}$ be a random variable supported on $\zeroton{u}$ with probability mass function  
    \begin{equation*}
        \begin{array}{c}
            \P{ \GeometricNoise{p}{\zeroton{u}} = t } = \frac{1 - p}{1 - p^{1 + u}} \cdot p^t, \quad \forall t \in \zeroton{u}.
        \end{array}
    \end{equation*}
    When $u = \infty$, we write $\GeometricNoise{p}$ as shorthand for $\GeometricNoise{p}{\zeroton{\infty}}$.
\end{definition}

\begin{definition}[Discrete Laplace Distribution]
    \label{def:discrete-laplace}
    Given $p \in [0, 1)$, let $\DiscreteLapNoise{p}$ denote a random variable following the \emph{discrete Laplace distribution} with probability mass function:
    \begin{equation*}
        \P{ \DiscreteLapNoise{p} = t } = \frac{1 - p}{1 + p} \cdot p^{\card{t}}, \quad \forall t \in \Z.
    \end{equation*}
\end{definition}

\subsection{Purification}

\citet{BalcerV19} showed that if an algorithm $\cM' : \cY \to \cZ$ with finite output domain is close in total variation distance to some pure DP algorithm $\cM : \cY \to \cZ$, then one can convert $\cM'$ into a pure DP algorithm by mixing its output with the uniform distribution over $\cZ$.
The purification procedure is given in \cref{alg:purification}: with probability $1 - \gamma$, the algorithm outputs $\cM'(y)$, and otherwise outputs a uniformly random $Z \in \cZ$.
The formal guarantee is stated in \cref{proposition:purification}.

\begin{algorithm}[H]
    \caption{Purification $\algoPurify$}
    \label{alg:purification}
    \begin{algorithmic}[1]
        \Statex \hspace{-4.6mm} {\bf Input:} 
            An algorithm $\cM': \cY \rightarrow \cZ$, where $\card{\cZ}$ is finite; 
            input $y \in \cY$; 
            and $\gamma \in \Q_+ \cap (0, 1)$
        \Statex \hspace{-4.6mm} {\bf Output:} 
            $Z \in \cZ$
        \State $B \gets \Bernoulli{\gamma}$ \Comment{$\P{B = 1} = \gamma$}
        \State Output $\paren{1 - B} \cdot \cM'(y) + B \cdot Z$, where $Z \sim \UniformNoise{\cZ}$.
    \end{algorithmic}
\end{algorithm}

\begin{proposition}[\cite{BalcerV19}]
    \label{proposition:purification}
    Given an algorithm $\cM': \cY \rightarrow \cZ$, assume there exists an $\eps$-DP algorithm $\cM: \cY \rightarrow \cZ$ such that $\tvdistance{ \cM(y) }{ \cM'(y) } \le \delta$ for all $y \in \cY$, with parameter $\delta \in [0, 1)$.
    Then Algorithm~\ref{alg:purification} satisfies:
    \begin{itemize}[label=$\triangleright$, leftmargin=0.4cm]
        \item $\eps$-DP, provided that
        $\delta \le \frac{e^\eps - 1}{e^\eps + 1} \cdot \frac{\gamma}{1 - \gamma} \cdot \frac{1}{\card{\cZ}}$.
        
        \item Running time $O \PAREN{ \frac{1}{\omega} \cdot \log \frac{1}{\gamma} + \frac{1}{\omega} \cdot \log \card{\cZ}} + \runningtime{ \cM' }$, where $\runningtime{ \cM' }$ is the running time of $\cM'$.
    \end{itemize}
\end{proposition}

\section{Differentially Private Sparse Histograms}
\label{sec:dp-sparse-histogram}

In this section, we present our algorithm for releasing sparse histograms under pure differential privacy.  
The following theorem summarizes our main result.

\begin{theorem}[\ourAlgoName]
    \label{thm:private-sparse-histogram-formal}
    Given integers 
    $n, d, k, a_\eps, b_\eps, a_\gamma, b_\gamma \in \N_+$ 
    that fit in a constant number of machine words, 
    where $\eps \doteq \frac{a_\eps}{b_\eps} \in \Q_+$ and $\gamma \doteq a_\gamma / b_\gamma \in \paren{ 1 / n^{O(1)},  1 }$,
    and a dataset $\DataSet \doteq \set{\UserData{1}, \ldots, \UserData{n}} \in [d]^n$,
    there exists an algorithm, denoted by $\ourAlgo$, that outputs a histogram $\noisyhist \in \zeroton{n}^d$ approximating the original histogram $\hist$ of $\DataSet$, satisfying $\norm{\noisyhist}_0 \in O(n)$ and 
    \begin{itemize}[leftmargin=0.4cm, label=$\triangleright$]
        \item \textbf{Privacy Guarantee:} $\ourAlgo$ satisfies 
        $\max \Bigparen{
            \frac{3\eps}{2} + \frac{
                1
            }{
                k + 1
            },
            2\eps   
        }$-differential privacy.
        \item \textbf{Utility Guarantee:} There exists a universal constant $c_\alpha \in \R_+$, such that for each $\beta \ge 2 \eps \gamma$, $\alpha \doteq c_\alpha / \eps \cdot \ln \paren{d / \beta}$,
        the output $\noisyhist$ is an $(\alpha, \beta)$-simultaneous accurate estimator of $\hist$.

        \item \textbf{Running Time:} $\ourAlgo$ runs deterministically in $O(n + k)$ time.
    \end{itemize}
\end{theorem}

\subsubsection*{Remark.}
The utility guarantee of \cref{thm:private-sparse-histogram-formal} is an $\ell_\infty$ bound and is asymptotically optimal, matching the lower bound $\Omega \paren{ \min \set{ \tfrac{1}{\eps}\ln d , n } }$ of \citet{BalcerV19}. 
The bound provided here is a high-probability one, in contrast to the expected one stated in \cref{thm:private-sparse-histogram}.  
The latter can be recovered from the former by the standard identity $\mathbb{E}[X] = \int_0^\infty \Pr[X \ge t] \, dt$ for a nonnegative random variable $X$.

Setting $k = 2/\eps$ yields a privacy guarantee of $2\eps$-DP and running time $O(n + 1/\eps)$, which simplifies to $O(n)$ in the meaningful regime $n \ge 1/\eps$.  
When $n < 1/\eps$, simply returning the all-zero frequency vector incurs error at most $n < 1/\eps$, matching the lower bound.

Also, in what follows, we assume $d \ge 10n$ to focus on the sparse regime where our algorithm is most relevant.  
Otherwise, when $d < 10n$, one can simply release the entire histogram $\hist$ after perturbing each count using the algorithm in \cref{thm:purified-approximate-discrete-laplace-sampler-informal}, which yields an algorithm with $O(n)$ running time.

\subsubsection*{Roadmap.} 
The remainder of this section is organized as follows.  
We begin by discussing \cref{thm:private-sparse-histogram-formal} and its connection to the informal version stated in the introduction (\cref{thm:private-sparse-histogram}).  
\Cref{subsec:overview-of-private-histogram} presents the key ideas behind our algorithm, along with its pseudocode. 
\Cref{subsec:proof-private-sparse-histogram} provides the proof of \cref{thm:private-sparse-histogram-formal}. 

We present an MPC-friendly prototype circuit implementation of our algorithm in \cref{subsec:mpc-circuit-sparse-hist}.
A detailed comparison with prior work is deferred to \cref{sec:related work}.

\subsection{Overview of \texorpdfstring{$\ourAlgo$}{Private Histogram Algorithm}}
\label{subsec:overview-of-private-histogram}

To motivate our algorithm, we begin by reviewing the \textsc{Filter} algorithm for releasing pure-DP sparse histograms \citep{CormodePST12}, which lacks a known efficient time-oblivious implementation.
Given a histogram $\hist \in \zeroton{n}^d$ with $\norm{\hist}_1 = n$, the \textsc{Filter} algorithm produces an output $\noisyhist$ distributed identically to the output of the following procedure:
(1) Add independent $\DiscreteLapNoise{e^{-\eps}}$ noise to each entry of $\hist$.
(2) Retain only the entries whose noisy values exceed a threshold $\tau \in \Theta\left( \frac{1}{\eps} \cdot \ln d \right)$, chosen so that the expected number of outputs is $O(n)$.

Implementing the above procedure directly takes $O(d)$ time.
To avoid this,
the \textsc{Filter} algorithm treats non-zero and zero entries separately.
Non-zero entries are processed directly as described, taking $O(n)$ time.
For zero entries, \citet{CormodePST12} observed that after noise addition, each entry exceeds the threshold $\tau$ independently with the same probability, denoted by $q_\tau \in O(n / d)$.
Thus, the algorithm proceeds as follows:
\begin{enumerate}[leftmargin=0.6cm]
    \item[(1)] Sample the number $X$ of zero entries exceeding $\tau$ from a binomial distribution 
    $\operatorname{\mathbb{B}in} (d - \card{\supp(\hist)}, q_\tau)$
    where $\supp(\hist)$ denotes the support of $\hist$.
    \item[(2)] Sample a random subset of size $X$ from $[d] \setminus \supp(\hist)$.
    \item[(3)] 
    Assign i.i.d.\ noise to the $X$ entries from distribution $\DiscreteLapNoise{e^{-\eps}}$, conditioned on exceeding $\tau$.
\end{enumerate}

\subsubsection*{\bf Obstacles to a Time-Oblivious Implementation.}
Though \textsc{Filter} has a simple description, implementing it properly is not straightforward.
The first issue is that sampling from $\DiscreteLapNoise{e^{-\eps}}$ exactly is vulnerable to timing attacks.
To address this, one can replace $\DiscreteLapNoise{e^{-\eps}}$ with a noise distribution with similar properties that has deterministic sampling time, such as the one presented in \cref{thm:purified-approximate-discrete-laplace-sampler-informal}. 
In addition, one can avoid sampling from the conditional distribution in the last step by instead using fresh noises, at the slight cost of a factor-$2$ degradation in utility.

The real challenge lies in sampling from the distribution $\Binomial{d - \card{\supp(\hist)}}{q_\tau}$.
Even writing down the exact probabilities of this binomial distribution requires exponential space.
A natural idea is to approximately sample: generate a random variable $X$ such that $\tvdistance{X}{\Binomial{d - \card{\supp(\hist)}}{q_\tau}}$ $\le \delta$ for some small $\delta$.
This leads to an algorithm that satisfies approximate DP.
To recover pure DP, one could apply the purification technique (\cref{proposition:purification}), which mixes the final output distribution of \textsc{Filter} with the uniform distribution over the output space to smooth out the $\delta$ gap.
However, \cref{proposition:purification} requires $\delta \in O(1 / 2^d)$, since the output of \textsc{Filter} may be any subset of $[d]$.
For such a small $\delta$, we do not know how to efficiently sample $X$ satisfying $\tvdistance{X}{\Binomial{d - \card{\supp(\hist)}}{q_\tau}} \le \delta$.

\subsubsection*{\bf Rethinking Zero Entry Sampling for Privacy.}  
The challenge above motivates a modification to how \textsc{Filter} handles zero entries in $\hist$, in order to avoid sampling from a complex distribution.
We revisit the role that the selected zero entries play in ensuring privacy.
Consider two neighboring histograms $\hist$ and $\hist'$.
If they have the same support, then the selected zero entries have no impact on privacy.
However, if their supports differ (in one or two entries under the replacement neighboring model), the sampled zero entries help obscure this difference.
Informally, they serve as a \emph{privacy blanket} over the support set.
This protection can be decomposed into two parts:
\begin{enumerate}[leftmargin=0.6cm]
    \item[(1)] After adding the privacy blanket, the total number of selected entries (including both zero and nonzero entries) should have similar distributions for neighboring histograms.
    \item[(2)] Conditioned on the total number being the same, the distribution over the selected sets themselves should also be similar across neighbors.
\end{enumerate}

\begin{algorithm}[!t]
    \caption{\ourAlgoName $\ourAlgo$}
    \label{algo:tail-item-padding}
    \begin{algorithmic}[1]
        \Statex \hspace{-4.8mm} {\bf Input:} 
            parameters $d, n, k, a_\eps, b_\eps, a_\gamma, b_\gamma \in \N_+$, s.t., $\eps = a_\eps / b_\eps$ and $\gamma = a_\gamma / b_\gamma$
        \Statex \hspace{-4.8mm} \hspace*
        {\widthof{\textbf{Input:}}} 
        dataset $\DataSet \doteq \set{\UserData{1}, \ldots, \UserData{n}} \in [d]^n$ 
        \Statex \hspace{-4.8mm} {\bf Output:} 
            histogram $\noisyhist \in \zeroton{n}^d$
        \State construct histogram $\hist \in \zeroton{n}^d$ based on $\DataSet$
        \label{line:histogram-construction}
        \For{each $i \in \supp \paren{\hist}$}
            \vspace{0.5mm}
            \State $\intermediateHist{i} \gets \PurifiedApproxDiscreteLaplaceMechanism{n}{\eps}{\eps \gamma / d} \paren{\hist{i}}$
            \Comment{$\intermediateHist{i} \approx \clamp{\hist{i} + \DiscreteLapNoise{e^{-\eps}}}{0}{n}$}
        \EndFor
        \State $I_1 \gets \{ i \in \supp \paren{\hist} : \intermediateHist{i} \ge \tau \}$
        \Comment{$\tau \in \Theta \paren{\tfrac{\ln d}{\eps}}$ as defined in \cref{eq:def-tau}}
        \State $I_2 \gets$ a uniform random subset in $\binom{[d] \setminus I_1}{n + k - \card{I_1}}$
        \label{line:blanket-sampling}
        \State $I \gets I_1 \cup I_2$
        \label{line:item-padding}
        \For{each $i \in I$}
            \State $\noisyhist{i} \gets \PurifiedApproxDiscreteLaplaceMechanism{n}{\eps}{\eps \gamma / d} \paren{\hist{i}}$
            \label{line:re-genrating-noises}
            \Comment{$\noisyhist{i} \approx \clamp{\hist{i} + \DiscreteLapNoise{e^{-\eps}}}{0}{n}$}
        \EndFor
        \State \Return $\noisyhist$ \Comment{where $\noisyhist[i] = 0$ for all $i \notin I$}
    \end{algorithmic}
\end{algorithm}

\subsubsection*{\bf Simplifying Privacy Blanket.}
Our goal is to design a simpler privacy blanket that satisfies both requirements.
The pseudocode for our algorithm is provided in \cref{algo:tail-item-padding}.
The noise sampler $\PurifiedApproxDiscreteLaplaceMechanism{n}{\eps}{\eps \gamma / d}$ used in \cref{algo:tail-item-padding} is  described in \cref{thm:purified-approximate-discrete-laplace-sampler-informal}.
This mechanism produces noise with the same privacy guarantees and nearly identical tail bounds as discrete Laplace noise, but clamped to the range $[0, n]$.
For ease of exposition, \emph{it is helpful to conceptually treat these samples as clamped discrete Laplace noise.}

For the first requirement, we leverage the fact that, under the replacement neighboring model, the total count $n$ can be made public.
Denote by $I_1$ the set of selected entries from $\supp(\hist)$, obtained by adding noise to each entry and retaining those whose noisy counts exceed the threshold $\tau$.
We then pad $I_1$ to a fixed target size $n + k$, for some chosen $k \in O(n)$.
This completely avoids sampling the blanket size from the distribution $\Binomial{d - \card{\supp(\hist)}}{q_\tau}$.

For the second requirement, we sample a subset of $n + k - \card{I_1}$ random entries from the set $[d] \setminus I_1$ (without replacement).
This includes not only zero counts, but may also include non-zero ones; i.e., entries in $\supp(\hist)$ can have a second chance of being selected.
To see how this works, consider one case where $\supp(\hist) \cup \{ i^* \} = \supp(\hist')$ for some $i^* \in [d]$.
In this case, the count $\hist[i^*] = 0$ and $\hist'[i^*] = 1$.
Further, define $I_1'$ for $\hist'$ analogously to $I_1$ for $\hist$.
Conditioned on $i^* \notin I_1'$ (which happens in most cases), $I_1$ and $I_1'$ have identical distributions.
Padding them with sampled entries from the same distribution preserves the distributional identity of the final selected sets.

We only need to handle the case where $i^* \in I_1'$.
In this case, $I_1$ receives $n + k - \card{I_1}$ padding entries, whereas $I_1'$ receives $n + k - \card{I_1'} = n + k - \card{I_1} - 1$ padding entries.
The privacy blanket for $I_1$ is thus larger by one entry than that for $I_1'$. 
Conditioned on this additional entry being $i^*$, the remaining parts of the two privacy blankets are identically distributed.
Hence, it suffices to show that the probability of $i^*$ appearing in the privacy blanket is comparable to its probability of appearing in $I_1'$, which we will formally establish in \cref{subsec:proof-private-sparse-histogram}.

\subsubsection*{\bf Further Discussion and Open Questions on Padding Strategies.}
Up to this point, our discussion naturally raises several questions regarding the proposed padding strategy.
For instance, one might wonder what happens if we sample the $n + k - \card{I_1}$ padding entries \emph{with replacement} from the set $[d] \setminus I_1$, thereby allowing the final total number of selected entries to be potentially smaller than $n + k$.
Another natural question is whether the padding entries could be sampled (with or without replacement) from the larger set $[d] \setminus \supp(\hist)$ instead of $[d] \setminus I_1$.
Do these variations still guarantee pure differential privacy?  
Do they help avoid re-generating fresh noise for all selected elements in Line~\ref{line:re-genrating-noises} of \cref{algo:tail-item-padding}?

While these modifications might also be rigorously analyzed by suitably adapting the proof techniques presented here, our current analysis does not directly cover them.
Our primary goal in this work is to initiate the study of, and provide one workable padding strategy, rather than to offer an exhaustive treatment.
We therefore leave these questions open as interesting directions for future research.

\subsection{Proof of Theorem~\ref{thm:private-sparse-histogram-formal}}
\label{subsec:proof-private-sparse-histogram}

In this subsection, we begin the formal proof of \cref{thm:private-sparse-histogram-formal}. 
We show that \cref{algo:tail-item-padding} satisfies the theorem, and analyze its privacy, utility guarantees, and running time separately.
All missing proofs of the lemmas in this section can be found in \cref{sec:proofs-for-dp-sparse-histogram}.

Throughout the proof, we set 
\begin{equation}
    \label{eq:def-tau}
    \tau \doteq \argmin_{t \in \N} \SET{ 
        \P{ 1 + \PurifiedApproxDiscreteLaplaceMechanism{n}{\eps}{\eps\gamma / d}{1} \ge t } \le 1 / d 
    },
\end{equation}
Based on Theorem~\ref{thm:purified-approximate-discrete-laplace-sampler-informal}, it holds that $\tau \in \Theta \paren{1 / \eps \cdot \ln d}$.
Further, denote 
\begin{equation}
    p_\tau \doteq \P{ 1 + \PurifiedApproxDiscreteLaplaceMechanism{n}{\eps}{\eps\gamma / d}{1} \ge \tau }.    
\end{equation}
The definition of $\tau$ implies that $p_\tau \le 1 / d$.

\subsubsection{\bf Privacy Guarantee}

    Let $\hist' \sim \hist$ be a neighboring histogram.
    To distinguish the notations, denote $I_1', I_2'$ and $I'$ be the items sampled by Algorithm~\ref{algo:tail-item-padding} when the input is $\hist'$.
    We want to show that, 
    \begin{equation}
        \label{ineq:private-output-support}
        e^{- \max \paren{ \frac{\eps}{2} + 2 \cdot p_\tau, \eps} } \P{I = S}
        \le \P{I' = S} 
        \le e^{
            \max \PAREN{
                \frac{\eps}{2} + p_\tau \cdot \frac{
                    d - k 
                }{
                    k + 1
                },
                \eps
            }
        } \cdot \P{I = S}, 
        \, 
        \forall \, S \subseteq [d].
    \end{equation}

    Conditioned on $I = I'$, by the privacy guarantee of 
    $\PurifiedApproxDiscreteLaplaceMechanism{n}{\eps}{\eps \gamma / d}$ (Theorem~\ref{thm:purified-approximate-discrete-laplace-sampler-informal}), 
    the values $\set{ \noisyhist{i} : i \in I }$ and $\set{ \noisyhist'[i] : i \in I' }$ are $(\eps, 0)$-indistinguishable.
    By basic composition theorem of DP (\cref{fact:basic-composition}) and that $p_\tau \le 1 / d$, the \cref{algo:tail-item-padding} is 
    \begin{equation*}
        \max \Bigparen{
            \frac{3\eps}{2} + p_\tau \cdot \frac{
                d - k 
            }{
                k + 1
            },
            2\eps   
        }
        \le 
        \max \Bigparen{
            \frac{3\eps}{2} + \frac{
                1
            }{
                k + 1
            },
            2\eps   
        }
        \text{-DP}
    \end{equation*}
    It remains to prove \cref{ineq:private-output-support}.
    
    \subsubsection*{\bf Proving \cref{ineq:private-output-support}.}
    Without lose of generality, we assume that $\hist$ and $\hist'$ differ in two entries indexed by $i^*, j^* \in [d]$.
    Further, we assume that 
    \begin{align*}
        \begin{array}{ccc}
            \hist{i^*} + 1 
                = \hist'[i^*], 
            &
            \hist{j^*} 
                = \hist'[j^*] + 1, 
            &
            \hist{i}
                = \hist'[i],  
                \,
                \forall i \in [d] \setminus \set{ i^*, j^* }.  
        \end{array}
    \end{align*}

    \noindent
    We consider four cases.

    \subsubsection*{\bf Case 1. $\supp \paren{\hist} = \supp \paren{\hist'}$}
    Then clearly the distributions of $I_1$ and $I_1'$ are $(\eps, 0)$-indistinguishable, due to the privacy guarantee of $\PurifiedApproxDiscreteLaplaceMechanism{n}{\eps}{\eps \gamma / d}$.
    Hence so are the ones of $I$ and $I'$ and \cref{ineq:private-output-support} holds.

    \subsubsection*{\bf Case 2. $\supp \paren{\hist} \cup \set{ i^* } = \supp \paren{\hist'}$.}
    In this case, we have 
    $
            \hist'[i^*] = 1, \, \hist{i^*} = 0, \,
            \hist{j^*} > \hist'[j^*]  > 0.
    $
    For convenience, denote $\distSupport$ as the distribution of $I$, i.e.,
    $\distSupport(S) = \P{I = S}$.
    Similarly, let $\distSupport'$ be the distribution of $I'$, with
    $\distSupport'(S) = \P{I' = S}$.
    Let $\eventSelectIstart$ be the event that $i^* \in I_1'$, and let
    $\distSupport'_{\eventSelectIstart}$ and
    $\distSupport'_{\bar{\eventSelectIstart}}$ denote the distributions of $I'$
    conditioned on $\eventSelectIstart$ and $\bar{\eventSelectIstart}$,
    respectively.
    By the definition of $p_\tau$,
    \begin{equation*}
        \distSupport' 
        = p_\tau \cdot \distSupport'_{\eventSelectIstart}
        + \paren{1 - p_\tau} \cdot \distSupport'_{\bar{\eventSelectIstart}}.
        \footnote{
            Given a $(\cZ, \cF)$ measurable space and two probability measures $\mu$ and $\nu$ on it, we write $\mu = \nu$ if $\mu(F) = \nu(F)$ for each $F \in \cF$. 
            We similarly write $\mu \ge \nu$ when $\mu(F) \ge \nu(F)$ for all $F \in \cF$.
        }
    \end{equation*}

    First, to prove the LHS of \cref{ineq:private-output-support}, we invoke Lemma~\ref{lem:dist-support-case-2a}.      
    
    \begin{lemma}\label{lem:dist-support-case-2a}
        $\distSupport$ and $\distSupport'_{\bar{\eventSelectIstart}}$ are
        $\paren{\eps/2, 0}$-indistinguishable.
    \end{lemma}
    
    Intuitively, the lemma holds because, conditioned on the event $\bar{\eventSelectIstart}$ (i.e., $i^* \notin I_1'$), 
    the distributions of $I_1'$ and $I_1$ become $\paren{\eps/2,0}$-indistinguishable: the only remaining difference between $\hist$ and $\hist'$ is the single-coordinate change 
    $\hist[j^*] = \hist'[j^*] + 1$, and the privacy guarantee of 
    $\PurifiedApproxDiscreteLaplaceMechanism{n}{\eps}{\eps\gamma/d}$ ensures this.  
    Moreover, conditioned on $I_1' = I_1$, the distributions of $I_2'$ and $I_2$ coincide.  
    Consequently, $\distSupport'_{\bar{\eventSelectIstart}}$ and $\distSupport$ are $\paren{\eps/2, 0}$-indistinguishable.
  
    By \cref{lem:dist-support-case-2a},
    \begin{align*}
        \distSupport' 
            &\ge 
            \paren{1 - p_\tau} \cdot \distSupport'_{\bar{\eventSelectIstart}}
            \ge 
            \paren{1 - p_\tau} \cdot e^{-\eps/2} \cdot \distSupport
            \ge 
            e^{- 2 p_\tau - \eps/2} \cdot \distSupport,
    \end{align*}
    where we use that $1 - p_\tau \ge e^{-2 p_\tau}$ for $p_\tau \in (0, 0.5)$.

    Second, to prove the RHS of \cref{ineq:private-output-support}, we use Lemma~\ref{lem:dist-support-case-2b}.

    \begin{lemma}
        \label{lem:dist-support-case-2b}
        $\distSupport \ge \frac{k}{d} \cdot e^{-\eps / 2} \cdot \distSupport'_{\eventSelectIstart}$.
    \end{lemma}

    Informally, the lemma holds because, the element $i^*$ can be added to $I$ only during the padding phase
    (Algorithm~\ref{algo:tail-item-padding}, Lines~\ref{line:blanket-sampling}--\ref{line:item-padding}),
    which occurs with probability at least $k/d$.
    Conditioned on this event and on $\eventSelectIstart$, 
    the distributions of $I_1 \cup \set{ i^* }$ and $I_1'$ are $\paren{\eps/2, 0}$-indistinguishable:
    the only remaining difference between $\hist$ and $\hist'$ is the single-coordinate change 
    $\hist[j^*] = \hist'[j^*] + 1$, and the privacy guarantee of 
    $\PurifiedApproxDiscreteLaplaceMechanism{n}{\eps}{\eps\gamma/d}$ ensures this.  
    Moreover, when $I_1 \cup \set{ i^* } = I_1'$ and $i^* \in I_2$, the remaining padding elements in $I_2 \setminus \set{ i^* }$ and $I_2'$ have identical distributions.
    
    Based on \cref{lem:dist-support-case-2a,lem:dist-support-case-2b},
    \begin{align*}
        \distSupport'
        &= p_\tau \cdot \distSupport'_{\eventSelectIstart}  + \paren{1 - p_\tau} \cdot \distSupport'_{\bar{\eventSelectIstart}}  
        \le p_\tau \cdot e^{\eps / 2} \cdot d / k \cdot \distSupport  + \paren{1 - p_\tau} \cdot e^{\eps / 2} \cdot \distSupport \\
        &= e^{\eps / 2} \cdot \PAREN{
            \paren{1 - p_\tau}  + 
            p_\tau \cdot d / k
        } \cdot \distSupport 
        =  e^{\eps / 2} \cdot \PAREN{
            1 + p_\tau \cdot \paren{ d / k - 1}
        } \cdot \distSupport 
        \le e^{
            \eps / 2 + p_\tau \cdot \paren{ d / k - 1}
        } \cdot \distSupport,
    \end{align*}
    where we use $1 + p_\tau \cdot \paren{ d / k - 1} \le \exp \paren{p_\tau \cdot \paren{ d / k - 1}}$.

    \subsubsection*{\bf Case 3. $\supp \paren{\hist} = \supp \paren{\hist'} \cup \set{ j^* }$.}
    The discussion follows from the symmetry of Case 2.

    \subsubsection*{\bf Case 4. $\supp \paren{\hist} \cup \set{i^*} = \supp \paren{\hist'} \cup \set{ j^* }$.}
    Let $\distSupport$, $\distSupport'$, $\distSupport'_{\eventSelectIstart}$ and
    $\distSupport'_{\bar{\eventSelectIstart}}$ be defined as before. 
    Further, let $\eventSelectJstart$ be the event that $j^* \in I_1$, and let
    $\distSupport_{\eventSelectJstart}$ and
    $\distSupport_{\bar{\eventSelectJstart}}$ denote the distributions of $I$
    conditioned on $\eventSelectJstart$ and $\bar{\eventSelectJstart}$,
    respectively.
    By the definition of $p_\tau$,
    \begin{equation*}
        \distSupport
        = p_\tau \cdot \distSupport_{\eventSelectJstart}
        + \paren{1 - p_\tau} \cdot \distSupport_{\bar{\eventSelectJstart}}, 
        \quad 
        \distSupport' 
        = p_\tau \cdot \distSupport'_{\eventSelectIstart}
        + \paren{1 - p_\tau} \cdot \distSupport'_{\bar{\eventSelectIstart}}.
    \end{equation*}
    To prove \cref{ineq:private-output-support}, we need Lemma~\ref{lem:dist-support-case-4}.
    \begin{lemma}
        \label{lem:dist-support-case-4}
        $\distSupport_{\bar{\eventSelectJstart}} = \distSupport'_{\bar{\eventSelectIstart}}$,
        $\distSupport_{\bar{\eventSelectJstart}} \ge \frac{k}{d} \cdot \distSupport'_{\eventSelectIstart}$, 
        $\distSupport'_{\bar{\eventSelectIstart}} \ge \frac{k}{d} \cdot \distSupport_{\eventSelectJstart}
        $.
    \end{lemma}

    We present an intuitive explanation of the lemma.  
    First, conditioned on $\bar{\eventSelectJstart}$ and $\bar{\eventSelectIstart}$, the pair $I_1$ and $I_1'$, as well as the pair $I_2$ and $I_2'$, have identical distributions.
    Therefore, $\distSupport_{\bar{\eventSelectJstart}} = \distSupport'_{\bar{\eventSelectIstart}}$ holds.  
    Second, the element $i^*$ can be added to $I$ only during the padding phase
    (Algorithm~\ref{algo:tail-item-padding}, Lines~\ref{line:blanket-sampling}--\ref{line:item-padding}), which occurs with probability at least $k/d$.  
    Conditioned on this event and on $\bar{\eventSelectJstart}$ and $\eventSelectIstart$, the pair $I_1$ and $I_1'$, as well as the pair $I_2$ and $I_2'$, have identical distributions.
    Therefore, $\distSupport_{\bar{\eventSelectJstart}} \ge \frac{k}{d} \cdot \distSupport'_{\eventSelectIstart}$ holds.  
    The last inequality follows from symmetry of the second one.
    
    Based on \cref{lem:dist-support-case-4}
    \begin{align*}
        \distSupport
        &= p_\tau \cdot \distSupport_{\eventSelectJstart}
        + \paren{1 - p_\tau} \cdot \distSupport_{\bar{\eventSelectJstart}} 
        \le p_\tau \cdot \tfrac{d}{k} \cdot \distSupport'_{\bar{\eventSelectIstart}} 
        + \paren{1 - p_\tau} \cdot \distSupport'_{\bar{\eventSelectIstart}} 
        \le \PAREN{
            1 + p_\tau \cdot \Bigparen{ \frac{d}{k} - 1}
        } \cdot \distSupport'
        \le e^{
            \, p_\tau \cdot \frac{d - k}{k}
        } \distSupport'.
    \end{align*}

\subsubsection{\bf Utility Guarantee}
Consider any $\beta > 2 \eps \gamma$.
Set $\gamma' \doteq \eps \gamma / d$ and $\beta' \doteq \beta / d$, so that $\beta' > 2 \gamma'$.
Recall from \cref{thm:purified-approximate-discrete-laplace-sampler-informal} that for each $t \in \zeroton{n}$,
$\PurifiedApproxDiscreteLaplaceMechanism{n}{\eps}{\gamma'}{t}$ is an $(\alpha,\beta')$-accurate estimator of $t$, where
\[
    \alpha \in O \PAREN{
      \frac{1}{\eps} \ln \frac{1}{\beta'}   
    } 
    = O \PAREN{
        \frac{1}{\eps} \ln \frac{d}{\beta}
    }.
\]
We consider the accuracy of each coordinate $\noisyhist[i]$ across three cases.

\subsubsection*{Case 1 ($i \in I$)}: In this case, $\noisyhist[i]$ is directly constructed from the output of $\PurifiedApproxDiscreteLaplaceMechanism{n}{\eps}{\gamma'} \paren{\hist[i]}$, and is thus an $(\alpha, \beta')$-accurate estimator of $\hist[i]$.

\subsubsection*{Case 2 ($i \in [d] \setminus \paren{ \supp\paren{\hist} \cup I }$)}: Here, $i$ is neither in the support of $\hist$ nor in $I$, so $\noisyhist[i] = 0 = \hist[i]$, and the error is exactly zero.

\subsubsection*{Case 3 ($i \in \supp\paren{\hist} \setminus I$)}: In this case, $\intermediateHist[i]$ is an $(\alpha, \beta')$-accurate estimator of $\hist[i]$.
Moreover, the fact that $i \notin I$ implies that $\intermediateHist[i] < \tau$.
Hence,
\begin{align*}
        \card{\hist{i} - \noisyhist{i}} 
            &= \hist{i} 
            \le \card{\hist{i} - \intermediateHist{i}} + \intermediateHist{i} 
            = \alpha + \tau
            \in O \PAREN{ \frac{1}{\eps} \cdot \ln \frac{d}{\beta} }.
\end{align*}
By a union bound, we conclude that $\noisyhist$ is an $(\alpha, \beta)$-simultaneous accurate estimator of $\hist$.

\subsubsection{\bf Running Time} 
We need to analyze the time for the noise generation, the time for sampling the privacy blanket, and the time of the dictionary operations, separately.

\subsubsection*{Noise Generation.}

Under the assumption in \cref{thm:private-sparse-histogram-formal} that the descriptions of $n$, $\eps$, and $\gamma$ fit in a constant number of machine words (i.e., $\ln n, \ln \tfrac{1}{\eps}, \ln \tfrac{1}{\gamma} \in O(\omega)$, where $\omega$ is the word size), 
Theorem~\ref{thm:purified-approximate-discrete-laplace-sampler-informal} implies that the noise sampler 
$\PurifiedApproxDiscreteLaplaceMechanism{n}{\eps}{\eps\gamma / d}$ 
has initialization time $\tilde{O}(1)$ and uses 
$O\!\left( \tfrac{1}{\eps} + \ln \tfrac{d}{\eps\gamma} + \ln n \right)$ 
words of memory.

After initialization, each sample from the noise distribution can be generated in $O(1)$ time.  
Thus, producing $n+k$ samples takes total time $O(n+k)$.

\subsubsection*{Privacy Blanket Sampling.}
To sample a uniform random subset from $\binom{[d] \setminus I_1}{n + k - \card{I_1}}$ 
(the set of all subsets of size $n + k - \card{I_1}$ from $[d] \setminus I_1$), 
we first sample a uniform random subset of size $n + k$ from $[d]$, 
and then randomly select $n + k - \card{I_1}$ elements from this subset that do not appear in $I_1$.

To generate a uniform random subset of size $n + k$ from $[d]$, we employ the following efficient method:  
sample $m = 4(n + k) \ll d$ elements independently and uniformly from $[d]$, and remove duplicates.  
If the resulting set contains at least $n + k$ distinct elements, we return the first $n + k$ of them.  
Otherwise, there are two options:
\begin{itemize}[leftmargin=0.4cm, label=$\circ$]
    \item \textit{Abort and output a fixed sparse histogram.} 
    Terminate \cref{algo:tail-item-padding} early and return a fixed sparse histogram, 
    e.g., one where elements $1$ through $n$ each have count $1$.
    Although this may slightly degrade utility, it does not compromise privacy, since sampling $m$ independent elements from $[d]$ is independent of the input dataset.  
    We show that this increases the failure probability of the utility guarantee by at most an additive $\sqrt{n + k} \cdot \exp(- (n + k) / 16)$.
    \item \textit{Repeat the sampling procedure until success.}  
    This approach maintains both correctness and privacy, as the repetition process reveals nothing about the underlying histogram $\hist$, nor about the size or contents of the privacy blanket.  
    However, it deviates from our goal of a deterministic linear-time algorithm, yielding instead only expected linear-time performance.
\end{itemize}

\noindent
To bound the failure probability of this sampling procedure, we apply the standard
balls-and-bins model~\citep{MU05}: we throw $m$ balls into $d$ bins and analyze
the number of non-empty bins.

Let $X_i$ denote the number of balls in bin $i$, for each $i \in [d]$, so that
$\sum_{i \in [d]} X_i = m$.  
Let $S_X = \sum_{i \in [d]} \indicator{X_i > 0}$ be the number of non-empty bins.
We aim to upper bound the probability $\P{S_X \le n + k}$.

To analyze this, we apply the Poisson approximation method~\citep{MU05}.  
Let $Y_1, \ldots, Y_d$ be independent Poisson random variables (\cref{def:poisson})
with mean $m/d$, i.e., 
$
    \P{Y_i = t} = e^{-m/d} \cdot \frac{(m/d)^t}{t!}, \quad \forall t \in \N.
$  
Define $S_Y = \sum_{i \in [d]} \indicator{Y_i > 0}$ analogously.  
The Poisson approximation theorem (\cref{thm:poisson-approximation}) implies that
\[
    \P{ S_X \le n + k } \le e \sqrt{m} \cdot \P{ S_Y \le n + k } .
\]
Thus, it suffices to bound $\P{S_Y \le n + k}$.
Since $\P{Y_i > 0} = 1 - e^{-m/d}$, we have
$
    \E{S_Y} = d \cdot \paren{1 - e^{-m/d}} \ge d \cdot \frac{m}{2d} = m/2,
$ 
where the inequality uses $1 - e^{-x} \ge x/2$ for $0 \le x \le 1$, applicable whenever
$m \le d$.

Via the Poisson Chernoff bound (\cref{thm:poisson-tails}),
\begin{equation*}
    \P{ S_Y \le n + k }
        \le \P{ S_Y \le \frac{1}{2} \E{S_Y} }
        \le e^{ - \frac{1}{8} \cdot \E{S_Y} }
        \le e^{ - \frac{m}{16} } .
\end{equation*}

\noindent
\subsubsection*{Dictionary Operations.}
Implementing \cref{algo:tail-item-padding} requires standard dictionary operations—
inserting a key–value pair, deleting a pair, updating the value associated with a key, and
checking whether a key is present.
These operations arise throughout the algorithm:
during the histogram construction step (Line~\ref{line:histogram-construction}),
during the blanket sampling and padding steps
(Lines~\ref{line:blanket-sampling} and~\ref{line:item-padding}),
and when reading or updating values stored in
$\hist$, $\intermediateHist$, and $\noisyhist$.

To support these operations, we use the hashing-based dictionary (Backyard Cuckoo Hashing) of \citet{arbitman2010backyard}.
Deleting a pair, updating a value, and checking whether a key is present all run in deterministic constant time per operation.
Insertions require more care, as they may not always complete in constant time.

The key observation is that in Algorithm~\ref{algo:tail-item-padding}, there are $O(n + k)$ intersections in total.
By \citet{arbitman2010backyard}, for any constant $c$, the dictionary performs all these intersections in total time $O(n + k)$, with failure probability at most $(n + k)^{-c}$.

A natural approach is to enforce a deterministic time bound $T \in O(n + k)$, so that the combined dictionary operations exceed this bound with probability at most $(n + k)^{-c}$.
However, this creates a subtle issue for differential privacy: two neighboring datasets $\DataSet$ and $\DataSet'$ may have different failure probabilities, potentially revealing information about their difference.  
A careful strategy is therefore required to control the impact of these failures on the overall privacy guarantee.

To handle potential failures, we adopt the following padding procedure, ensuring that neighboring datasets have similar failure probabilities with the hash dictionary:
\begin{itemize}[label=$\circ$, leftmargin=6mm]
    \item Flip a coin with probability $\frac{1}{e^\eps - 1} \cdot (n + k)^{-c}$. If heads, output a uniform random histogram.
    \item Otherwise, run \cref{algo:tail-item-padding}. If all dictionary operations combined exceed time $T$, halt \cref{algo:tail-item-padding} and output a uniform random histogram $\noisyhist$.
\end{itemize}

It remains to analyze its impact on the privacy guarantee.
The output distribution of Algorithm~\ref{algo:tail-item-padding} ($\ourAlgo$) on input dataset $\DataSet$, denoted $\bar{\mu}$, can be decomposed into two components: the distribution conditioned on the dictionary operations completing within the enforced time bound, denoted $\mu$, and a uniform distribution over all possible outputs, denoted $\nu$.  
There exist weights $w_1, w_2 \in [0,1]$ such that  
\[
    (1)\;\bar{\mu} = w_1 \mu + w_2 \nu, \; 
    (2)\;w_1 + w_2 = 1, \text{ and } \; 
    (3)\;
    \frac{1}{e^\eps - 1} \cdot (n + k)^{-c} \le w_2 \le \frac{e^\eps}{e^\eps - 1} \cdot (n + k)^{-c}.
\]

Similarly, for a neighboring dataset $\DataSet'$, let $\mu'$ be the output distribution of $\ourAlgo$ when the dictionary operations do not exceed the time bound.  
There exist weights $w_1', w_2' \in [0,1]$ such that  
\[
    (1) \bar{\mu}' = w_1' \mu' + w_2' \nu,\;
    (2) w_1' + w_2' = 1, \text{ and }\; 
    (3)\,
    \frac{1}{e^\eps - 1} \cdot (n + k)^{-c} \le w_2' \le \frac{e^\eps}{e^\eps - 1} \cdot (n + k)^{-c}.
\]
Therefore, 
$
    \frac{w_2}{w_2'} \le e^\eps, \, 
    \text{ and  } \, 
    \frac{w_1}{w_1'} \le \frac{1}{
        {1 - \frac{e^\eps}{e^\eps - 1} \cdot (n + k)^{-c}}
    }.
$
Further, based on the privacy analysis, $\mu$ and $\mu'$ are $\eps$-indistinguishable. 
For each measurable set $E$ in the range of $\ourAlgo$, it holds that
\begin{align*}
    \frac{\bar{\mu}(E)}{\bar{\mu}'(E)}
    &=
    \frac{w_1 \, \mu(E) + w_2 \, \nu(E)}{w_1' \, \mu'(E) + w_2' \, \nu(E)} 
    \le 
    \max \SET{
        \frac{w_1 \, \mu(E)}{w_1' \, \mu'(E)},
        \frac{w_2 \, \nu(E)}{w_2' \, \nu(E)}
    } 
    \le
    \frac{e^\eps}{1 - \frac{e^\eps}{e^\eps - 1} \cdot (n + k)^{-c}}
    \le e^{\eps + \frac{2}{n^2}}.
\end{align*}
where the last inequality holds since, for a sufficiently large constant $c$, 
the term 
$\frac{e^\eps}{e^\eps - 1} \cdot (n + k)^{-c} \le \frac{1}{n^2}$, 
so that 
$1 - \frac{e^\eps}{e^\eps - 1} \cdot (n + k)^{-c} \ge e^{- 2 / n^2}$.
Finally, under the assumption that $1/\eps \le n$, we have $e^{\eps + \frac{2}{n^2}} \le e^{\eps + \frac{2 \eps}{n}}$.

\clearpage
\bibliographystyle{IEEEtranSN}
\bibliography{reference}

\begin{thebibliography}{52}
\providecommand{\natexlab}[1]{#1}
\providecommand{\url}[1]{#1}
\csname url@samestyle\endcsname
\providecommand{\newblock}{\relax}
\providecommand{\bibinfo}[2]{#2}
\providecommand{\BIBentrySTDinterwordspacing}{\spaceskip=0pt\relax}
\providecommand{\BIBentryALTinterwordstretchfactor}{4}
\providecommand{\BIBentryALTinterwordspacing}{\spaceskip=\fontdimen2\font plus
\BIBentryALTinterwordstretchfactor\fontdimen3\font minus \fontdimen4\font\relax}
\providecommand{\BIBforeignlanguage}[2]{{%
\expandafter\ifx\csname l@#1\endcsname\relax
\typeout{** WARNING: IEEEtranSN.bst: No hyphenation pattern has been}%
\typeout{** loaded for the language `#1'. Using the pattern for}%
\typeout{** the default language instead.}%
\else
\language=\csname l@#1\endcsname
\fi
#2}}
\providecommand{\BIBdecl}{\relax}
\BIBdecl

\bibitem[Ajtai et~al.(1983)Ajtai, Koml{\'{o}}s, and Szemer{\'{e}}di]{AjtaiKS83}
\BIBentryALTinterwordspacing
M.~Ajtai, J.~Koml{\'{o}}s, and E.~Szemer{\'{e}}di, ``An o(n log n) sorting network,'' in \emph{Proceedings of the 15th Annual {ACM} Symposium on Theory of Computing, 25-27 April, 1983, Boston, Massachusetts, {USA}}, D.~S. Johnson, R.~Fagin, M.~L. Fredman, D.~Harel, R.~M. Karp, N.~A. Lynch, C.~H. Papadimitriou, R.~L. Rivest, W.~L. Ruzzo, and J.~I. Seiferas, Eds.\hskip 1em plus 0.5em minus 0.4em\relax {ACM}, 1983, pp. 1--9. [Online]. Available: \url{https://doi.org/10.1145/800061.808726}
\BIBentrySTDinterwordspacing

\bibitem[Anderson et~al.(2024)Anderson, Chase, Durak, Laine, and Weng]{AndersonCDLW24}
E.~Anderson, M.~Chase, F.~B. Durak, K.~Laine, and C.~Weng, ``Precio: Private aggregate measurement via oblivious shuffling,'' in \emph{Proceedings of the 2024 on {ACM} {SIGSAC} Conference on Computer and Communications Security, {CCS} 2024, Salt Lake City, UT, USA, October 14-18, 2024}.\hskip 1em plus 0.5em minus 0.4em\relax {ACM}, 2024, pp. 1819--1833.

\bibitem[Arbitman et~al.(2010)Arbitman, Naor, and Segev]{arbitman2010backyard}
Y.~Arbitman, M.~Naor, and G.~Segev, ``Backyard cuckoo hashing: Constant worst-case operations with a succinct representation,'' in \emph{2010 IEEE 51st Annual Symposium on Foundations of Computer Science}, 2010, pp. 787--796.

\bibitem[Balcer and Vadhan(2019)]{BalcerV19}
\BIBentryALTinterwordspacing
V.~Balcer and S.~P. Vadhan, ``Differential privacy on finite computers,'' \emph{J. Priv. Confidentiality}, vol.~9, no.~2, 2019. [Online]. Available: \url{https://doi.org/10.29012/jpc.679}
\BIBentrySTDinterwordspacing

\bibitem[Balle et~al.(2019)Balle, Bell, Gasc{\'{o}}n, and Nissim]{BalleBGN19}
B.~Balle, J.~Bell, A.~Gasc{\'{o}}n, and K.~Nissim, ``The privacy blanket of the shuffle model,'' in \emph{Advances in Cryptology - {CRYPTO} 2019 - 39th Annual International Cryptology Conference, Santa Barbara, CA, USA, August 18-22, 2019, Proceedings, Part {II}}, ser. Lecture Notes in Computer Science, A.~Boldyreva and D.~Micciancio, Eds., vol. 11693.\hskip 1em plus 0.5em minus 0.4em\relax Springer, 2019, pp. 638--667.

\bibitem[Bassily and Smith(2015)]{BS15}
R.~Bassily and A.~D. Smith, ``Local, private, efficient protocols for succinct histograms,'' in \emph{Proceedings of the Forty-Seventh Annual {ACM} on Symposium on Theory of Computing, {STOC} 2015, Portland, OR, USA, June 14-17, 2015}, R.~A. Servedio and R.~Rubinfeld, Eds.\hskip 1em plus 0.5em minus 0.4em\relax {ACM}, 2015, pp. 127--135.

\bibitem[Beimel et~al.(2014)Beimel, Brenner, Kasiviswanathan, and Nissim]{BeimelBKN14}
\BIBentryALTinterwordspacing
A.~Beimel, H.~Brenner, S.~P. Kasiviswanathan, and K.~Nissim, ``Bounds on the sample complexity for private learning and private data release,'' \emph{Mach. Learn.}, vol.~94, no.~3, pp. 401--437, 2014. [Online]. Available: \url{https://doi.org/10.1007/s10994-013-5404-1}
\BIBentrySTDinterwordspacing

\bibitem[Bell et~al.(2022)Bell, Gasc{\'{o}}n, Ghazi, Kumar, Manurangsi, Raykova, and Schoppmann]{GG0M0S22}
\BIBentryALTinterwordspacing
J.~Bell, A.~Gasc{\'{o}}n, B.~Ghazi, R.~Kumar, P.~Manurangsi, M.~Raykova, and P.~Schoppmann, ``Distributed, private, sparse histograms in the two-server model,'' in \emph{Proceedings of the 2022 {ACM} {SIGSAC} Conference on Computer and Communications Security, {CCS} 2022, Los Angeles, CA, USA, November 7-11, 2022}, H.~Yin, A.~Stavrou, C.~Cremers, and E.~Shi, Eds.\hskip 1em plus 0.5em minus 0.4em\relax {ACM}, 2022, pp. 307--321. [Online]. Available: \url{https://doi.org/10.1145/3548606.3559383}
\BIBentrySTDinterwordspacing

\bibitem[Ben{-}Or et~al.(1988)Ben{-}Or, Goldwasser, and Wigderson]{BGW88}
\BIBentryALTinterwordspacing
M.~Ben{-}Or, S.~Goldwasser, and A.~Wigderson, ``Completeness theorems for non-cryptographic fault-tolerant distributed computation (extended abstract),'' in \emph{Proceedings of the 20th Annual {ACM} Symposium on Theory of Computing, May 2-4, 1988, Chicago, Illinois, {USA}}, J.~Simon, Ed.\hskip 1em plus 0.5em minus 0.4em\relax {ACM}, 1988, pp. 1--10. [Online]. Available: \url{https://doi.org/10.1145/62212.62213}
\BIBentrySTDinterwordspacing

\bibitem[Bittau et~al.(2017)Bittau, Erlingsson, Maniatis, Mironov, Raghunathan, Lie, Rudominer, Kode, Tinn{\'{e}}s, and Seefeld]{BittauEMMRLRKTS17}
\BIBentryALTinterwordspacing
A.~Bittau, {\'{U}}.~Erlingsson, P.~Maniatis, I.~Mironov, A.~Raghunathan, D.~Lie, M.~Rudominer, U.~Kode, J.~Tinn{\'{e}}s, and B.~Seefeld, ``Prochlo: Strong privacy for analytics in the crowd,'' in \emph{Proceedings of the 26th Symposium on Operating Systems Principles, Shanghai, China, October 28-31, 2017}.\hskip 1em plus 0.5em minus 0.4em\relax {ACM}, 2017, pp. 441--459. [Online]. Available: \url{https://doi.org/10.1145/3132747.3132769}
\BIBentrySTDinterwordspacing

\bibitem[Brent and Zimmermann(2010)]{Brent_Zimmermann_2010}
\BIBentryALTinterwordspacing
R.~P. Brent and P.~Zimmermann, \emph{Modern Computer Arithmetic}, ser. Cambridge Monographs on Applied and Computational Mathematics.\hskip 1em plus 0.5em minus 0.4em\relax Cambridge University Press, 2010. [Online]. Available: \url{https://doi.org/10.1017/CBO9780511921698}
\BIBentrySTDinterwordspacing

\bibitem[Bun et~al.(2019{\natexlab{b}})Bun, Nelson, and Stemmer]{BNS19}
M.~Bun, J.~Nelson, and U.~Stemmer, ``Heavy hitters and the structure of local privacy,'' \emph{{ACM} Trans. Algorithms}, vol.~15, no.~4, pp. 51:1--51:40, 2019.

\bibitem[Bun et~al.(2019{\natexlab{a}})Bun, Nissim, and Stemmer]{BunNS19}
\BIBentryALTinterwordspacing
M.~Bun, K.~Nissim, and U.~Stemmer, ``Simultaneous private learning of multiple concepts,'' \emph{J. Mach. Learn. Res.}, vol.~20, pp. 94:1--94:34, 2019. [Online]. Available: \url{https://jmlr.org/papers/v20/18-549.html}
\BIBentrySTDinterwordspacing

\bibitem[Canonne et~al.(2020)Canonne, Kamath, and Steinke]{Canonne0S20}
\BIBentryALTinterwordspacing
C.~L. Canonne, G.~Kamath, and T.~Steinke, ``The discrete gaussian for differential privacy,'' in \emph{Advances in Neural Information Processing Systems 33: Annual Conference on Neural Information Processing Systems 2020, NeurIPS 2020, December 6-12, 2020, virtual}, H.~Larochelle, M.~Ranzato, R.~Hadsell, M.~Balcan, and H.~Lin, Eds., 2020. [Online]. Available: \url{https://proceedings.neurips.cc/paper/2020/hash/b53b3a3d6ab90ce0268229151c9bde11-Abstract.html}
\BIBentrySTDinterwordspacing

\bibitem[Champion et~al.(2019)Champion, Shelat, and Ullman]{ChampionSU19}
\BIBentryALTinterwordspacing
J.~Champion, A.~Shelat, and J.~R. Ullman, ``Securely sampling biased coins with applications to differential privacy,'' in \emph{Proceedings of the 2019 {ACM} {SIGSAC} Conference on Computer and Communications Security, {CCS} 2019, London, UK, November 11-15, 2019}, L.~Cavallaro, J.~Kinder, X.~Wang, and J.~Katz, Eds.\hskip 1em plus 0.5em minus 0.4em\relax {ACM}, 2019, pp. 603--614. [Online]. Available: \url{https://doi.org/10.1145/3319535.3354256}
\BIBentrySTDinterwordspacing

\bibitem[Cheu et~al.(2019)Cheu, Smith, Ullman, Zeber, and Zhilyaev]{CheuSUZZ19}
A.~Cheu, A.~D. Smith, J.~R. Ullman, D.~Zeber, and M.~Zhilyaev, ``Distributed differential privacy via shuffling,'' in \emph{Advances in Cryptology - {EUROCRYPT} 2019 - 38th Annual International Conference on the Theory and Applications of Cryptographic Techniques, Darmstadt, Germany, May 19-23, 2019, Proceedings, Part {I}}, ser. Lecture Notes in Computer Science, Y.~Ishai and V.~Rijmen, Eds., vol. 11476.\hskip 1em plus 0.5em minus 0.4em\relax Springer, 2019, pp. 375--403.

\bibitem[Cormode et~al.(2012)Cormode, Procopiuc, Srivastava, and Tran]{CormodePST12}
\BIBentryALTinterwordspacing
G.~Cormode, C.~M. Procopiuc, D.~Srivastava, and T.~T.~L. Tran, ``Differentially private summaries for sparse data,'' in \emph{15th International Conference on Database Theory, {ICDT} '12, Berlin, Germany, March 26-29, 2012}, A.~Deutsch, Ed.\hskip 1em plus 0.5em minus 0.4em\relax {ACM}, 2012, pp. 299--311. [Online]. Available: \url{https://doi.org/10.1145/2274576.2274608}
\BIBentrySTDinterwordspacing

\bibitem[Corrigan{-}Gibbs and Boneh(2017)]{Corrigan-GibbsB17}
H.~Corrigan{-}Gibbs and D.~Boneh, ``Prio: Private, robust, and scalable computation of aggregate statistics,'' in \emph{14th {USENIX} Symposium on Networked Systems Design and Implementation, {NSDI} 2017, Boston, MA, USA, March 27-29, 2017}, A.~Akella and J.~Howell, Eds.\hskip 1em plus 0.5em minus 0.4em\relax {USENIX} Association, 2017, pp. 259--282.

\bibitem[Dov et~al.(2023)Dov, David, Naor, and Tzalik]{DovDNT23}
\BIBentryALTinterwordspacing
Y.~B. Dov, L.~David, M.~Naor, and E.~Tzalik, ``Resistance to timing attacks for sampling and privacy preserving schemes,'' in \emph{4th Symposium on Foundations of Responsible Computing, {FORC} 2023, June 7-9, 2023, Stanford University, California, {USA}}, ser. LIPIcs, K.~Talwar, Ed., vol. 256.\hskip 1em plus 0.5em minus 0.4em\relax Schloss Dagstuhl - Leibniz-Zentrum f{\"{u}}r Informatik, 2023, pp. 11:1--11:23. [Online]. Available: \url{https://doi.org/10.4230/LIPIcs.FORC.2023.11}
\BIBentrySTDinterwordspacing

\bibitem[Dwork(2006)]{Dwork06}
C.~Dwork, ``Differential privacy,'' in \emph{Automata, Languages and Programming, 33rd International Colloquium, {ICALP} 2006, Venice, Italy, July 10-14, 2006, Proceedings, Part {II}}, ser. Lecture Notes in Computer Science, M.~Bugliesi, B.~Preneel, V.~Sassone, and I.~Wegener, Eds., vol. 4052.\hskip 1em plus 0.5em minus 0.4em\relax Springer, 2006, pp. 1--12.

\bibitem[Dwork and Roth(2014)]{DR14}
C.~Dwork and A.~Roth, ``The algorithmic foundations of differential privacy,'' \emph{Found. Trends Theor. Comput. Sci.}, vol.~9, no. 3-4, pp. 211--407, 2014.

\bibitem[Dwork et~al.(2006{\natexlab{b}})Dwork, Kenthapadi, McSherry, Mironov, and Naor]{DworkKMMN06}
\BIBentryALTinterwordspacing
C.~Dwork, K.~Kenthapadi, F.~McSherry, I.~Mironov, and M.~Naor, ``Our data, ourselves: Privacy via distributed noise generation,'' in \emph{Advances in Cryptology - {EUROCRYPT} 2006, 25th Annual International Conference on the Theory and Applications of Cryptographic Techniques, St. Petersburg, Russia, May 28 - June 1, 2006, Proceedings}, ser. Lecture Notes in Computer Science, S.~Vaudenay, Ed., vol. 4004.\hskip 1em plus 0.5em minus 0.4em\relax Springer, 2006, pp. 486--503. [Online]. Available: \url{https://doi.org/10.1007/11761679\_29}
\BIBentrySTDinterwordspacing

\bibitem[Dwork et~al.(2006{\natexlab{a}})Dwork, McSherry, Nissim, and Smith]{DworkMNS06}
C.~Dwork, F.~McSherry, K.~Nissim, and A.~D. Smith, ``Calibrating noise to sensitivity in private data analysis,'' in \emph{Theory of Cryptography, Third Theory of Cryptography Conference, {TCC} 2006, New York, NY, USA, March 4-7, 2006, Proceedings}, ser. Lecture Notes in Computer Science, S.~Halevi and T.~Rabin, Eds., vol. 3876.\hskip 1em plus 0.5em minus 0.4em\relax Springer, 2006, pp. 265--284.

\bibitem[Fanti et~al.(2016)Fanti, Pihur, and Erlingsson]{FPE16}
G.~C. Fanti, V.~Pihur, and {\'{U}}.~Erlingsson, ``Building a {RAPPOR} with the unknown: Privacy-preserving learning of associations and data dictionaries,'' \emph{Proc. Priv. Enhancing Technol.}, vol. 2016, no.~3, pp. 41--61, 2016.

\bibitem[Franzese et~al.(2025)Franzese, Fang, Garg, Jha, Papernot, Wang, and Dziedzic]{franzese2025secure}
\BIBentryALTinterwordspacing
O.~Franzese, C.~Fang, R.~Garg, S.~Jha, N.~Papernot, X.~Wang, and A.~Dziedzic, ``Secure noise sampling for differentially private collaborative learning,'' Cryptology {ePrint} Archive, Paper 2025/1025, 2025. [Online]. Available: \url{https://eprint.iacr.org/2025/1025}
\BIBentrySTDinterwordspacing

\bibitem[Ghazi et~al.(2021)Ghazi, Golowich, Kumar, Pagh, and Velingker]{GhaziG0PV21}
\BIBentryALTinterwordspacing
B.~Ghazi, N.~Golowich, R.~Kumar, R.~Pagh, and A.~Velingker, ``On the power of multiple anonymous messages: Frequency estimation and selection in the shuffle model of differential privacy,'' in \emph{Advances in Cryptology - {EUROCRYPT} 2021 - 40th Annual International Conference on the Theory and Applications of Cryptographic Techniques, Zagreb, Croatia, October 17-21, 2021, Proceedings, Part {III}}, ser. Lecture Notes in Computer Science, A.~Canteaut and F.~Standaert, Eds., vol. 12698.\hskip 1em plus 0.5em minus 0.4em\relax Springer, 2021, pp. 463--488. [Online]. Available: \url{https://doi.org/10.1007/978-3-030-77883-5\_16}
\BIBentrySTDinterwordspacing

\bibitem[Ghosh et~al.(2009)Ghosh, Roughgarden, and Sundararajan]{GhoshRS09}
A.~Ghosh, T.~Roughgarden, and M.~Sundararajan, ``Universally utility-maximizing privacy mechanisms,'' in \emph{Proceedings of the 41st Annual {ACM} Symposium on Theory of Computing, {STOC} 2009, Bethesda, MD, USA, May 31 - June 2, 2009}, M.~Mitzenmacher, Ed.\hskip 1em plus 0.5em minus 0.4em\relax {ACM}, 2009, pp. 351--360.

\bibitem[Goldreich et~al.(1987)Goldreich, Micali, and Wigderson]{GMW87}
\BIBentryALTinterwordspacing
O.~Goldreich, S.~Micali, and A.~Wigderson, ``How to play any mental game or {A} completeness theorem for protocols with honest majority,'' in \emph{Proceedings of the 19th Annual {ACM} Symposium on Theory of Computing, 1987, New York, New York, {USA}}, A.~V. Aho, Ed.\hskip 1em plus 0.5em minus 0.4em\relax {ACM}, 1987, pp. 218--229. [Online]. Available: \url{https://doi.org/10.1145/28395.28420}
\BIBentrySTDinterwordspacing

\bibitem[Google(2020)]{google_noise}
\BIBentryALTinterwordspacing
Google, ``Secure noise generation,'' 2020. [Online]. Available: \url{https://github.com/google/differential-privacy/blob/master/common_docs/Secure_Noise_Generation.pdf}
\BIBentrySTDinterwordspacing

\bibitem[Hardt and Talwar(2010)]{HardtT10}
\BIBentryALTinterwordspacing
M.~Hardt and K.~Talwar, ``On the geometry of differential privacy,'' in \emph{Proceedings of the 42nd {ACM} Symposium on Theory of Computing, {STOC} 2010, Cambridge, Massachusetts, USA, 5-8 June 2010}, L.~J. Schulman, Ed.\hskip 1em plus 0.5em minus 0.4em\relax {ACM}, 2010, pp. 705--714. [Online]. Available: \url{https://doi.org/10.1145/1806689.1806786}
\BIBentrySTDinterwordspacing

\bibitem[Harris et~al.(2020)Harris, Millman, van~der Walt, Gommers, Virtanen, Cournapeau, Wieser, Taylor, Berg, Smith, Kern, Picus, Hoyer, van Kerkwijk, Brett, Haldane, del R{\'{i}}o, Wiebe, Peterson, G{\'{e}}rard-Marchant, Sheppard, Reddy, Weckesser, Abbasi, Gohlke, and Oliphant]{harris2020array}
\BIBentryALTinterwordspacing
C.~R. Harris, K.~J. Millman, S.~J. van~der Walt, R.~Gommers, P.~Virtanen, D.~Cournapeau, E.~Wieser, J.~Taylor, S.~Berg, N.~J. Smith, R.~Kern, M.~Picus, S.~Hoyer, M.~H. van Kerkwijk, M.~Brett, A.~Haldane, J.~F. del R{\'{i}}o, M.~Wiebe, P.~Peterson, P.~G{\'{e}}rard-Marchant, K.~Sheppard, T.~Reddy, W.~Weckesser, H.~Abbasi, C.~Gohlke, and T.~E. Oliphant, ``Array programming with {NumPy},'' \emph{Nature}, vol. 585, no. 7825, pp. 357--362, Sep. 2020. [Online]. Available: \url{https://doi.org/10.1038/s41586-020-2649-2}
\BIBentrySTDinterwordspacing

\bibitem[Jin et~al.(2022)Jin, McMurtry, Rubinstein, and Ohrimenko]{JinMRO22}
\BIBentryALTinterwordspacing
J.~Jin, E.~McMurtry, B.~I.~P. Rubinstein, and O.~Ohrimenko, ``Are we there yet? timing and floating-point attacks on differential privacy systems,'' in \emph{43rd {IEEE} Symposium on Security and Privacy, {SP} 2022, San Francisco, CA, USA, May 22-26, 2022}.\hskip 1em plus 0.5em minus 0.4em\relax {IEEE}, 2022, pp. 473--488. [Online]. Available: \url{https://doi.org/10.1109/SP46214.2022.9833672}
\BIBentrySTDinterwordspacing

\bibitem[Keller et~al.(2024)Keller, M{\"{o}}llering, Schneider, Tkachenko, and Zhao]{KellerM0TZ24}
\BIBentryALTinterwordspacing
H.~Keller, H.~M{\"{o}}llering, T.~Schneider, O.~Tkachenko, and L.~Zhao, ``Secure noise sampling for {DP} in {MPC} with finite precision,'' in \emph{Proceedings of the 19th International Conference on Availability, Reliability and Security, {ARES} 2024, Vienna, Austria, 30 July 2024 - 2 August 2024}.\hskip 1em plus 0.5em minus 0.4em\relax {ACM}, 2024, pp. 25:1--25:12. [Online]. Available: \url{https://doi.org/10.1145/3664476.3664490}
\BIBentrySTDinterwordspacing

\bibitem[Keller(2020)]{mp-spdz}
\BIBentryALTinterwordspacing
M.~Keller, ``{MP-SPDZ}: A versatile framework for multi-party computation,'' in \emph{Proceedings of the 2020 ACM SIGSAC Conference on Computer and Communications Security}, 2020. [Online]. Available: \url{https://doi.org/10.1145/3372297.3417872}
\BIBentrySTDinterwordspacing

\bibitem[Knott et~al.(2021)Knott, Venkataraman, Hannun, Sengupta, Ibrahim, and van~der Maaten]{KnottVHSIM21}
\BIBentryALTinterwordspacing
B.~Knott, S.~Venkataraman, A.~Y. Hannun, S.~Sengupta, M.~Ibrahim, and L.~van~der Maaten, ``Crypten: Secure multi-party computation meets machine learning,'' in \emph{Advances in Neural Information Processing Systems 34: Annual Conference on Neural Information Processing Systems 2021, NeurIPS 2021, December 6-14, 2021, virtual}, M.~Ranzato, A.~Beygelzimer, Y.~N. Dauphin, P.~Liang, and J.~W. Vaughan, Eds., 2021, pp. 4961--4973. [Online]. Available: \url{https://proceedings.neurips.cc/paper/2021/hash/2754518221cfbc8d25c13a06a4cb8421-Abstract.html}
\BIBentrySTDinterwordspacing

\bibitem[Knuth and Yao(1976)]{Knuth1976TheCO}
\BIBentryALTinterwordspacing
D.~E. Knuth and A.~C.-C. Yao, ``The complexity of nonuniform random number generation,'' in \emph{Algorithm and Complexity, New Directions and Results}, 1976. [Online]. Available: \url{https://api.semanticscholar.org/CorpusID:115400979}
\BIBentrySTDinterwordspacing

\bibitem[Korolova et~al.(2009)Korolova, Kenthapadi, Mishra, and Ntoulas]{KorolovaKMN09}
\BIBentryALTinterwordspacing
A.~Korolova, K.~Kenthapadi, N.~Mishra, and A.~Ntoulas, ``Releasing search queries and clicks privately,'' in \emph{Proceedings of the 18th International Conference on World Wide Web, {WWW} 2009, Madrid, Spain, April 20-24, 2009}, J.~Quemada, G.~Le{\'{o}}n, Y.~S. Maarek, and W.~Nejdl, Eds.\hskip 1em plus 0.5em minus 0.4em\relax {ACM}, 2009, pp. 171--180. [Online]. Available: \url{https://doi.org/10.1145/1526709.1526733}
\BIBentrySTDinterwordspacing

\bibitem[Lebeda and Tetek(2024)]{LebedaT24}
\BIBentryALTinterwordspacing
C.~J. Lebeda and J.~Tetek, ``Better differentially private approximate histograms and heavy hitters using the misra-gries sketch,'' \emph{{SIGMOD} Rec.}, vol.~53, no.~1, pp. 7--14, 2024. [Online]. Available: \url{https://doi.org/10.1145/3665252.3665255}
\BIBentrySTDinterwordspacing

\bibitem[Lebeda et~al.(2022)Lebeda, Aum{\"{u}}ller, and Pagh]{LebedaAP22}
\BIBentryALTinterwordspacing
C.~J. Lebeda, M.~Aum{\"{u}}ller, and R.~Pagh, ``Representing sparse vectors with differential privacy, low error, optimal space, and fast access,'' \emph{J. Priv. Confidentiality}, vol.~12, no.~2, 2022. [Online]. Available: \url{https://doi.org/10.29012/jpc.809}
\BIBentrySTDinterwordspacing

\bibitem[Liese and Vajda(2006)]{LieseV06}
\BIBentryALTinterwordspacing
F.~Liese and I.~Vajda, ``On divergences and informations in statistics and information theory,'' \emph{{IEEE} Trans. Inf. Theory}, vol.~52, no.~10, pp. 4394--4412, 2006. [Online]. Available: \url{https://doi.org/10.1109/TIT.2006.881731}
\BIBentrySTDinterwordspacing

\bibitem[Lolck and Pagh(2024)]{LolckP24}
\BIBentryALTinterwordspacing
D.~R. Lolck and R.~Pagh, ``Shannon meets gray: Noise-robust, low-sensitivity codes with applications in differential privacy,'' in \emph{Proceedings of the 2024 {ACM-SIAM} Symposium on Discrete Algorithms, {SODA} 2024, Alexandria, VA, USA, January 7-10, 2024}, D.~P. Woodruff, Ed.\hskip 1em plus 0.5em minus 0.4em\relax {SIAM}, 2024, pp. 1050--1066. [Online]. Available: \url{https://doi.org/10.1137/1.9781611977912.40}
\BIBentrySTDinterwordspacing

\bibitem[Mironov(2012)]{Mironov12}
\BIBentryALTinterwordspacing
I.~Mironov, ``On significance of the least significant bits for differential privacy,'' in \emph{the {ACM} Conference on Computer and Communications Security, CCS'12, Raleigh, NC, USA, October 16-18, 2012}, T.~Yu, G.~Danezis, and V.~D. Gligor, Eds.\hskip 1em plus 0.5em minus 0.4em\relax {ACM}, 2012, pp. 650--661. [Online]. Available: \url{https://doi.org/10.1145/2382196.2382264}
\BIBentrySTDinterwordspacing

\bibitem[Mitzenmacher and Upfal(2005)]{MU05}
M.~Mitzenmacher and E.~Upfal, \emph{Probability and Computing: Randomized Algorithms and Probabilistic Analysis}.\hskip 1em plus 0.5em minus 0.4em\relax Cambridge University Press, 2005.

\bibitem[Qiu and Yi(2025)]{DBLP:journals/corr/QiuY25}
\BIBentryALTinterwordspacing
Y.~Qiu and K.~Yi, ``Approximate {DBSCAN} under differential privacy,'' \emph{CoRR}, vol. abs/2508.08749, 2025. [Online]. Available: \url{https://doi.org/10.48550/arXiv.2508.08749}
\BIBentrySTDinterwordspacing

\bibitem[Ratliff and Vadhan(2024)]{RatliffV24}
\BIBentryALTinterwordspacing
Z.~Ratliff and S.~P. Vadhan, ``A framework for differential privacy against timing attacks,'' in \emph{Proceedings of the 2024 on {ACM} {SIGSAC} Conference on Computer and Communications Security, {CCS} 2024, Salt Lake City, UT, USA, October 14-18, 2024}, B.~Luo, X.~Liao, J.~Xu, E.~Kirda, and D.~Lie, Eds.\hskip 1em plus 0.5em minus 0.4em\relax {ACM}, 2024, pp. 3615--3629. [Online]. Available: \url{https://doi.org/10.1145/3658644.3690206}
\BIBentrySTDinterwordspacing

\bibitem[Ratliff and Vadhan(2025)]{ratliff2025securing}
------, ``Securing unbounded differential privacy against timing attacks,'' \emph{arXiv preprint \href{https://www.arxiv.org/abs/2506.07868}{arXiv:2506.07868}}, 2025.

\bibitem[Sendler(1975)]{sendler1975note}
\BIBentryALTinterwordspacing
W.~Sendler, ``{A Note on the Proof of the Zero-One Law of Blum and Pathak},'' \emph{The Annals of Probability}, vol.~3, no.~6, pp. 1055 -- 1058, 1975. [Online]. Available: \url{https://doi.org/10.1214/aop/1176996234}
\BIBentrySTDinterwordspacing

\bibitem[Steerneman(1983)]{steerneman1983total}
\BIBentryALTinterwordspacing
T.~Steerneman, ``On the total variation and hellinger distance between signed measures; an application to product measures,'' \emph{Proceedings of the American Mathematical Society}, vol.~88, no.~4, pp. 684--688, 1983. [Online]. Available: \url{http://www.jstor.org/stable/2045462}
\BIBentrySTDinterwordspacing

\bibitem[Walker(1977)]{Walker77}
\BIBentryALTinterwordspacing
A.~J. Walker, ``An efficient method for generating discrete random variables with general distributions,'' \emph{{ACM} Trans. Math. Softw.}, vol.~3, no.~3, pp. 253--256, 1977. [Online]. Available: \url{https://doi.org/10.1145/355744.355749}
\BIBentrySTDinterwordspacing

\bibitem[Wei et~al.(2023)Wei, Yu, Fan, Chen, and Wang]{WeiYFCW23}
\BIBentryALTinterwordspacing
C.~Wei, R.~Yu, Y.~Fan, W.~Chen, and T.~Wang, ``Securely sampling discrete gaussian noise for multi-party differential privacy,'' in \emph{Proceedings of the 2023 {ACM} {SIGSAC} Conference on Computer and Communications Security, {CCS} 2023, Copenhagen, Denmark, November 26-30, 2023}, W.~Meng, C.~D. Jensen, C.~Cremers, and E.~Kirda, Eds.\hskip 1em plus 0.5em minus 0.4em\relax {ACM}, 2023, pp. 2262--2276. [Online]. Available: \url{https://doi.org/10.1145/3576915.3616641}
\BIBentrySTDinterwordspacing

\bibitem[Wu and Wirth(2022)]{DBLP:conf/aistats/WuW22}
H.~Wu and A.~Wirth, ``Asymptotically optimal locally private heavy hitters via parameterized sketches,'' in \emph{International Conference on Artificial Intelligence and Statistics, {AISTATS} 2022, 28-30 March 2022, Virtual Event}, ser. Proceedings of Machine Learning Research, G.~Camps{-}Valls, F.~J.~R. Ruiz, and I.~Valera, Eds., vol. 151.\hskip 1em plus 0.5em minus 0.4em\relax {PMLR}, 2022, pp. 7766--7798.

\bibitem[Zahur and Evans(2013)]{ZahurE13}
\BIBentryALTinterwordspacing
S.~Zahur and D.~Evans, ``Circuit structures for improving efficiency of security and privacy tools,'' in \emph{2013 {IEEE} Symposium on Security and Privacy, {SP} 2013, Berkeley, CA, USA, May 19-22, 2013}.\hskip 1em plus 0.5em minus 0.4em\relax {IEEE} Computer Society, 2013, pp. 493--507. [Online]. Available: \url{https://doi.org/10.1109/SP.2013.40}
\BIBentrySTDinterwordspacing

\end{thebibliography}

\newpage

\appendix
\addcontentsline{toc}{section}{Appendix}
\setcounter{tocdepth}{2} %
\startcontents[appendix]
\printcontents[appendix]{l}{1}{\centering \textbf{\MakeUppercase{Contents of Appendix}}}

\newpage
\section{Additional Preliminaries}

This section collects additional definitions and technical tools used throughout the paper, including probabilistic inequalities (\cref{sec:Probabilities Inequalities}), the Poisson approximation technique (\cref{sec:The Poisson Approximation}), and background on the Alias method (\cref{sec:Alias-Method}).

\subsection{Probabilities Inequalities}
\label{sec:Probabilities Inequalities}

\begin{fact}[Basic Composition \citep{DR14}]
    \label{fact:basic-composition}
    Assume that $\cM_1: \cY_1 \rightarrow \cZ_1$ and $\cM_2: \cY_2 \rightarrow \cZ_2$ satisfy $\eps_1$-DP and $\eps_2$-DP, respectively,
    then $\cM=(\cM_1, \cM_2): \cY_1 \times \cY_2 \rightarrow \cZ_1 \times \cZ_2$ satisfies $(\eps_1 + \eps_2)$-DP.
\end{fact}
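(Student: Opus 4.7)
The plan is to reduce the joint privacy statement to a pointwise (density-level) comparison, use the independence of the internal randomness of $\cM_1$ and $\cM_2$ to factor the joint distribution, and then apply the two single-mechanism guarantees separately. Fix neighboring inputs $(\DataSet_1, \DataSet_2) \sim (\DataSet_1', \DataSet_2')$ in $\cY_1 \times \cY_2$; under the natural product neighboring relation one may assume, by a triangle-inequality argument, that the two pairs already differ in both coordinates (if they differ in only one, the second inequality becomes trivial and the bound $\max(\eps_1,\eps_2)$ follows, which is subsumed by $\eps_1+\eps_2$). Our goal is to show that for every measurable $E \subseteq \cZ_1 \times \cZ_2$,
\begin{equation*}
    \P{(\cM_1(\DataSet_1),\cM_2(\DataSet_2)) \in E} \le e^{\eps_1+\eps_2} \cdot \P{(\cM_1(\DataSet_1'),\cM_2(\DataSet_2')) \in E}.
\end{equation*}

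Next, I would exploit that $\cM_1$ and $\cM_2$ use independent random coins, so the joint law on $\cZ_1\times\cZ_2$ is a product measure. Concretely, for each $z_2 \in \cZ_2$ define the section $E_{z_2} \doteq \{z_1 : (z_1,z_2) \in E\}$, and write
\begin{equation*}
    \P{(\cM_1(\DataSet_1),\cM_2(\DataSet_2)) \in E}
    = \int_{\cZ_2} \P{\cM_1(\DataSet_1) \in E_{z_2}} \, d\mu_{\cM_2(\DataSet_2)}(z_2),
\end{equation*}
where $\mu_{\cM_2(\DataSet_2)}$ denotes the law of $\cM_2(\DataSet_2)$. This step is the only point where product-measure care is needed; for discrete output spaces (as in all uses in this paper) it is just a double sum, and for general measurable spaces it follows from Fubini applied to the product of the two marginal laws.

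From here the argument is essentially two successive applications of the single-mechanism DP definition. First, the $\eps_1$-DP guarantee of $\cM_1$ gives $\P{\cM_1(\DataSet_1)\in E_{z_2}} \le e^{\eps_1}\P{\cM_1(\DataSet_1')\in E_{z_2}}$ uniformly in $z_2$, so the integrand dominates the corresponding integrand with $\DataSet_1$ replaced by $\DataSet_1'$ up to a factor $e^{\eps_1}$. Second, the $\eps_2$-DP guarantee of $\cM_2$ gives a pointwise density comparison $d\mu_{\cM_2(\DataSet_2)}/d\mu_{\cM_2(\DataSet_2')} \le e^{\eps_2}$ (more precisely, $\mu_{\cM_2(\DataSet_2)}(F) \le e^{\eps_2}\mu_{\cM_2(\DataSet_2')}(F)$ for every measurable $F$), which lets us swap $\DataSet_2$ for $\DataSet_2'$ in the outer integral at the cost of a further $e^{\eps_2}$. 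Recombining yields the required factor $e^{\eps_1+\eps_2}$. The lower bound with $e^{-(\eps_1+\eps_2)}$ follows by symmetry, so $\cM$ is $(\eps_1+\eps_2)$-DP.

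I do not expect any real obstacle: the content is entirely standard, and the only mild technicality is justifying the product-measure factorization when $\cZ_1,\cZ_2$ are not discrete, which is handled cleanly by Fubini once one notes that $\cM_1$ and $\cM_2$ use independent randomness. Everything else is an algebraic inequality applied twice.
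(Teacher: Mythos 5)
Your proof is correct and is the standard argument for basic composition. The paper itself does not prove this fact: it simply cites \citep{DR14} and states it as a known result, so there is no paper proof to compare against. Your route — sectioning the target event, factoring the joint law by independence of the internal coins, applying $\eps_1$-DP uniformly in the second coordinate, and then applying $\eps_2$-DP to the outer measure — is exactly the textbook proof (cf.\ Theorem 3.14 in Dwork--Roth), and you handle the measure-theoretic technicality appropriately by replacing a pointwise Radon--Nikodym bound with the cleaner statement $\mu_{\cM_2(\DataSet_2)}(F) \le e^{\eps_2}\mu_{\cM_2(\DataSet_2')}(F)$ for all measurable $F$, from which the integral inequality follows by monotone approximation with simple functions.

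Two small points of precision worth tightening. First, the phrase ``by a triangle-inequality argument'' is misleading: you are not chaining intermediate datasets; you are simply observing that if the pairs differ in only one coordinate, the corresponding mechanism's output law is unchanged (factor $e^0$), and the bound degrades gracefully to $\eps_1$ or $\eps_2$. Since the neighboring relation on $\cY_1 \times \cY_2$ is not pinned down in the statement, it would be cleaner to say explicitly which relation you are using (the common choices are ``differs in exactly one coordinate'' — which would give the stronger $\max(\eps_1,\eps_2)$ bound — or ``each coordinate is either equal or neighboring,'' which is what makes the claimed $\eps_1+\eps_2$ tight). Second, $\cM_1$ and $\cM_2$ using ``independent randomness'' is an implicit assumption in the statement that you are right to make explicit, since the factorization $\Pr[(\cM_1(\DataSet_1),\cM_2(\DataSet_2))\in E] = \int \Pr[\cM_1(\DataSet_1)\in E_{z_2}]\,d\mu_{\cM_2(\DataSet_2)}(z_2)$ requires it; without independence the conclusion can fail.
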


\begin{fact}[Tail Bound of Discrete Laplace] 
    \label{fact:discrete-laplace-tail}
    Let $X \sim \DiscreteLapNoise{ e^{ - \eps } }$. 
    Then for each $t \in \N_+$, 
    \begin{equation} \label{eq:discrete-laplace-tail}
        \P{ \card{ X } \ge t } = \frac{ 2 \cdot e^{ - \eps \cdot t } }{ 1 + e^{ - \eps } }.
    \end{equation}
\end{fact}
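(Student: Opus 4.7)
The plan is to prove this tail bound by a direct computation using the symmetry of the discrete Laplace distribution and the geometric series formula. Since the probability mass function is explicitly given in \cref{def:discrete-laplace} as $\P{\DiscreteLapNoise{e^{-\eps}} = k} = \frac{1 - e^{-\eps}}{1 + e^{-\eps}} \cdot e^{-\eps \cdot \card{k}}$ for $k \in \Z$, the whole argument reduces to evaluating an infinite sum in closed form.

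First, I would exploit the symmetry $\P{X = k} = \P{X = -k}$ to write
\[
    \P{\card{X} \ge t}
    = 2 \cdot \sum_{k = t}^{\infty} \P{X = k}
    = \frac{2(1 - e^{-\eps})}{1 + e^{-\eps}} \cdot \sum_{k = t}^{\infty} e^{-\eps \cdot k},
\]
noting that for integer $t \ge 1$ the events $\{X \ge t\}$ and $\{X \le -t\}$ are disjoint, so the factor of $2$ is exact (the $k=0$ atom is not double counted). Next I would evaluate the geometric tail sum $\sum_{k = t}^{\infty} e^{-\eps k} = \frac{e^{-\eps t}}{1 - e^{-\eps}}$, which is valid because $e^{-\eps} \in (0,1)$ for $\eps \in \R_+$. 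Substituting back cancels the factor $1 - e^{-\eps}$ and yields exactly $\frac{2 \cdot e^{-\eps \cdot t}}{1 + e^{-\eps}}$, matching \cref{eq:discrete-laplace-tail}.

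There is no serious obstacle here: the only thing that requires a moment of care is confirming that the normalization constant $\frac{1 - e^{-\eps}}{1 + e^{-\eps}}$ is consistent with summing over all of $\Z$ (a quick sanity check: $\sum_{k \in \Z} e^{-\eps \card{k}} = 1 + 2 \cdot \frac{e^{-\eps}}{1 - e^{-\eps}} = \frac{1 + e^{-\eps}}{1 - e^{-\eps}}$, whose reciprocal is indeed the stated normalizer), and that the restriction $t \in \N_+$ is used to ensure the two tails $\{X \ge t\}$ and $\{X \le -t\}$ do not overlap at $k = 0$. Both verifications are immediate, so the proof is essentially a two-line computation.
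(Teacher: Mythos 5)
Your proof is correct and follows essentially the same route as the paper's: exploit symmetry to write $\P{\card{X}\ge t}$ as twice a one-sided geometric tail, evaluate the geometric series, and cancel the $1-e^{-\eps}$ factor. The added sanity check of the normalization constant and the explicit note that the two tails are disjoint for $t\ge 1$ are harmless elaborations of the same computation.
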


\begin{proof}[Proof of \cref{fact:discrete-laplace-tail}]
    For each $t \ge 1$,
    \begin{align}
        \P{\card{X} \ge t}
            &= 2 \cdot \frac{1 - e^{-\eps}}{1 + e^{-\eps}} \cdot \sum_{i = t}^\infty e^{-\eps \cdot i}
            = 2 \cdot \frac{1 - e^{-\eps}}{1 + e^{-\eps}} \cdot \frac{e^{-\eps \cdot t}}{1 - e^{-\eps}}
            = 2 \cdot \frac{e^{-\eps \cdot t}}{1 + e^{-\eps}}.
    \end{align}
\end{proof}

\begin{definition}[$f$-divergence \citep{LieseV06}]
    Let $\mu, \nu$ be probability measures on a measurable space $\PAREN{\cX, \cA}$, and let $\cF$ be the class of convex functions $f : \PAREN{0, \infty} \to \R$.  
    Suppose that $\mu$ and $\nu$ are dominated by a $\sigma$-finite measure $\lambda$ (e.g., $\lambda = \mu + \nu$), with Radon–Nikodym derivatives $p = \frac{d\mu}{d\lambda}$ and $q = \frac{d\nu}{d\lambda}$.
    For every $f \in \cF$, define the $f$-divergence between $\mu$ and $\nu$ as:
    \begin{equation}
        \fdivergence{\mu}{\nu} = \int_{pq > 0} f \PAREN{ \frac{p}{q} } \, d \nu + f(0) \cdot \nu(p = 0) + f^*(0) \cdot \mu(q = 0),
    \end{equation}
    where 
    \begin{align}
        f(0) \doteq \lim_{t \downarrow 0} f(t) \in (0, \infty], \quad 
        f^*(0) \doteq \lim_{t \rightarrow \infty} \frac{f(t)}{t}, \quad
        f(0) \doteq 0 \doteq 0, \quad 
        f^*(0) \doteq 0 \doteq 0. 
    \end{align}
\end{definition}

\noindent
It can be verified that \emph{the total variation distance} between $\mu$ and $\nu$ corresponds to the $f$-divergence with $f(t) = \card{t - 1}$ for all $t \in \PAREN{0, \infty}$.

\begin{definition}[Stochastic Kernel \citep{LieseV06}]
    Let $\PAREN{\cX, \cA}$ and $\PAREN{\cY, \cB}$ be measurable spaces. 
    A \emph{stochastic kernel} $K : \cX \times \cB \rightarrow \bracket{0, 1}$ is a family of probability measures $\set{ K ( \cdot \mid x ) : x \in \cX }$ on $\PAREN{\cY, \cB}$ such that for every $B \in \cB$, the function $x \mapsto K(B \mid x)$ is $\cA$-measurable.
    Given a probability measure $\mu$ on $\PAREN{\cX, \cA}$, the \emph{measure} $K \mu$ on $\PAREN{\cY, \cB}$ is defined by
    \begin{equation}
        (K \mu)(B) \doteq \int_{\cX} K(B \mid x) \, \mu(dx), \quad \forall B \in \cB.
    \end{equation}
\end{definition}

\begin{proposition}[Data Processing Inequality (see, e.g., Theorem 14 of \cite{LieseV06})]
    \label{prop:data-processing-inequality}
    Let $\PAREN{\cX, \cA}$ and $\PAREN{\cY, \cB}$ be measurable spaces, and let $\mu, \nu$ be probability measures on $\PAREN{\cX, \cA}$. 
    For every convex function $f \in \cF$ and every stochastic kernel $K : \cX \times \cB \rightarrow \bracket{0, 1}$,
    \begin{equation}
        \fdivergence{K \mu}{K \nu} \le \fdivergence{\mu}{\nu}.
    \end{equation}
\end{proposition}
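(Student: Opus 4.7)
The plan is to prove the data-processing inequality via the classical Jensen-inequality argument: write $\fdivergence{\mu}{\nu}$ as an expectation of $f$ applied to a likelihood ratio, observe that the kernel $K$ replaces that likelihood ratio by its conditional expectation under a suitable joint law, and then contract $f$ via Jensen.

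I would first fix a common $\sigma$-finite dominating measure $\lambda$ on $(\cX,\cA)$, e.g.\ $\lambda = \mu+\nu$, and set $p \doteq d\mu/d\lambda$, $q \doteq d\nu/d\lambda$. Since $\mu,\nu \ll \lambda$, one checks that $K\mu, K\nu \ll K\lambda$ on $(\cY,\cB)$; let $P \doteq d(K\mu)/d(K\lambda)$ and $Q \doteq d(K\nu)/d(K\lambda)$ denote the resulting Radon--Nikodym derivatives on $\cY$.

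The central identity is that the pushforward likelihood ratio is a conditional expectation of the source likelihood ratio. I would define the joint probability measure $\Pi$ on $(\cX \times \cY, \cA \otimes \cB)$ by $\Pi(A \times B) \doteq \int_A q(x)\, K(B \mid x)\, d\lambda(x)$; its $\cY$-marginal is exactly $K\nu$, and a Fubini computation shows $P(y)/Q(y) = \Expectation_\Pi\bigl[\,p(x)/q(x) \bigm| y\bigr]$ for $K\nu$-a.e.\ $y$ on $\{Q > 0\}$. Jensen's inequality for conditional expectations applied to the convex $f$ then gives $f\bigl(P(y)/Q(y)\bigr) \le \Expectation_\Pi\bigl[f\bigl(p(x)/q(x)\bigr) \bigm| y\bigr]$, and integrating against $K\nu$ bounds the $\{P,Q > 0\}$-contribution to $\fdivergence{K\mu}{K\nu}$ by the $\{p,q > 0\}$-contribution to $\fdivergence{\mu}{\nu}$.

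The main obstacle is handling the singular boundary terms $f(0) \cdot (K\nu)(P = 0)$ and $f^*(0) \cdot (K\mu)(Q = 0)$. I would establish the monotonicity bounds $(K\nu)(P = 0) \le \nu(p = 0)$ and $(K\mu)(Q = 0) \le \mu(q = 0)$, which capture the intuition that $K$ can only blur singular mass into the absolutely continuous part and never manufacture new singularity. For the first bound, $(K\mu)(\{P = 0\}) = 0$ forces $K(\{P = 0\} \mid x) = 0$ for $\mu$-a.e.\ $x$, so splitting $\int K(\{P = 0\} \mid x)\, d\nu(x)$ over $\{p > 0\}$ and $\{p = 0\}$ yields the claim; the second follows symmetrically. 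Combining these bounds with the Jensen estimate, and invoking the standard conventions $0 \cdot \infty = 0$ and nonnegativity of $f(0), f^*(0)$ arising from $f(1) = 0$ and convexity, assembles the full inequality $\fdivergence{K\mu}{K\nu} \le \fdivergence{\mu}{\nu}$.
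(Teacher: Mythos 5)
The paper cites this as Theorem~14 of Liese and Vajda without giving its own proof, so there is no internal argument to compare against. Assessing your proposal on its own: the Jensen-via-conditional-expectation strategy is the right direction, but the central identity $P(y)/Q(y)=\E{p(X)/q(X)\mid Y=y}$ is false when $\mu\not\ll\nu$, and your boundary-term monotonicity bounds do not close the resulting gap. The Fubini computation for your $\Pi$ actually gives, for every $B\in\cB$, $\int_B\E{p/q\mid Y}\,d(K\nu)=\int_{\{q>0\}}p(x)\,K(B\mid x)\,d\lambda(x)=(K\mu')(B)$ with $\mu'\doteq\mu|_{\{q>0\}}$, so $\E{p/q\mid Y}=d(K\mu')/d(K\nu)$, \emph{not} $P/Q$. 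When $\mu_0\doteq\mu|_{\{q=0\}}\neq 0$, the kernel can redistribute this singular mass onto $\{Q>0\}$, and there $P/Q$ exceeds $\E{p/q\mid Y}$ by the nonnegative amount $\bigl(d(K\mu_0)/d(K\lambda)\bigr)/Q$. Your term-by-term comparison then fails. Concretely, with $f(t)=\card{t-1}$, take $\cX=\set{1,2,3}$, $\cY=\set{a,b}$, $\mu$ with masses $(\tfrac{3}{4},0,\tfrac{1}{4})$, $\nu$ with masses $(\tfrac{1}{4},\tfrac{3}{4},0)$, and $K$ sending $1,3\mapsto a$ and $2\mapsto b$: one computes $\int_{PQ>0}f(P/Q)\,d(K\nu)=\tfrac{3}{4}>\tfrac{1}{2}=\int_{pq>0}f(p/q)\,d\nu$, while your bounds on the two boundary terms give equality and $0\le\tfrac{1}{4}$ respectively. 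The overall inequality $\tfrac{3}{2}\le\tfrac{3}{2}$ holds only because the $\tfrac{1}{4}$ of slack in the $f^\ast(0)\mu(q=0)$ term compensates the excess in the Jensen-controlled term, a transfer your argument does not arrange.

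Two standard repairs. Either (a) use the jointly convex perspective $\tilde f(u,v)=vf(u/v)$ for $v>0$ and $\tilde f(u,0)=uf^\ast(0)$, write $\fdivergence{\mu}{\nu}=\int\tilde f(p,q)\,d\lambda$, and apply a single bivariate Jensen to $(P,Q)=\E{(p,q)\mid Y}$ under the joint law with $\cX$-marginal $\lambda$, so the boundary terms never appear separately; or (b) keep your $\Pi$ and supplement Jensen with the convexity estimate $f(a+b)\le f(a)+f^\ast(0)\,b$ for $a,b\ge 0$, valid since $f^\ast(0)$ is the slope of $f$ at infinity. In approach (b) the leaked contribution integrates to exactly $f^\ast(0)(K\mu_0)(Q>0)$, and together with $f^\ast(0)(K\mu)(Q=0)=f^\ast(0)(K\mu_0)(Q=0)$ this reconstitutes the full budget $f^\ast(0)\mu(q=0)$, closing the gap.
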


\begin{proposition}[\cite{sendler1975note}, Lemma 2.1; \cite{steerneman1983total}, Theorem 3.1]
    \label{prop:triangle-inequality-for-tv-of-product-measures}
    Let $\PAREN{\Omega, \cF, \mu_k}$ and $\PAREN{\Omega, \cF, \nu_k}$ for $k \in [n]$ be two sequences of probability spaces. Then the total variation distance between the corresponding product measures satisfies
    \begin{equation}
        \tvdistance{\times_{k \in [n]} \mu_k}{\times_{k \in [n]} \nu_k}
        \le \sum_{k \in [n]} \tvdistance{\mu_k}{\nu_k},
    \end{equation}
    where $\times_{k \in [n]} \mu_k$ and $\times_{k \in [n]} \nu_k$ are the product measures on $\PAREN{\Omega^n, \cF^n}$ constructed from $\mu_k$ and $\nu_k$, respectively.
\end{proposition}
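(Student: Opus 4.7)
The plan is to reduce the product-measure bound to the single-marginal case via a hybrid (telescoping) argument. For each $j \in \set{0, 1, \ldots, n}$ I define the interpolating product measure $\pi_j \doteq \nu_1 \times \cdots \times \nu_j \times \mu_{j+1} \times \cdots \times \mu_n$ on $\paren{\Omega^n, \cF^n}$, so that $\pi_0 = \times_{k \in [n]} \mu_k$ and $\pi_n = \times_{k \in [n]} \nu_k$. Since total variation distance is a metric on probability measures on a common space, the triangle inequality immediately yields $\tvdistance{\pi_0}{\pi_n} \le \sum_{j=1}^n \tvdistance{\pi_{j-1}}{\pi_j}$.

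The key step is then to show that each consecutive difference collapses to a single-coordinate TV distance. The measures $\pi_{j-1}$ and $\pi_j$ agree on every coordinate except the $j$-th, on which they have marginals $\mu_j$ and $\nu_j$ respectively. Choosing a $\sigma$-finite dominating measure on each coordinate (for instance $\mu_k + \nu_k$), writing TV as half the $L^1$-difference of densities, and applying Fubini, the product densities for the $k \neq j$ coordinates each integrate to $1$, and one is left with $\tvdistance{\pi_{j-1}}{\pi_j} = \tvdistance{\mu_j}{\nu_j}$. Summing over $j$ yields the claimed inequality.

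An equivalent and perhaps cleaner route is coupling-based. For each $k$, I would select a maximal coupling $(X_k, Y_k)$ of $(\mu_k, \nu_k)$ satisfying $\p{X_k \neq Y_k} = \tvdistance{\mu_k}{\nu_k}$, taken independent across $k$. Then $(X_1, \ldots, X_n)$ has law $\times_k \mu_k$, $(Y_1, \ldots, Y_n)$ has law $\times_k \nu_k$, and the coupling characterization of TV combined with a union bound gives $\tvdistance{\times_k \mu_k}{\times_k \nu_k} \le \p{(X_1,\ldots,X_n) \neq (Y_1,\ldots,Y_n)} \le \sum_k \p{X_k \neq Y_k}$, which is exactly the stated bound.

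The main subtlety is purely measure-theoretic: justifying the Fubini step over an abstract $(\Omega, \cF)$ (handled by passing to a common dominating measure on each coordinate) or, in the coupling approach, invoking existence of maximal couplings on a general measurable space (a classical fact). Neither is a serious obstacle, and either route yields a short, self-contained proof; I would likely present the hybrid version since the triangle-inequality framing parallels the other uses of TV bounds in the preceding sections.
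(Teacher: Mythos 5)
The paper does not prove this proposition; it is stated as an imported result cited to Sendler (Lemma~2.1) and Steerneman (Theorem~3.1), so there is no in-paper argument to compare against. That said, your proposal is correct, and both routes you sketch are standard and sound.

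The telescoping argument is airtight: defining $\pi_j \doteq \nu_1 \times \cdots \times \nu_j \times \mu_{j+1} \times \cdots \times \mu_n$, the triangle inequality for $\mathds{TV}$ on the common space $\paren{\Omega^n, \cF^n}$ gives $\tvdistance{\pi_0}{\pi_n} \le \sum_{j} \tvdistance{\pi_{j-1}}{\pi_j}$, and the single-coordinate identity $\tvdistance{\pi_{j-1}}{\pi_j} = \tvdistance{\mu_j}{\nu_j}$ follows either from the Fubini computation you describe or, more in the spirit of the paper's own toolkit, by combining the data-processing inequality (\cref{prop:data-processing-inequality}, applied to the projection onto and the injection from the $j$-th coordinate) to get both directions of the inequality. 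The coupling route is equally valid; its only cost is the extra hypothesis needed to guarantee existence of a maximal coupling on an abstract measurable space, which you correctly flag. Given that the paper already states and uses the data-processing inequality, the hybrid argument via that lemma is the most economical way to make this self-contained.
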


Since the joint distribution of independent random variables equals the product of their marginal distributions, we have the following corollary.

\begin{proposition}[Total Variation Subadditivity for Independent Variables]
    \label{prop:tvd-subadditivity}
    Let $X = (X_1, \ldots, X_n)$ and $Y = (Y_1, \ldots, Y_n)$ be tuples of independent random variables. 
    Then
    \begin{equation}
        \tvdistance{X}{Y}
        \le \sum_{k \in [n]} \tvdistance{X_k}{Y_k}.
    \end{equation}
\end{proposition}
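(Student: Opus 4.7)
The plan is to deduce this corollary as essentially a one-line consequence of \cref{prop:triangle-inequality-for-tv-of-product-measures}, which has already been stated and attributed to \citet{sendler1975note} and \citet{steerneman1983total}. The only conceptual step is to identify the distributions of $X$ and $Y$ as product measures, which is exactly the content of the independence assumption.

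Concretely, I would first let $\mu_k$ denote the law of $X_k$ and $\nu_k$ denote the law of $Y_k$, both viewed as probability measures on a common measurable space $(\Omega, \cF)$ (say the one in which each $X_k, Y_k$ takes values). Because $X_1, \ldots, X_n$ are independent, the joint distribution of the tuple $X = (X_1, \ldots, X_n)$ equals the product measure $\times_{k \in [n]} \mu_k$ on $(\Omega^n, \cF^n)$; by the same reasoning the joint law of $Y$ equals $\times_{k \in [n]} \nu_k$. Therefore
\begin{equation*}
    \tvdistance{X}{Y} = \tvdistance{\times_{k \in [n]} \mu_k}{\times_{k \in [n]} \nu_k},
\end{equation*}
and applying \cref{prop:triangle-inequality-for-tv-of-product-measures} to the right-hand side immediately yields the claimed bound $\sum_{k \in [n]} \tvdistance{\mu_k}{\nu_k} = \sum_{k \in [n]} \tvdistance{X_k}{Y_k}$.

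There is no real obstacle here, since the heavy lifting has been done by the cited product-measure inequality. The only thing to be mildly careful about is the standing assumption that $X_k$ and $Y_k$ take values in the same measurable space for each $k$; if one wanted to allow heterogeneous coordinate spaces, one would simply apply the product-measure inequality on the product $\prod_{k \in [n]} (\Omega_k, \cF_k)$ rather than on $(\Omega^n, \cF^n)$, and the argument would go through unchanged. If one preferred a self-contained proof without citing \cref{prop:triangle-inequality-for-tv-of-product-measures}, a short induction on $n$ using the coupling characterization of total variation would also work: couple $X_n$ with $Y_n$ optimally so that $\P{X_n \neq Y_n} = \tvdistance{X_n}{Y_n}$, couple the remaining coordinates independently by induction, and apply a union bound on the event that any coordinate fails to match. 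But given the availability of the cited inequality, the one-line reduction above is the cleanest route.
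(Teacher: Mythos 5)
Your proof is correct and matches the paper's own derivation: the paper also obtains this proposition as an immediate corollary of \cref{prop:triangle-inequality-for-tv-of-product-measures} by noting that independence makes the joint laws of $X$ and $Y$ product measures. Nothing further is needed.
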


\subsection{The Poisson Approximation}
\label{sec:The Poisson Approximation}

When throwing $m$ balls into $d$ bins independently and uniformly at random, the load of any fixed
bin follows the binomial distribution $\Binomial{m}{1/d}$. 
In the regime where $m/d$ is fixed and $m \to \infty$, this binomial distribution converges to a 
Poisson distribution.

\begin{definition}[Poisson Distribution]\label{def:poisson}
A discrete Poisson random variable $X$ with parameter $\lambda > 0$ has probability mass function
\[
  \P{X = j} = e^{-\lambda} \frac{\lambda^{j}}{j!}, 
  \qquad j \in \N.
\]
\end{definition}

More generally, in many settings it is fruitful to approximate the \emph{joint distribution} of the
loads of all bins by assuming that each bin receives an independent Poisson number of balls with
mean $m/d$.

\begin{theorem}[Poisson Approximation Theorem~\citep{MU05}]
    \label{thm:poisson-approximation}
    Suppose $m$ balls are thrown into $d$ bins independently and uniformly at random, and let 
    $X_i$ denote the number of balls in bin~$i$, for $1 \le i \le d$.  
    Let $Y_1,\dots,Y_d$ be independent Poisson random variables, each with mean $m/d$.
    For any nonnegative function $f(x_1,\ldots,x_d)$,
    \[
      \E{ f(X_1,\ldots,X_d) }
      \;\le\;
      e \sqrt{m} \,\E{ f(Y_1,\ldots,Y_d) }.
    \]
    In particular, for any event $\mathcal{E}$, taking $f$ to be its indicator function yields
    \[
      \P{\mathcal{E}\text{ under the exact model}}
      \;\le\;
      e \sqrt{m}\,
      \P{\mathcal{E}\text{ under the Poisson model}}.
    \]
\end{theorem}

We also use the following standard tail bounds for Poisson random variables.

\begin{theorem}[Poisson Tail Bounds~\citep{MU05}]\label{thm:poisson-tails}
    Let $Y \sim \PoissonNoise{\mu}$. For any $0 \le \alpha \le 1$,
    \[
      \P{Y \le \mu (1-\alpha)}
      \;\le\;
      \exp\!\left(-\frac{\alpha^{2}\mu}{2}\right),
      \qquad
      \P{Y \ge \mu (1+\alpha)}
      \;\le\;
      \exp\!\left(-\frac{\alpha^{2}\mu}{3}\right).
    \]
\end{theorem}

\subsection{Alias Method}
\label{sec:Alias-Method}

The Alias method~\citep{Walker77}, denoted by $\algoAlias$, is an efficient array-based sampling algorithm that, given a distribution $\mu = \set{ p_0, \ldots, p_{m - 1} }$ over $\range{m}$, has $O(m)$ initialization time and memory usage, and then draw samples from $\mu$ in $O(1)$ time.  
The pseudocode is given in \cref{algo:Alias-method}.
During the initialization phase, the algorithm constructs two arrays $\vec{a}$ and $\vec{b}$ of length $\bar{m} \ge m$.
The array $\vec{a} \in \range{m}^{\bar{m}}$ is called the \emph{alias array}, and $\vec{b} \in [0, 1]^{\bar{m}}$ stores Bernoulli probabilities. 

\begin{algorithm}[!ht]
    \caption{Alias Method $\algoAlias$}
    \label{algo:Alias-method}
    \begin{algorithmic}[1]
        \Statex \hspace{-6.8mm} {\bf Procedure:}        \textsc{Initialization}
        \Statex \hspace{-4.6mm} {\bf Input:} 
        $m \in \N^+$, $p_0, \ldots, p_{m - 1} \in \paren{0, 1}$
        \State  Construct arrays $\vec{a}$ and $\vec{b}$ as in \cref{prop:alias-method-exact}~\citep{Walker77};

        \vspace{2mm}
        \Statex \hspace{-6.8mm} {\bf Procedure:}        \textsc{Sample}
        \Statex \hspace{-4.8mm} {\bf Output}: Random variable $Z \in \range{m}$
        \State $I \gets \UniformNoise{\range{\bar{m}}}$, $B \gets \Bernoulli{\vec{b}[I]}$ 
        \If{$B = 1$}
            {\bf return} $Z \gets I$
        \EndIf
        \If{$B = 0$}
            {\bf return} $Z \gets \vec{a}[I]$
        \EndIf
    \end{algorithmic}
\end{algorithm}

\subsubsection*{Sampling Procedure ($\algoAliasSample$):}  
To sample from $\mu$, the algorithm proceeds as follows: 
(1) sample a uniform random index $I \in \range{\bar{m}}$,
(2) sample a Bernoulli random variable $B$ that equals $1$ probability $\vec{b}[I]$,
(3) if $B = 1$, return $I$; otherwise, return $\vec{a}[I]$.
That is why $\vec{a}$ is called the alias array: it specifies the alternative index to return when the Bernoulli test fails.

To facilitate uniform sampling over $\range{\bar{m}}$, one typically sets $\bar{m} = 2^{ \ceil{\log_2 m} }$, the smallest power of $2$ not less than $m$.
For indices $i > m$, the Bernoulli probabilities satisfy $\vec{b}[i] = 0$, to ensure the outputs lie in $\range{m}$.

Given this sampling procedure, a natural question is whether there exists a construction of $\vec{a}$ and $\vec{b}$ such that the output distribution of $\algoAliasSample$ exactly matches $\mu$.

\begin{proposition}[\citep{Walker77}]
\label{prop:alias-method-exact}
    Assume that $\vec{b}$ can store real numbers and that we can sample exactly from the Bernoulli distribution $\Bernoulli{q}$ for each $q \in [0, 1]$. Then there exists a construction of $\vec{a}$ and $\vec{b}$ such that $\algoAliasSample$ generates samples exactly according to $\mu$.
\end{proposition}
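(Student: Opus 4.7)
The plan is to exhibit an explicit construction (essentially Walker's/Vose's algorithm) and then verify by direct calculation that the induced sampling distribution equals $\mu$. For clarity I would first handle the padded case by setting $p_j = 0$ for $j \in \{m, m+1, \ldots, \bar{m}-1\}$ and then reduce to the square case $\bar{m} = m$; the padded indices simply get $\vec{b}[i] = 0$ and $\vec{a}[i] = 0$ at the end, which contributes nothing to the output distribution over $\range{m}$.

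\textbf{Construction.} Define the scaled masses $q_i \doteq \bar{m} \cdot p_i$ for $i \in \range{\bar{m}}$. Maintain two working sets $S \doteq \{i : q_i < 1\}$ and $L \doteq \{i : q_i \ge 1\}$. Repeatedly pick some $s \in S$ and some $l \in L$, set $\vec{b}[s] \leftarrow q_s$ and $\vec{a}[s] \leftarrow l$, then update $q_l \leftarrow q_l - (1 - q_s)$ and reclassify $l$ into $S$ or $L$ (or remove if $q_l = 0$); remove $s$ from $S$. Finally, any remaining index $l$ (which must have $q_l = 1$) is assigned $\vec{b}[l] \leftarrow 1$ with $\vec{a}[l]$ arbitrary.

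\textbf{Feasibility of the construction.} The key invariant to prove is that at every step, if $S$ is non-empty then $L$ is non-empty (and vice versa), unless all remaining unprocessed indices have $q_i = 1$. This follows from the conservation law
\begin{equation*}
    \textstyle \sum_{i \text{ unprocessed}} q_i \;=\; \card{\{ i : i \text{ unprocessed}\}},
\end{equation*}
which holds initially because $\sum_i q_i = \bar{m} \sum_i p_i = \bar{m}$ and is preserved by each iteration (each step removes one index and reduces $\sum q_i$ by $q_s + (1 - q_s) = 1$). If some remaining index has $q_i < 1$, the average $1$ forces another remaining index to have $q_j > 1$, giving $j \in L$; the symmetric statement yields the other direction.

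\textbf{Correctness of the output distribution.} Once the tables are built, the sampling procedure returns index $j$ with probability
\begin{equation*}
    \P{ \algoAliasSample = j } \;=\; \frac{1}{\bar{m}} \sum_{i \in \range{\bar{m}}} \Bigl( \vec{b}[i] \cdot \indicator{i = j} + (1 - \vec{b}[i]) \cdot \indicator{\vec{a}[i] = j} \Bigr).
\end{equation*}
I would verify this equals $p_j$ by tracking the contribution of $j$ across the iterations of the construction: every time $j$ plays the role of $s$, it contributes $q_j$ (its current residual mass) to its own cell; every time $j$ plays the role of $l$ and is paired with some $s$, it contributes $1 - q_s$ to its own index $j$ via the alias slot of $s$. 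Summing these contributions recovers the initial mass $q_j = \bar{m} \cdot p_j$, and dividing by $\bar{m}$ gives exactly $p_j$.

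\textbf{Main obstacle.} The delicate part is the feasibility invariant: showing that the greedy pairing never gets stuck with $S$ non-empty and $L$ empty. The counting argument above (total residual mass equals the number of remaining indices) is the cleanest way to discharge this, but one needs to be careful about boundary cases when $q_l = 1 - q_s$ exactly, so that $l$ is removed rather than reclassified. Everything else—the probability accounting and the padding reduction—is mechanical once that invariant is in place.
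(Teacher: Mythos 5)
Your proposal follows essentially the same route as the paper's Appendix~B: the same Walker/Vose iterative pairing construction, the same mass-conservation invariant (the paper states it in terms of an auxiliary array $\vec{c}$ tracking unassigned mass per element, you state it as ``total residual mass equals number of remaining indices''), and the same termination argument via the sizes of the two working sets. Your treatment of feasibility (why a large element must exist whenever a small one remains) is a bit more explicit than the paper, which simply asserts ``there exists some $j \in S_{>1}$'' without spelling out the averaging argument, but the underlying idea is identical.
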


\noindent
For completeness, we also include the pseudocode as well as the proof of the construction of $\vec{a}$ and $\vec{b}$. 
It is worth noting that \citet{Walker77} assumes real arithmetic, which is infeasible in the Word-RAM model used in this paper. 
In \cref{sec:approx-discrete-laplace}, we relax this assumption and apply the Alias method to approximately sample from a discrete distribution.

\subsubsection*{\bf Alias Method Initialization}
\label{sec:alias-initialization}

We now describe how to initialize the two arrays $\vec{a}$ and $\vec{b}$ used in the Alias method.

\begin{algorithm}[!ht]
    \caption{Alias Method Initialization $\algoAlias$}
    \label{algo:Alias-Method-Initialization}
    \begin{algorithmic}[1]
        \Statex \hspace{-6.8mm} {\bf Procedure:}        \textsc{Initialization}
        \Statex \hspace{-4.6mm} {\bf Input:} 
        $m \in \N^+$, $p_0, \ldots, p_{m - 1} \in \paren{0, 1}$
        \State $\bar{m} \gets 2^{ \ceil{\log_2 m} }$
        \Comment{rounding $m$ up to the nearest power of $2$}
        \State initialize an empty (Alias) array $\Vec{a}$ of size $\bar{m}$
        \State initialize an array $\vec{b}$, s.t., 
        $\vec{b}[i] \gets \begin{cases}
                \bar{m} \cdot p_i,  & i \in \zeroton{m}, \\
                0,                  & m < i < \bar{m}.
        \end{cases}
        $
        \State $S_{> 1} \gets \set{ i \in \zeroton{\bar{m}} : \vec{b}[i] > 1}$, and $S_{\le 1} \gets \zeroton{\bar{m}} \setminus S_{ > 1}$
        \For{$k \in [\bar{m}]$}
            \State Select and remove an arbitrary element $i \in S_{\le 1}$
            \If{$\vec{b}[i] < 1$}
                \State select an arbitrary element $j \in S_{> 1}$
                \State $\Vec{a}[i] \gets j$, and $\vec{b}[j] \gets \vec{b}[j] - \paren{1 - \vec{b}[i]}$
                \If{$m_j \le 1$} 
                    move $j$ from $S_{> 1}$ to $S_{\le 1}$
                \EndIf
            \EndIf
        \EndFor
    \end{algorithmic}
\end{algorithm}

For illustration purposes, assume we have an auxiliary array $\vec{c}$ initialized as $\vec{c}[i] = p_i \cdot \bar{m}$ for all $i \in \range{m}$, and $\vec{c}[i] = 0$ for all $i > m$.

In the ideal case where all entries of $\vec{c}$ are integers, we can allocate and set $\vec{c}[i]$ buckets of the alias array $\vec{a}$ to $i$ for each $i \in \range{m}$.
Then sampling a random entry from $\vec{a}$ follows distribution exactly $\mu$.

To handle the case where $\vec{c}$ has fractional values, we allow an element $i \in \range{\bar{m}}$ to occupy only a fraction of a bucket in $\vec{a}$.
We temporarily allow the array $\vec{a}$ to store two elements in each bucket $\vec{a}[i]$, denoted by $\vec{a}[i].\textsc{first}$ and $\vec{a}[i].\textsc{second}$, so that $\vec{a}[i].\textsc{first}$ occupies $\vec{b}[i] \in [0, 1]$ fraction of $\vec{a}[i]$, and $\vec{a}[i].\textsc{second}$ occupies the remaining $1 - \vec{b}[i]$ fraction.
We later describe how to eliminate the need for storing $\vec{a}[i].\textsc{first}$.  

Assume all buckets of $\vec{a}$ and $\vec{b}$ are initially empty.  
The allocation proceeds as follows.  
We initialize two sets:  
$S_{> 1} = \set{ i \in \zeroton{\bar{m}} : \vec{c}[i] > 1 }$ and  
$S_{\le 1} = \zeroton{\bar{m}} \setminus S_{> 1}$.  
We then iteratively update $\vec{a}$, $\vec{b}$, and $\vec{c}$ while maintaining the following invariants:
\begin{itemize}[leftmargin=0.5cm, label=$\triangleright$]
    \item For all $k \in \range{m}$,
    \begin{equation}
        \vec{c}[k] + \sum_{i \in \bar{m}} \PAREN{
            \vec{b}[i] \cdot \indicator{ \vec{a}[i].\textsc{first} = k } 
            + (1 - \vec{b}[i]) \cdot \indicator{ \vec{a}[i].\textsc{second} = k }
        } = p_k \cdot \bar{m}.
    \end{equation}
    We interpret $\vec{c}[k]$ as the number of fractional buckets still to be assigned to element $k$.  
    The second term in the LHS counts how many have already been assigned to $k$.  
    The invariant holds at initialization, since $\vec{c}[k] = p_k \cdot \bar{m}$ and both $\vec{a}$ and $\vec{b}$ are empty.

    \item If $\vec{c}[k] > 0$, then $\vec{a}[k]$ has not yet been allocated.
\end{itemize}

At each iteration, we remove an arbitrary $i \in S_{\le 1}$.
If $\vec{c}[i] = 1$, we fully assign the bucket $\vec{a}[i]$ to $i$ by setting $\vec{a}[i].\textsc{first} = \vec{a}[i].\textsc{second} = i$ and $\vec{b}[i] = \vec{c}[i] = 1$.
If $\vec{c}[i] < 1$, then there exists some $j \in S_{> 1}$.
We assign a $\vec{c}[i]$ fraction of the bucket $\vec{a}[i]$ to $i$, and the remaining $1 - \vec{c}[i]$ fraction to $j$, by setting both $\vec{a}[i].\textsc{first} = i$, $\vec{a}[i].\textsc{second} = j$, and $\vec{b}[i] = \vec{c}[i]$.
We then decrement $\vec{c}[j]$ by $1 - \vec{b}[i]$; if $\vec{c}[j] \le 1$, we move $j$ from $S_{> 1}$ to $S_{\le 1}$.
It is straightforward to verify that both invariants remain true after each iteration.  

Additionally, we observe that either the size of $S_{\le 1}$ or $S_{> 1}$ decreases by $1$ in each iteration.  
Since their total size is $\bar{m}$, the allocation process terminates in $\bar{m}$ iterations.  

\noindent
\textit{Optimization.}
Since the construction always sets $\vec{a}[i].\textsc{first} = i$, we don't need to store it explicitly.
Instead, we store only $\vec{a}[i].\textsc{second}$ in $\vec{a}[i]$, and use $\vec{b}[i]$ to decide whether to return $i$ or $\vec{a}[i]$ during sampling.  
Furthermore, the arrays $\vec{b}$ and $\vec{c}$ can be merged into one without affecting correctness. 
Together, these lead to a compact and efficient construction algorithm.

\newpage
\section{Efficient Private Noise Samplers}
\label{sec:private-noise-samplers}

In this section, we present, in \cref{sec:Approximate Time-Oblivious Sampler Framework}, a general framework for deterministic-time approximate samplers for arbitrary discrete distributions, and, in \cref{subsec:purified-approximate-dl-sampler}, the discrete Laplace sampler used in our private sparse histogram algorithm from \cref{sec:dp-sparse-histogram}.

This deterministic-time sampler belongs to a broader class known as \emph{time-oblivious samplers}, introduced by \citet*{DovDNT23} and formally defined below. 
These samplers guarantee that the runtime of generating each sample reveals little about the sampled value. 
While our sparse histogram algorithm strictly only requires that the total runtime of sampling all noise values reveals a controlled amount of information about the input histogram, per-sample time-obliviousness is desirable for scenarios vulnerable to timing side channels, such as secure multiparty computation.

The framework for designing time-oblivious samplers for arbitrary discrete distributions (\cref{thm:time-oblivious-distribution-sampler})—based on the Alias method (see \cref{algo:Alias-method} in \cref{sec:Alias-Method})—is the core of this section. It strengthens the results of \citet{DovDNT23}, supports a wide range of differentially private noise distributions, and is of independent interest.

Our discrete Laplace sampler (\cref{thm:purified-approximate-discrete-laplace-sampler}) is constructed in two steps: first, we design a time-oblivious sampler (\cref{thm:approximate-discrete-laplace-sampler-formal}) that closely approximates the discrete Laplace distribution; then, we apply a purification technique (\cref{proposition:purification}) to upgrade its approximate DP guarantee to a pure one.

\subsubsection*{\bf Time Oblivious Sampler}
Given a probability space $\paren{\cZ, \cF, \mu}$, a sampler $\cA_\mu$ for $\mu$ is an algorithm that takes as input a countably infinite sequence of unbiased random bits and outputs a sample in $\cZ$ that (approximately) follows the distribution $\mu$.
Importantly, $\cA_\mu$ need not consume the entire input sequence before halting.
Mathematically, $\cA_\mu$ can be viewed as a measurable function from $\set{0,1}^\N$—the space of infinite binary sequences—to $\cZ$.
Informally, a sampler is said to be (approximate) \emph{time-oblivious} if observing its running time does not significantly improve one’s ability to infer the output, compared to guessing according to the target distribution $\mu$.

\begin{definition}[$(\eps, \delta)$-Approximate Time-Oblivious Sampler {\citep{DovDNT23}}]
\label{def:approx-time-oblivious-sampler}
    Let $\paren{\cZ, \cF, \mu}$ be a probability space.  
    An algorithm $\cA_\mu$ is an $(\eps, \delta)$-approximate time-oblivious sampler for $\mu$ if, 
    for every $T \subseteq \N$ such that $\P{\vec{s} \sim \UniformNoise{ \set{0, 1}^\N } }{\runningtime{\cA_\mu \paren{\vec{s}}} \in T} > 0$, 
    it holds that
    \begin{equation}
        e^{-\eps} \paren{ \mu \paren{ E } - \delta }
        \le 
        \P{\vec{s} \sim \UniformNoise{ \set{0, 1}^\N } }{ \cA_\mu \paren{\vec{s}} \in E \mid \runningtime{\cA_\mu \paren{\vec{s}}} \in T }
        \le 
        e^\eps \cdot \mu \paren{ E } + \delta,
        \quad
        \forall E \in \cF,
    \end{equation}
    where $\runningtime{\cA_\mu \paren{\vec{s}}}$ denotes the running time of $\cA_\mu$ on input $\vec{s}$.
\end{definition}

\noindent
\textit{Remark.} 
The original definition in \citet{DovDNT23} adopts a slightly different computational model, where the running time $\runningtime{\cA_\mu(\vec{s})}$ is defined as the number of random bits read by $\cA_\mu$ before halting.
In contrast, throughout this paper we measure running time in terms of the number of word-level operations, as specified by the word-RAM model.

When $\eps = \delta = 0$, the algorithm $\cA_\mu$ is referred to as a (fully) \emph{time-oblivious generating algorithm} for $\mu$~\citep{DovDNT23}.
However, the class of distributions that admit such samplers is quite limited: $\mu$ must have finite support, and all of its probabilities must be rational~\citep{DovDNT23}.
This motivates the need for approximate relaxations in \cref{def:approx-time-oblivious-sampler}.

\subsection{Approximate Time-Oblivious Sampler Framework}
\label{sec:Approximate Time-Oblivious Sampler Framework}

First, we develop a general framework for constructing approximate time-oblivious samplers
for arbitrary discrete distributions, not just the discrete Laplace.
The pseudocode for the framework is presented in \cref{algo:finite-Alias-method}.  
Our design is a direct instantiation of the Alias method (\cref{algo:Alias-method}), implemented truncated support and fixed point representation of probabilities.

\begin{algorithm}[!h]
    \caption{Finite Alias Method $\algoFiniteAlias$}
    \label{algo:finite-Alias-method}
    \begin{algorithmic}[1]
        \Statex \hspace{-6.8mm} {\bf Procedure:}        \textsc{Initialization}
        \Statex \hspace{-4.6mm} {\bf Input:} 
        distribution $\mu$ with discrete support, $\delta \in \paren{0, 1}$
        \State $m \gets \card{\CoreSupportSet{\delta / 2}}$, $\ell \gets \ceil{ \log_2 \paren{ 2 / \delta }  +  \log_2 m }$
        \State $\forall \, i \in \range{m} \, $: $x_i \gets \LabelOracle \paren{i}$
        \State $\forall \, i \in \range{m} \, $: 
        $p_i \gets \ProbOracle \paren{ \mu \paren{ x_i } , \ell }$ 
        \State $\algoAliasInitialization{m, p_0, \ldots, p_{m - 1}}$

        \vspace{2mm}
        \Statex \hspace{-6.8mm} {\bf Procedure:}        \textsc{Sample}
        \Statex \hspace{-4.8mm} {\bf Output}: Random variable $Z \in \CoreSupportSet{\delta / 2}$
        \State $Z \gets \algoAliasSample{}$
        \State {\bf return} $\LabelOracle \paren{ Z }$
    \end{algorithmic}
\end{algorithm}

\begin{theorem}[$(0, \delta)$-Approximate Time-Oblivious Sampler]
\label{thm:time-oblivious-distribution-sampler}
    Let $\mu$ be a distribution with discrete support, let $\delta \in (0, 1)$, and let $\CoreSupportSet{\delta / 2}$ denote a subset of minimum size such that $\mu \PAREN{ \CoreSupportSet{\delta / 2} } \ge 1 - \delta / 2$.
    Assume access to the following two oracles:
    \begin{itemize}[label=$\triangleright$, leftmargin=0.7cm]
        \item The \emph{labeling oracle} $\LabelOracle$, which assigns labels to the elements in $\CoreSupportSet{\delta / 2}$ from $0$ to $m - 1$. For each $i \in \range{m}$, $\LabelOracle(i)$ returns the corresponding element.
        
        \item The \emph{binary probability oracle} $\ProbOracle$, which, given $x \in \CoreSupportSet{\delta / 2}$ and $\ell \in \N_+$, returns the $\ell$-bit binary expansion of $\mu(x)$ after the fractional point, denoted by $\ProbOracle( \mu(x), \ell)$.
    \end{itemize}
    Then there exists a $(0, \delta)$-approximate time-oblivious sampler for $\mu$ with the following properties
    \begin{itemize}[label=$\triangleright$, leftmargin=0.7cm]
        \item It uses 
        $2^{\ceil{ \log_2 \card{ \CoreSupportSet{\delta / 2} } }} \cdot \big( \ceil{ \log_2 \frac{2}{\delta} } + \ceil{ \log_2 \card{ \CoreSupportSet{\delta / 2} } } \big)$ bits of memory, 
        which is in $O \paren{ \card{ \CoreSupportSet{\delta / 2} } \cdot \paren{ \ln \frac{1}{\delta} + \ln \card{ \CoreSupportSet{\delta / 2} } } }$.
        \item Each sample consumes at most $\log_2 \card{ \CoreSupportSet{\delta / 2} } + \log_2 \frac{1}{\delta} + O(1)$ unbiased random bits, which runs in $O \paren{ \frac{1}{\omega} \cdot \paren{ \log_2 \card{ \CoreSupportSet{\delta / 2} } + \log_2 \frac{1}{\delta} } }$ time in word-RAM model.
    \end{itemize}
\end{theorem}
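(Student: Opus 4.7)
The plan is to instantiate the exact Alias method of Proposition~\ref{prop:alias-method-exact} on a quantized and truncated surrogate of $\mu$, then invoke Algorithm~\ref{algo:Alias-method} verbatim. Two sources of approximation arise: restricting the support to the core set $\CoreSupportSet{\delta/2}$ discards at most $\delta/2$ of mass, and replacing each $\mu(x)$ by its $\ell$-bit binary truncation returned by $\ProbOracle$ introduces an additional discretization error. By choosing $\ell$ large enough, both contributions to the total variation distance are bounded by $\delta/2$, delivering the required $(0,\delta)$ accuracy. Time-obliviousness then comes essentially for free, because the per-sample cost of the Alias method is a deterministic constant.

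More concretely, I would enumerate the $m := \card{\CoreSupportSet{\delta/2}}$ elements $x_0, \ldots, x_{m-1}$ via $\LabelOracle$, set $\ell := \lceil \log_2 (2m/\delta) \rceil$, and query $\tilde{p}_i := \ProbOracle(\mu(x_i), \ell)$ so that $0 \le \mu(x_i) - \tilde{p}_i \le 2^{-\ell}$. The shortfall $s := 1 - \sum_i \tilde{p}_i$ is nonnegative and at most $\delta/2 + m \cdot 2^{-\ell}$; I absorb it into a single atom by setting $\tilde{\mu}(x_0) := \tilde{p}_0 + s$ and $\tilde{\mu}(x_i) := \tilde{p}_i$ otherwise. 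Splitting $\tvdistance{\mu}{\tilde{\mu}}$ into the $\mu$-mass outside $\CoreSupportSet{\delta/2}$ plus the per-atom quantization errors inside it, each part is bounded by $\delta/2$, so $\tvdistance{\mu}{\tilde{\mu}} \le \delta$.

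Next, I would pad the probability vector with zero atoms up to length $\bar{m} := 2^{\lceil \log_2 m \rceil}$ and invoke $\algoAliasInitialization$. The crucial observation is that once all $\tilde{\mu}(x_i)$ are represented as integer multiples of $2^{-\ell}$ summing to $1$, the Alias construction can be executed entirely in integer arithmetic after scaling by $2^\ell$: the target per-slot mass $2^\ell / \bar{m}$ is itself an integer, and all bucket-pairing subtractions preserve integrality, so no real arithmetic is needed. The alias indices in $\vec{a}$ require $\lceil \log_2 m \rceil$ bits, while each $\vec{b}[i]$, stored relative to the per-slot capacity, fits in $\ell - \log_2 \bar{m} \le \lceil \log_2 (2/\delta) \rceil$ bits, matching the stated memory bound; the $O(m)$ integer operations during initialization account for the preprocessing cost.

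Sampling via $\algoAliasSample$ then consumes $\log_2 \bar{m}$ random bits for the uniform index $I \in \range{\bar{m}}$ and a further $\lceil \log_2 (2/\delta) \rceil$ bits to compare against $\vec{b}[I]$ for the Bernoulli test, giving the claimed bit and word-RAM complexities. Since this cost is a deterministic function of the initialization alone and does not depend on the realized sample, the sampler is (fully) time-oblivious; combined with $\tvdistance{\mu}{\tilde{\mu}} \le \delta$, Definition~\ref{def:approx-time-oblivious-sampler} is satisfied with $\eps = 0$. The main obstacle, and the only step that requires care, is verifying that the fixed-point Alias initialization is exact: once the input probabilities are integer multiples of $2^{-\ell}$ summing to $1$ and $\bar{m}$ is a power of two, the standard bucket-pairing procedure stays on the integer lattice, so the real-arithmetic hypothesis of Proposition~\ref{prop:alias-method-exact} can be dropped without any loss.
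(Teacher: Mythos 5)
Your proposal is correct and follows essentially the same path as the paper's own proof: truncate $\mu$ to the core set $\CoreSupportSet{\delta/2}$, quantize each probability to $\ell = \lceil \log_2(2m/\delta)\rceil$ bits via $\ProbOracle$, push the resulting deficit into one atom so the masses sum to one (bounding the total variation by $\delta$), and then observe that the Walker alias initialization needs only fixed-point arithmetic when the inputs are dyadic multiples of $2^{-\ell}$ and $\bar{m}$ is a power of two. The only detail the paper spells out that you elide is the boundary case where a slot ends up with $\vec{b}[i]$ equal to the full slot capacity $2^{\ell-\log_2\bar{m}}$, which does not fit in $\ell - \log_2\bar{m}$ bits after the leading bit is discarded; the paper's fix (store $0$ and set $\vec{a}[i] \gets i$) is trivial, but it is worth mentioning for the memory bound to hold exactly.
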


\subsubsection*{Remark.}
Our $(0, \delta)$-approximate time-oblivious sampler achieves the same random bit complexity as the sampler of \citet*{DovDNT23} (Claim~2.22), but reduces the space usage from 
$O \paren{ \card{ \CoreSupportSet{ \delta / 2 } } / \delta \cdot \ln \card{ \CoreSupportSet{ \delta / 2 } } }$ 
to 
$O \paren{ \card{ \CoreSupportSet{ \delta / 2 } } \cdot \paren{ \ln \frac{1}{\delta} + \ln \card{ \CoreSupportSet{ \delta / 2 } } } }$ 
bits, representing an exponential improvement in the dependence on~$\delta$.

At a high level, their construction proceeds by encoding the probabilities of elements in $\CoreSupportSet{ \delta / 2 }$ to $\log_2 (\card{ \CoreSupportSet{ \delta / 2 } } / \delta)$ bits of binary precision.  
They then construct an array of size $\card{ \CoreSupportSet{ \delta / 2 } } / \delta$, and assign to each element in $\CoreSupportSet{ \delta / 2 }$ a number of buckets proportional to its truncated probability mass.  
Sampling amounts to drawing a uniform random bucket and returning the corresponding element.

The running time and memory overhead of $\LabelOracle$ and $\ProbOracle$ depend on the input distribution $\mu$, which we will further discuss when applying this framework to specific distributions.
Here, we assume the existence of $\LabelOracle$ and $\ProbOracle$, as is also implicitly assumed by \citet{DovDNT23}, and we focus on analyzing the running time and memory usage of the general framework itself.

\begin{proof}[Proof of \cref{thm:time-oblivious-distribution-sampler}]

    Denote $m \doteq \card{\CoreSupportSet{\delta / 2}}$ and $\ell \doteq \ceil{ \log_2 \frac{2}{\delta} + \log_2 m }$. 
    For each $i \in \range{m}$, let $x_i \doteq \LabelOracle \paren{i} \in \CoreSupportSet{\delta / 2}$.
    Since $\mu$ has discrete support, we can list the remaining elements as $x_{m}, x_{m + 1}, \ldots$.
    In particular, if $\mu$ has finite support, we set $x_i \doteq \varnothing$ for all $i \ge \card{ \supp \paren{ \mu } }$.
    
    For each $i \in \range{m}$, let $p_i \doteq \ProbOracle \paren{ \mu \paren{ x_i }, \ell }$ denote the $\ell$-bit binary approximation of $\mu \paren{ x_i }$ after the fractional point, so that $0 \le \mu \paren{ x_i } - p_i \le 2^{ - \ell }$. 
    Finally, adjust $p_0$ to ensure $p_0 \doteq 1 - \sum_{i = 1}^{m - 1} p_i$, so that the $p_i$'s sum to $1$.
    Hence 
    \begin{align*}
        \card{p_0 - \mu(x_0)}
            &= \CARD{1 - \sum_{i = 1}^{m - 1} p_i - \Biggparen{1 - \sum_{i \in \N_+} \mu(x_i)}} \\
            &= \CARD{ \sum_{i \ge m} \mu(x_i) + \sum_{i = 1}^{m - 1} \PAREN{ \mu(x_i) - p_i } } \\
            &= \sum_{i \ge m} \mu(x_i) + \sum_{i = 1}^{m - 1} \PAREN{ \mu(x_i) - p_i } \\
            &\le \mu \paren{ \overline{\CoreSupportSet{\delta / 2}} }  + \paren{m - 1} \cdot 2^{-\ell}.
    \end{align*}
    
    It follows that 
    \begin{align*}
        \tvdistance{ \set{ p_i } }{\mu} 
            &\le \frac{1}{2} \cdot \Biggparen{ 
                \mu \PAREN{ \overline{\CoreSupportSet{\delta / 2}} } + \card{p_0 - \mu(x_0)} + \sum_{i \in [m - 1]} \card{p_i - \mu(x_i)}
            } \\
            &\le \frac{1}{2} \cdot \Biggparen{ 
                \mu \PAREN{ \overline{\CoreSupportSet{\delta / 2}} } + \card{p_0 - \mu(x_0)} + \paren{m - 1} \cdot 2^{-\ell}
            } \\
            &\le \mu \PAREN{ \overline{\CoreSupportSet{\delta / 2}} } + \paren{m - 1} \cdot 2^{-\ell} \\
            &\le \delta / 2 + \delta / 2 \\
            &\le \delta.
    \end{align*}

    Next, we describe how to implement an instance of Alias method (Algorithm~\ref{algo:Alias-method}) with input $\paren{ m, \set{ p_i } }$ to sample exactly from the distribution $\set{ p_i }$. 
    Since Alias method has a worst-case running time bound, this directly yields a $(0, \delta)$-approximate time-oblivious sampler for $\mu$. 
    We focus particularly on the space usage and the number of unbiased random bits required to generate a sample.
    
    \subsubsection*{Space Usage.} 
    Clearly, the array $\vec{a}$ requires $\bar{m} \cdot \log_2 m$ bits of space, where $\bar{m} = 2^{ \ceil{ \log_2 m } }$.
    It remains to analyze the space usage of the array $\vec{b}$.
    
    First, since $\bar{m}$ is a power of two, the quantity $\bar{m} \cdot p_i$ corresponds to a left shift of $p_i$ by $\log_2 \bar{m}$ bits. 
    Hence, during the initialization stage, we can represent each entry of $\vec{b}$ using a fixed-length binary representation with exactly $\ell$ bits—comprising $\log_2 \bar{m}$ bits before and $\ell - \log_2 \bar{m}$ bits after the fractional point—without introducing rounding or truncation errors.
    
    Moreover, an intermediate operation $\vec{b}[j] \gets \vec{b}[j] - \paren{ 1 - \vec{b}[i] }$ is performed by Algorithm~\ref{algo:Alias-method} only when $\vec{b}[j] > 1$, after which $\vec{b}[j]$ decreases but remains nonnegative. 
    Therefore, this operation also does not introduce rounding or truncation errors.
    
    Finally, when the algorithm terminates, each entry of $\vec{b}$ lies in the interval $(0, 1]$. 
    At this point, we can discard the $\log_2 \bar{m}$ bits before the fractional point and retain only the $\ell - \log_2 \bar{m}$ fractional bits. 
    There is one caveat: if $\vec{b}[i] = 1$ for some $i$, this truncation would result in $\vec{b}[i] = 0$. 
    In this case, we can simply set $\vec{a}[i] = i$ to preserve the correct sampling probability for sampling item $i$.
    
    \subsubsection*{Sampling Bit.}
    Sampling an $I$ uniformly at random from $\bar{m}$ requires exactly $\log_2 \bar{m}$ bits. 
    Further, since $\vec{b}[I]$ has a binary representation of $\ell - \log_2 \bar{m}$ bit after the fractional point, sampling $\Bernoulli{\vec{b}[I]}$ can be achieved using less $2$ unbiased random bit in expectation and $\ell - \log_2 \bar{m}$ in the worst case.

\end{proof}

\subsection{Discrete Laplace Sampler}
\label{subsec:purified-approximate-dl-sampler}

The properties of the discrete Laplace sampler used in our private sparse histogram algorithm (\cref{algo:tail-item-padding}) are stated below.

\begin{theorem}[Purified Approximate Discrete Laplace Sampler]
\label{thm:purified-approximate-discrete-laplace-sampler}
    Let $n \in \N^+$, $\eps \in \Q_+$, and $\gamma \in \Q_+ \cap (0, 1)$.
    There exists a randomized algorithm $\PurifiedApproxDiscreteLaplaceMechanism{n}{\eps}{\gamma} : \zeroton{n} \rightarrow \zeroton{n}$ with the following properties:
    \begin{itemize}[label=$\triangleright$, leftmargin=0.7cm]
        \item For each $t \in [n]$, 
        $\PurifiedApproxDiscreteLaplaceMechanism{n}{\eps}{\gamma}{t - 1}$ and $\PurifiedApproxDiscreteLaplaceMechanism{n}{\eps}{\gamma}{t}$ are $(\eps, 0)$-indistinguishable.
        
        \item For each $t \in \zeroton{n}$, $\beta > 2 \cdot \gamma$, 
        $\PurifiedApproxDiscreteLaplaceMechanism{n}{\eps}{\gamma}{t}$ is $(\alpha, \beta)$-accurate estimator of $t$,
        for
        $\alpha \doteq \ceil{ \frac{1}{\eps} \cdot \ln \frac{2}{\beta - \frac{2 + n}{1 + n} \cdot \gamma} } \in O \paren{ \frac{1}{\eps} \cdot \ln \frac{1}{\beta} }$

        \item It has initialization time 
        $O\bigl( \frac{1}{\eps} \cdot \poly{ \ln \frac{1}{\eps}, \ln n, \ln \frac{1}{\gamma} } \bigr)$ 
        and memory usage 
        $O\bigl( \bigl( \frac{1}{\eps} + \ln \frac{1}{\gamma} + \ln n \bigr) 
                \cdot \bigl( \ln \frac{1}{\eps} + \ln \frac{1}{\gamma} + \ln n \bigr) \bigr)$ 
        bits.

        \item After initialization, for each $t \in \zeroton{n}$,  
        $\PurifiedApproxDiscreteLaplaceMechanism{n}{\eps}{\gamma}{t}$ has worst-case running time 
        \\
        $O \bigparen{ \frac{1}{\omega} \cdot \bigparen{ \ln \frac{1}{\eps} + \ln \frac{1}{\gamma} + \ln n }}$, 
        where $\omega$ is the machine word size. 
        
    \end{itemize}
\end{theorem}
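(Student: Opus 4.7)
The plan is to combine the deterministic-time approximate discrete Laplace sampler of \cref{thm:approximate-discrete-laplace-sampler-formal} with the purification wrapper of \cref{proposition:purification}. Specifically, I will first define an intermediate approximate-DP mechanism $\cM'(t) \doteq \clamp{t + Z}{0}{n}$, where $Z$ is drawn from the sampler within total variation distance $\delta$ of $\DiscreteLapNoise{e^{-\eps}}$, for a $\delta$ to be fixed. The final mechanism $\PurifiedApproxDiscreteLaplaceMechanism{n}{\eps}{\gamma}$ is then obtained by passing $\cM'$ through $\algoPurify$ with mixing parameter $\gamma$, i.e., outputting $\cM'(t)$ with probability $1-\gamma$ and a uniform sample from $\zeroton{n}$ otherwise.

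For privacy, I will introduce the idealized reference $\cM^*(t) \doteq \clamp{t + \DiscreteLapNoise{e^{-\eps}}}{0}{n}$. Since $\P{\DiscreteLapNoise{e^{-\eps}} = z} / \P{\DiscreteLapNoise{e^{-\eps}} = z+1} \in [e^{-\eps}, e^{\eps}]$ for every $z \in \Z$, the distributions of $(t-1) + \DiscreteLapNoise{e^{-\eps}}$ and $t + \DiscreteLapNoise{e^{-\eps}}$ are $(\eps, 0)$-indistinguishable, and clamping is a post-processing step, so $\cM^*$ is pure $\eps$-DP on neighbors $t-1, t$. The sampler guarantee gives $\tvdistance{\cM'(t)}{\cM^*(t)} \le \delta$, so choosing $\delta \le \frac{e^\eps - 1}{e^\eps + 1} \cdot \frac{\gamma}{(1-\gamma)(n+1)}$ satisfies \cref{proposition:purification} with $\cZ = \zeroton{n}$, which yields pure $\eps$-DP for the purified mechanism.

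For utility, clamping can only shrink deviations, so $|\cM^*(t) - t| \le |\DiscreteLapNoise{e^{-\eps}}|$, and the standard two-sided tail bound gives $\P{|\cM^*(t) - t| \ge \alpha} \le 2 e^{-\eps \alpha}$. Combining this with the TV deficit and the trivial bound $\P{|U - t| \ge \alpha} \le 1$ for $U \sim \UniformNoise{\zeroton{n}}$ yields
\begin{equation*}
    \P{ \card{ \PurifiedApproxDiscreteLaplaceMechanism{n}{\eps}{\gamma}{t} - t } \ge \alpha }
    \le (1 - \gamma) \paren{ 2 e^{-\eps \alpha} + \delta } + \gamma
    \le 2 e^{-\eps \alpha} + \tfrac{n+2}{n+1} \gamma,
\end{equation*}
where the last inequality uses $\delta \le \gamma / (n+1)$, which is implied by the privacy choice above (since $\frac{e^\eps - 1}{e^\eps + 1} \le 1$). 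Setting $\alpha \doteq \ceil{ \frac{1}{\eps} \ln \frac{2}{\beta - \frac{n+2}{n+1}\gamma} }$ makes the right-hand side at most $\beta$, and the hypothesis $\beta > 2\gamma$ together with $\tfrac{n+2}{n+1} \le 2$ ensures the logarithm's argument is positive.

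The resource bounds follow directly from plugging into \cref{thm:approximate-discrete-laplace-sampler-formal}: initialization time $\tilde{O}(1/\eps)$, memory $O(1/\eps + \ln(1/\delta))$ machine words, and $O(1)$ per-sample cost. Substituting $\delta = \Theta(\eps \gamma / n)$ yields $\ln(1/\delta) = O(\ln(1/\eps) + \ln(1/\gamma) + \ln n)$, which matches the stated word count once each word is expanded to its $O(\ln(1/\eps) + \ln(1/\gamma) + \ln n)$-bit representation. Purification adds only a Bernoulli draw with bias $\gamma$ and, in the fallback branch, one uniform sample from $\zeroton{n}$, contributing $O(\tfrac{1}{\omega}(\ln(1/\gamma) + \ln n))$ word operations. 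The one point needing care is confirming that the tiny $\delta$ demanded by purification does not inflate the sampler's footprint; this works precisely because the sampler's dependence on $\ln(1/\delta)$ is only additive in memory and buried inside the polylog factor of the initialization time.
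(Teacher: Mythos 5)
Your proposal is correct and essentially identical to the paper's proof: the same construction (clamp an approximate discrete Laplace sample, then purify with mixing weight $\gamma$), the same choice of $\delta = \frac{e^\eps-1}{e^\eps+1}\cdot\frac{\gamma}{(1-\gamma)(n+1)}$, the same privacy argument via the idealized clamped discrete Laplace mechanism and the data-processing inequality, and the same tail-bound-plus-TV-deficit utility argument. The only differences are cosmetic: you drop the $1/(1+e^{-\eps})$ factor in the discrete Laplace tail bound up front, which makes the algebra leading to $\alpha = \lceil\frac{1}{\eps}\ln\frac{2}{\beta - \frac{n+2}{n+1}\gamma}\rceil$ slightly cleaner than the paper's, and you should more precisely write $(1-\gamma)\delta \le \gamma/(n+1)$ rather than $\delta \le \gamma/(n+1)$ in the penultimate step, but this does not affect the conclusion.
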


\subsubsection*{Remark}
The first two properties in \cref{thm:purified-approximate-discrete-laplace-sampler} state that $\PurifiedApproxDiscreteLaplaceMechanism{n}{\eps}{\gamma}$ matches the privacy guarantee and closely approximates the tail bound of the standard sampler $\cM_{\DiscreteLapNoise{e^{-\eps}}}$, which creates shifted discrete Laplace noise: $\cM_{\DiscreteLapNoise{e^{-\eps}}}(t) = t + \DiscreteLapNoise{e^{-\eps}}$ for any $t \in \zeroton{n}$.

Recall that \cref{thm:purified-approximate-discrete-laplace-sampler-informal} is a simplified version of \cref{thm:purified-approximate-discrete-laplace-sampler}.
When $\eps \doteq \frac{a_\eps}{b_\eps}$ and $\delta \doteq \frac{a_\delta}{b_\delta}$ for $a_\eps, b_\eps, a_\delta, b_\delta \in \N_+$ that fit in a constant number of machine words, it holds that $\ln \frac{1}{\delta}, \ln \frac{1}{\eps} \in O(\omega)$.  
Hence, the memory usage becomes $O \left( \frac{1}{\eps} + \ln \frac{1}{\delta} \right)$ words, and the sampling time becomes $O(1)$.

To prove the theorem, it suffices—by the purification technique (\cref{proposition:purification}), which converts approximate DP to pure DP—to construct an efficient time-oblivious sampler whose output distribution is close to the discrete Laplace distribution.

\begin{theorem}[Approximate Discrete Laplace Sampler]
    \label{thm:approximate-discrete-laplace-sampler-formal}
    Let $\eps \in \Q_+$ and $\delta \in \Q_+ \cap \paren{0, 1}$.
    There exists a $(0, \delta)$-approximate time-oblivious sampler, denoted by $\DLapSampler{\varepsilon}{\delta}$, for the discrete Laplace distribution $\DiscreteLapNoise{e^{-\varepsilon}}$, with the following properties:
    \begin{itemize}[label=$\triangleright$, leftmargin=0.7cm]
        \item The sampler has initialization time 
        $O \bigparen{ \frac{1}{\eps} \cdot \poly{ \ln \frac{1}{\eps}, \ln \frac{1}{\delta} } }$,
        and memory usage 
        $O \big( \bigparen{ \frac{1}{\eps} + \ln \frac{1}{\delta} } \cdot \big( \ln \frac{1}{\delta} + \ln \frac{1}{\eps} \big) \big)$ bits, 
        \item Each sample consumes at most
        $\log_2 \bigparen{ \frac{1}{\eps} \cdot \ln \frac{1}{\delta} } + 2 \cdot \log_2 \frac{1}{\delta} + O(1)$ 
        unbiased random bits, which runs in 
        $O \bigparen{ \frac{1}{\omega} \cdot \bigparen{ \ln \frac{1}{\eps} + \ln \frac{1}{\delta} } }$ 
        time in the word-RAM model, where $\omega$ is the machine word size.
    \end{itemize}
\end{theorem}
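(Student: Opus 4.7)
I would build the sampler in three layers. Layer (i) is a generic time-oblivious framework that approximately samples from any discrete distribution of small support, obtained by combining the Alias method of \cref{prop:alias-method-exact} with fixed-point arithmetic and a truncation of the support. Layer (ii) reduces $\DiscreteLapNoise{e^{-\eps}}$ to the one-sided geometric distribution $\GeometricNoise{e^{-\eps}}$ by attaching a sign and a zero-indicator. Layer (iii) decomposes the geometric itself into two independent components, one supported on $O(1/\eps)$ atoms and one on $O(\ln(1/\delta))$ atoms, each handled by the generic framework. The theorem then follows by summing total-variation errors and word-RAM costs across the layers.

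\textbf{Generic framework via the Alias method.} Given a target distribution $\mu$ on $\N$ of effective support size $m$, first truncate to a window whose tail mass is at most a fraction of $\delta$, then round each surviving probability $p_i$ downward to a multiple of $2^{-b}$, obtaining $\tilde p_i$, and route the residual mass into one designated bucket. Choosing $b$ large enough compared to $\log m + \log(1/\delta)$ yields $\tvdistance{\mu}{\tilde\mu} \le \delta$. Running \cref{prop:alias-method-exact} on $\tilde\mu$ produces the alias array $\vec a$ and the Bernoulli array $\vec b$ in $O(m)$ word operations, each entry using $O(b)$ bits. A sample uses $\log_2 \bar m$ bits for the uniform index $I \in \range{\bar m}$ and $b$ bits for one Bernoulli comparison against $\vec b[I]$, with no data-dependent branches, so the sampler is $(0,\delta)$-approximate time-oblivious in the sense of \cref{def:approx-time-oblivious-sampler}.

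\textbf{Reduction to geometric and its decomposition.} Sample $D \in \{-1,0,+1\}$ with $\P{D=0} = (1-e^{-\eps})/(1+e^{-\eps})$ and $\P{D=\pm 1} = e^{-\eps}/(1+e^{-\eps})$, and independently sample $G$ from a distribution $\delta$-close to $\GeometricNoise{e^{-\eps}}$; return $0$ if $D=0$ and $D \cdot (G+1)$ otherwise. A direct PMF check confirms that with exact $G$ this equals $\DiscreteLapNoise{e^{-\eps}}$ in distribution, so the overall error is controlled by the error on $G$ plus the trivial rounding for $D$. For $G$, I would use the exact factorization $G = L K + R$ with $K \doteq \lfloor G/L \rfloor$ and $R \doteq G \bmod L$ independent, $K \sim \GeometricNoise{e^{-L\eps}}$, and $R$ the truncated geometric on $\{0,\ldots,L-1\}$, verified by splitting the PMF in \cref{def:geometric-distribution}. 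Choose $L = \ceil{1/\eps}$: then $R$ has support of size $O(1/\eps)$ and is sampled by the framework without truncation, while $K$ has geometric parameter at most $1/e$ and can be truncated at $T = O(\ln(1/\delta))$ with tail mass $\le \delta/3$. Assembling $G = L K + R$ and then the discrete Laplace from $D$ and $G$ uses only $O(1)$ word operations.

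\textbf{Accounting and main obstacle.} The two Alias tables have sizes $O(1/\eps)$ and $O(\ln(1/\delta))$, with fixed-point entries of length $O(\log(1/\eps) + \log(1/\delta))$ bits, matching the claimed memory bound. Initialization cost is dominated by constructing the $O(1/\eps)$-entry table and takes $\tilde O(1/\eps)$ time. Each sample draws two uniform indices of total length $\log_2(1/\eps) + \log_2 \ln(1/\delta)$, two Bernoulli comparisons, and $O(1)$ bits for $D$, which matches the stated bit budget once the per-Bernoulli precision is tuned appropriately; the word-RAM time is $O(\frac{1}{\omega}(\log(1/\eps) + \log(1/\delta)))$. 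The main technical obstacle is the joint precision analysis: one must simultaneously control the tail truncation on $K$, the fixed-point rounding of the $p_i$, and any round-off propagated through the Alias construction---using only integer arithmetic so that the sampler is genuinely deterministic-time---while routing the leftover rounding mass in a way that introduces no data-dependent retries, so that $(0,\delta)$-approximate time-obliviousness is preserved at exactly the stated $\delta$ budget.
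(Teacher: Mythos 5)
Your proposal is correct and takes essentially the same route as the paper: an Alias-method framework over a truncated support with fixed-point probabilities (the paper's \cref{thm:time-oblivious-distribution-sampler}), a reduction of $\DiscreteLapNoise{e^{-\eps}}$ to a one-sided geometric via a zero indicator and a sign (the paper's \cref{fact:discrete-laplace-decomposition}), and the factorization $G = LK + R$ into a coarse-step geometric $K$ truncated to $O(\ln(1/\delta))$ atoms and a truncated geometric $R$ on $O(1/\eps)$ atoms (the paper's \cref{lemma:decomposition-of-geometric-distribution}), each fed through the framework. The only cosmetic deviations are that the paper chooses the block length $r = 2^{\ceil{\log_2(1/\eps)}}$ rather than $\ceil{1/\eps}$ and samples the zero indicator and sign as two separate random variables rather than a single three-valued $D$; neither changes the analysis.
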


Compared to prior deterministic-time samplers with comparable accuracy, \cref{thm:approximate-discrete-laplace-sampler-formal} achieves the best-known per-sample running time, and matches the batch running time of the previous state-of-the-art, as summarized in \cref{tab:my-table}.

\begin{table}[H]
\renewcommand{\arraystretch}{1.5} %
\centering
\resizebox{\columnwidth}{!}{%
\begin{tabular}{|c|c|c|c|c|}
\hline
\textbf{Method} & \textbf{Preprocessing Time}                   & \textbf{Memory}                                                                                                                 & \textbf{Time Per Sample}                          & \textbf{Time for $m$ Samples}                      \\ \hline
\citet{DworkKMMN06}        & $\tilde{O} \bigparen{ 1 }$          & $O \bigparen{ \ln \frac{1}{\eps} + \ln \ln \frac{1}{\delta} }$                                  & $O \bigparen{ \ln \frac{1}{\eps} + \ln \ln \frac{1}{\delta} }$  & $O \bigparen{ m \cdot ( \ln \frac{1}{\eps} + \ln \ln \frac{1}{\delta} ) }$ \\ \hline
\citet{BalcerV19} & $\tilde{O} \bigparen{ \frac{1}{\eps} }$ & $O \big( \paren{ \frac{1}{\eps} \cdot \ln \frac{1}{\delta} }^2 \big)$ & $O( \frac{1}{\eps} \cdot \ln \frac{1}{\delta} \cdot \ln n )$ & $O(\frac{m}{\eps} \cdot \ln \frac{1}{\delta} \cdot \ln n)$ \\ \hline
\citet{Canonne0S20}        & $O(1)$                                        & $O(1)$                                                                                                                           & $O \bigparen{ \ln \frac{1}{\delta} }$                     & $O \bigparen{ m + \ln \frac{1}{\delta} }$                      \\ \hline
\cref{thm:approximate-discrete-laplace-sampler-formal}        & $\tilde{O} \bigparen{ \frac{1}{\eps} }$          & $O \bigparen{ \frac{ 1 }{\eps} + \ln \frac{1}{\delta} } $                                       & {$O(1)$}                                  & {$O(m)$}                                        \\ \hline
\end{tabular}%
}
\caption{
    Comparison of approximate discrete Laplace samplers within statistical distance $\delta$ of $\DiscreteLapNoise{e^{-\eps}}$, where $\eps$ and $\delta$ are rational numbers whose reduced fractional representations fit within a constant number of machine words.  
    The bounds for \citet{Canonne0S20} are obtained by imposing a time limit on their sampler.  
    Per-sample running time may be of independent interest, especially when adapting samplers to scenarios where individual sampling time could leak information, such as in distributed settings.
}
\label{tab:my-table}
\vspace{-5mm}
\end{table}

The proof of \cref{thm:approximate-discrete-laplace-sampler-formal} is provided in \cref{sec:approx-discrete-laplace}. 
Using this result, we complete the proof of \cref{thm:purified-approximate-discrete-laplace-sampler}.

\begin{proof}[Proof of \cref{thm:purified-approximate-discrete-laplace-sampler}]
    The $\PurifiedApproxDiscreteLaplaceMechanism{n}{\eps}{\gamma}$ is constructed as follows: 
    \begin{enumerate}[leftmargin=0.7cm]
        \item Initialize an instance of $\DLapSampler{\eps}{\delta}$ stated in \cref{thm:approximate-discrete-laplace-sampler-formal}, with $\delta = \frac{e^\eps - 1}{e^\eps + 1} \cdot \frac{\gamma}{1 - \gamma} \cdot \frac{1}{1 + n}$.
        
        \item $\cM' (t): \zeroton{n} \rightarrow \zeroton{n}$ be defined as $\cM(t) \doteq \clamp{t + X}{0}{n}$, where $X$ is sampled by $\DLapSampler{\eps}{\delta}$.
        It holds that $\tvdistance{X}{\DiscreteLapNoise{e^{-\eps}}} \le \delta$.

        \item Let $\PurifiedApproxDiscreteLaplaceMechanism{n}{\eps}{\gamma}{t} = \algoPurify{\cM', t, \gamma}$, for each $t \in \zeroton{n}$, where $\algoPurify$ is given in \cref{alg:purification}.
    \end{enumerate}

    \noindent
    Since $O \paren{ \ln \frac{1}{\delta} } = O \paren{ \ln \frac{1 + e^{-\eps}}{1 - e^{-\eps}} + \ln \frac{1}{\gamma} + \ln n}$, by \cref{thm:approximate-discrete-laplace-sampler-formal}, initializing $\DLapSampler{\eps}{\delta}$ takes time
    $
        O \bigparen{ \frac{1}{\eps} \cdot \poly{ \ln \frac{1}{\eps}, \ln n, \ln \frac{1}{\gamma} } }
    $ 
    and memory usage
    $O \bigparen{ \bigparen{ \frac{1}{\eps} + \ln \frac{1}{\gamma} + \ln n } \cdot \bigparen{ \ln \frac{1}{\gamma} + \ln n + \ln \frac{1}{\eps} } }$.
    After initialization, by \cref{proposition:purification,thm:approximate-discrete-laplace-sampler-formal}, 
    $\PurifiedApproxDiscreteLaplaceMechanism{n}{\eps}{\gamma}{t}$ 
    can be eventuated in time $O \bigparen{ \frac{1}{\omega} \cdot \bigparen{ \ln \frac{1}{\eps} + \ln \frac{1}{\gamma} + \ln n }}$.

    \subsubsection*{Privacy Guarantee.}   
    Let $\cM (t): \zeroton{n} \rightarrow \zeroton{n}$ be defined as 
    $\cM(t) \doteq \clamp{t + \DiscreteLapNoise{e^{\eps}}}{0}{n}$, which adds discrete Laplace noise to $t$ then clamp it to the range of $\zeroton{n}$.
    It satisfies $\eps$-DP: for each $t \in [n]$, $\cM \paren{t - 1}$ and $\cM \paren{t}$ are $(\eps, 0)$-indistinguishable.
    By data processing inequality (\cref{prop:data-processing-inequality}), we have
    $$\tvdistance{\cM(t)}{\cM'(t)} \le \tvdistance{X}{\DiscreteLapNoise{e^{-\eps}}} \le \delta.$$

    \subsubsection*{Utility Guarantee.}
    Since $\tvdistance{X}{\DiscreteLapNoise{e^{-\eps}}} \le \delta$,
    for each $r \in \N^+$, based on the tail bound of discrete Laplace distribution (\cref{fact:discrete-laplace-tail}), 
    \begin{align*}
        \P{ X \ge r }
            &\le \P{
                \card{ \DiscreteLapNoise{e^{-\eps}} } \ge r
            } + \delta 
            = \frac{ 2 \cdot e^{ - \eps \cdot r } }{ 1 + e^{ - \eps } }
            + \delta
            = \frac{ 2 \cdot e^{ - \eps \cdot r } }{ 1 + e^{ - \eps } } + \frac{e^\eps - 1}{e^\eps + 1} \cdot \frac{\gamma}{1 - \gamma} \cdot \frac{1}{1 + n}.
    \end{align*}
    Therefore, 
    \begin{align*}
        \P{
            \card{
                \cM''\paren{t} - t
            } \ge r
        } 
        &\le \gamma + \PAREN{1 - \gamma} \cdot \P{ \card{ \cM'\paren{t} - t} \ge r }   \\
        &\le \gamma + \PAREN{1 - \gamma} \cdot \PAREN{ \P{ \card{ X } > r }  + \delta } 
        \le  \gamma +  \frac{e^\eps - 1}{e^\eps + 1} \cdot \gamma \cdot \frac{1}{1 + n} + \frac{ 2 \cdot e^{ - \eps \cdot r } }{ 1 + e^{ - \eps } }.
    \end{align*}
    In order to bound the last term with $\beta$, we need 
    $
        \bigparen{1 + \frac{1 - e^{ - \eps }}{1 + n}} \cdot \gamma  + 2 \cdot e^{ - \eps \cdot r }
            \le \beta \cdot \paren{ 1 + e^{ - \eps } }.
    $
    It suffices to take 
    $r \ge \frac{1}{\eps} \cdot \ln \frac{2}{ \beta - \PAREN{\frac{2 + n}{1 + n}} \cdot \gamma}.$

\end{proof}

\subsection{Approximate Discrete Laplace Sampler}
\label{sec:approx-discrete-laplace}

In this subsection, we prove \cref{thm:approximate-discrete-laplace-sampler-formal}.  
The proof proceeds in two steps.

First, we instantiate the general sampling framework (\cref{thm:time-oblivious-distribution-sampler}) for the discrete Laplace distribution, obtaining a sampler that satisfies all the guarantees in \cref{thm:approximate-discrete-laplace-sampler-formal} except for the memory usage.

Second, to reduce space complexity, we reduce discrete Laplace sampling to geometric sampling. 
We leverage the fact that the geometric distribution has identical conditional distributions over intervals of equal length, which allows us to decompose a geometric sample into the sum of two smaller geometric components. 
Each component can then be approximately sampled using our general framework over smaller supports, further reducing the overall space usage.

\subsubsection*{\bf Instantiating the Framework for Discrete Laplace}

To apply \cref{thm:time-oblivious-distribution-sampler} directly to the discrete Laplace distribution $\DiscreteLapNoise{e^{-\eps}}$, we identify the core support set $\CoreSupportSet{\delta / 2}$ and specify the oracles $\LabelOracle$ and $\ProbOracle$. 
    
\subsubsection*{Core Support Set $\CoreSupportSet{\delta / 2}$.}
Due to the symmetric decay of the distribution's probability mass from the center, the tail bound for the discrete Laplace distribution (\cref{fact:discrete-laplace-tail}) implies that 
$\CoreSupportSet{\delta / 2} = \IntSet{-L}{R}$, where 
$L = R = \ceil{ \frac{1}{\eps} \ln \frac{4}{(1 + e^{-\eps}) \delta} }$.

\subsubsection*{Labeling Oracle $\LabelOracle$.}
For each $i \in \CoreSupportSet{\delta / 2}$, we define its label as $2 \cdot \card{i} - \indicator{i > 0}$. 
Hence, $\LabelOracle(0) = 0$, $\LabelOracle(1) = 1$, $\LabelOracle(2) = -1$, and so forth.

\subsubsection*{Probability Oracle $\ProbOracle$.}
We state an additional lemma, derived from standard results in numerical computation~\citep{Brent_Zimmermann_2010, harris2020array}.
\begin{lemma}[Binary Expansion of Discrete Laplace Probability]
    \label{lem:binary-expansion-discrete-laplace}
    Given $\eps \in \Q_+$, $t \in \Z$, and $\ell \in \N^+$, the binary expansion of 
    $\P{ \DiscreteLapNoise{e^{-\eps}} = t } = \frac{1 - e^{-\eps}}{1 + e^{-\eps}} \cdot e^{ -\eps \cdot \card{t} }$ 
    up to $\ell$ bits after the fractional point, denoted by 
    $\BinaryExp{ \P{ \DiscreteLapNoise{e^{-\eps}} = t } }{ \ell }$, 
    can be computed in $O \PAREN{ \poly{ \ln \card{t}, \ell } }$ time.
\end{lemma}

By \cref{lem:binary-expansion-discrete-laplace}, given 
$\ell = \ceil{ \log_2 \bigparen{ \frac{2}{\delta} } + \log_2 \card{ \CoreSupportSet{\delta / 2} } }$ and $t \in \CoreSupportSet{\delta / 2}$, 
the $\ell$-bit binary expansion 
$\BinaryExp{\P{\DiscreteLapNoise{e^{-\eps}} = t}}{\ell}$ 
can be computed in time 
$O \paren{ \poly{ \ln t, \ell } } 
= O \big( \textsc{poly} ( \ln \frac{1}{\eps} , \ln \frac{1}{\delta} ) \big)$.
Consequently, $\ProbOracle$ can be implemented with total time 
$O \bigparen{ \frac{1}{\eps} \cdot \ln \frac{1}{\delta} \cdot \poly{ \ln \frac{1}{\eps}, \ln \frac{1}{\delta} } } 
= O \big( \frac{1}{\eps} \cdot \textsc{poly} ( \ln \frac{1}{\eps}, \ln \frac{1}{\delta} ) \big)$, 
and requires an array of 
$O \bigparen{ \frac{1}{\eps} \cdot \ln \frac{1}{\delta} \cdot \ell } = 
O \bigparen{ \frac{1}{\eps} \cdot \ln \frac{1}{\delta} \cdot \bigparen{ \ln \frac{1}{\delta} + \ln \frac{1}{\eps} } }$ bits to store the computed values.

\begin{corollary}[Time-Oblivious Sampler for Discrete Laplace Distribution] 
    \label{thm:time-oblivious-discrete-laplace}
    Let $\eps \in \Q_+$ and $\delta \in \Q_+ \cap \paren{0, 1}$.
    Then, there exists a $(0, \delta)$-approximate time-oblivious sampler for $\DiscreteLapNoise{e^{-\eps}}$ with the following properties
    \begin{itemize}[label=$\triangleright$, leftmargin=0.7cm]
        \item It has pre-computation time 
        $O \PAREN{ \frac{1}{\eps} \cdot \poly{ \ln \frac{1}{\eps}, \ln \frac{1}{\delta} } }$, 
        and uses 
        $O \PAREN{  \frac{1}{\eps} \cdot \PAREN{ \ln \frac{1}{\delta} } \cdot \PAREN{ \ln \frac{1}{\delta} + \ln \frac{1}{\eps} } }$ bits of memory.
        \item Each sampler consumes at most $\log_2 \PAREN{ \frac{1}{\eps} \cdot \ln \frac{1}{\delta} } + \log_2 \frac{1}{\delta} + O(1)$ unbiased random bits, which runs in 
        $O \PAREN{ \frac{1}{\omega} \cdot \PAREN{ \ln \frac{1}{\eps} + \ln \frac{1}{\delta} } }$ time in the worst case.
    \end{itemize}
\end{corollary}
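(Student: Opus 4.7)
The plan is to obtain this corollary as a direct instantiation of the general framework in \cref{thm:time-oblivious-distribution-sampler}, applied to $\mu = \DiscreteLapNoise{e^{-\eps}}$. The three ingredients required by that framework—the core support set $\CoreSupportSet{\delta/2}$, the labeling oracle $\LabelOracle$, and the binary probability oracle $\ProbOracle$—are already identified in the discussion immediately preceding the statement. The task reduces to tracking how the resource parameters of the framework specialize when each ingredient is bound in terms of $\eps$ and $\delta$.

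First, I would bound $\card{\CoreSupportSet{\delta/2}}$. Using the tail bound for the discrete Laplace distribution (\cref{fact:discrete-laplace-tail}) and the symmetry of the distribution around $0$, taking $\CoreSupportSet{\delta/2} = \IntSet{-L}{L}$ with $L = \ceil{\tfrac{1}{\eps}\ln \tfrac{4}{(1 + e^{-\eps})\delta}}$ captures total mass at least $1 - \delta/2$, and any smaller symmetric window around $0$ misses more mass than $\delta/2$. Hence $\card{\CoreSupportSet{\delta/2}} = 2L + 1 \in O\bigparen{\tfrac{1}{\eps}\ln \tfrac{1}{\delta}}$. For the labeling oracle $\LabelOracle$, the bijection $i \mapsto 2\card{i} - \indicator{i > 0}$ between $\CoreSupportSet{\delta/2}$ and $\range{2L + 1}$ is evaluable in $O(1)$ word operations, so $\LabelOracle$ contributes only negligible overhead.

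Second, I would instantiate $\ProbOracle$ via \cref{lem:binary-expansion-discrete-laplace}. Setting $\ell = \ceil{\log_2(2/\delta) + \log_2 \card{\CoreSupportSet{\delta/2}}} \in O\bigparen{\ln \tfrac{1}{\eps} + \ln \tfrac{1}{\delta}}$, each query $\ProbOracle(\P{\DiscreteLapNoise{e^{-\eps}} = t}, \ell)$ for $t \in \CoreSupportSet{\delta/2}$ runs in $O\bigparen{\poly(\ln \card{t}, \ell)} = O\bigparen{\poly(\ln \tfrac{1}{\eps}, \ln \tfrac{1}{\delta})}$ time. Over all $O\bigparen{\tfrac{1}{\eps}\ln \tfrac{1}{\delta}}$ elements, the total cost is $O\bigparen{\tfrac{1}{\eps} \cdot \poly(\ln \tfrac{1}{\eps}, \ln \tfrac{1}{\delta})}$, matching the claimed pre-computation bound.

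Finally, I would plug these parameters into \cref{thm:time-oblivious-distribution-sampler}. The memory bound of that framework evaluates to $O\bigparen{\card{\CoreSupportSet{\delta/2}} \cdot (\ln \tfrac{1}{\delta} + \ln \card{\CoreSupportSet{\delta/2}})} = O\bigparen{\tfrac{1}{\eps}\ln \tfrac{1}{\delta} \cdot (\ln \tfrac{1}{\delta} + \ln \tfrac{1}{\eps})}$ bits, exactly as stated. The per-sample random bit count $\log_2 \card{\CoreSupportSet{\delta/2}} + \log_2 \tfrac{1}{\delta} + O(1)$ becomes $\log_2\bigparen{\tfrac{1}{\eps} \ln \tfrac{1}{\delta}} + \log_2 \tfrac{1}{\delta} + O(1)$, and the per-sample word-RAM runtime becomes $O\bigparen{\tfrac{1}{\omega}(\ln \tfrac{1}{\eps} + \ln \tfrac{1}{\delta})}$. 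There is no real obstacle beyond careful bookkeeping of the logarithmic factors; the only subtlety is verifying that $\CoreSupportSet{\delta/2}$ is minimum up to a constant (so that the bounds of the framework apply with the right order), which follows from the two-sided tail estimate for the discrete Laplace.
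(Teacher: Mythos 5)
Your proposal follows exactly the paper's argument: identify $\CoreSupportSet{\delta/2} = \IntSet{-L}{L}$ via the discrete Laplace tail bound, use the interleaving label $i \mapsto 2\card{i} - \indicator{i > 0}$, implement $\ProbOracle$ via \cref{lem:binary-expansion-discrete-laplace} at $\ell \in O(\ln\tfrac{1}{\eps} + \ln\tfrac{1}{\delta})$ bits of precision, and then substitute $\card{\CoreSupportSet{\delta/2}} \in O(\tfrac{1}{\eps}\ln\tfrac{1}{\delta})$ into the resource bounds of \cref{thm:time-oblivious-distribution-sampler}. This is precisely what the paper does in the discussion preceding the corollary; the only minor remark is that your concern about $\CoreSupportSet{\delta/2}$ being minimum only up to a constant is unnecessary, since the unimodal, symmetric PMF of the discrete Laplace makes the symmetric interval around $0$ the exact minimum-size set for the given mass threshold.
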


\subsubsection*{\bf Memory-Efficient Construction}

In this subsection, for $\eps \in \paren{0, 1}$, we show how to reduce the 
$O \bigparen{ \frac{1}{\eps} \cdot \ln \frac{1}{\delta} \cdot \bigparen{ \ln \frac{1}{\delta} + \ln \frac{1}{\eps} } }$ 
bit memory usage in \cref{thm:time-oblivious-discrete-laplace} 
to 
$O \bigparen{ \bigparen{ \frac{1}{\eps} + \ln \frac{1}{\delta} } \cdot \bigparen{ \ln \frac{1}{\delta} + \ln \frac{1}{\eps} } }$ 
bits, as stated in \cref{thm:approximate-discrete-laplace-sampler-formal}, 
at the cost of using slightly more random bits per sample,  
to complete the proof of \cref{thm:approximate-discrete-laplace-sampler-formal}.

The sampler construction is provided in \cref{algo:finite-discrete-laplace-noise}. 
It relies on two key decompositions. 
The first expresses a discrete Laplace random variable as a function of three simpler random variables: 
a Bernoulli, a uniform over $\{-1, 1\}$, and a geometric. 
The Bernoulli and uniform variables can be (approximately) sampled directly.

The second decomposition breaks the geometric random variable into a composition of two simpler geometric variables, 
which are then sampled by our general framework based on the Alias method (\cref{algo:finite-Alias-method}).

\begin{algorithm}[!t]
    \caption{Discrete Laplace Noise $\algoDLap$}
    \label{algo:finite-discrete-laplace-noise}
    \begin{algorithmic}[1]
        \Statex \hspace{-6.8mm} {\bf Procedure:}        \textsc{Initialization}
        \Statex \hspace{-4.6mm} {\bf Input:} 
            $\eps \in \Q_+$, $\delta \in \paren{0, 1} \cap \Q_+$; 

        \State $r \gets 2^{\ceil{\log_2 \frac{1}{\eps}}}$
        \State $\algoFiniteAlias^\downarrow \gets \algoFiniteAliasInitialization \paren{\operatorname{\mathbb{G}eo} \paren{e^{-\eps}, \range{r}}, \delta / 3}$
        \State $\algoFiniteAlias^\uparrow \gets \algoFiniteAliasInitialization \paren{ \operatorname{\mathbb{G}eo} \paren{ e^{- r \cdot \eps} }, \delta / 3}$
        \State 
        $
            \dlcenter' \gets \BinaryExp{
                \dlcenter = \P{ \DiscreteLapNoise{ e^{-\eps} } = 0 }
            }{
                \ell = \log_2 \frac{3}{\delta}
            }
        $
        
        \Statex \hspace{-6.8mm} {\bf Procedure:}        \textsc{Sample}
        \Statex \hspace{-4.8mm} {\bf Output}: Random variable $Z$ s.t. $\tvdistance{Z}{\DiscreteLapNoise{e^{-\eps}}} \le \delta$
        \State $B \gets \Bernoulli{\dlcenter'}$
            \State $S \gets \UniformNoise{\set{-1, 1}}$ 
            \State $X \gets \algoFiniteAlias^\uparrow.\sample \paren{}$
            \State $Y \gets \algoFiniteAlias^\downarrow.\sample \paren{}$
        \State {\bf return} $Z \gets \indicator{B = 0} \cdot S \cdot \PAREN{1 + r \cdot X + Y}$
    \end{algorithmic}
\end{algorithm}

\subsubsection*{Decomposition of the Discrete Laplace Distribution.}
The discrete Laplace distribution $\DiscreteLapNoise{e^{-\eps}}$ is defined by $\P{\DiscreteLapNoise{e^{-\eps}} = t} \propto e^{ - \card{t} \cdot \eps}$ for each $t \in \Z$.
This distribution is symmetric about zero, and conditioned on $t > 0$ or $t < 0$, it reduces to a geometric distribution.
This observation motivates the following sampling procedure: first, sample a Bernoulli random variable $B \sim \Bernoulli{ \dlcenter }$, where $\dlcenter \doteq \P{ \DiscreteLapNoise{ e^{-\eps} } = 0 }$, to decide whether to return $0$;
if not, sample a sign $S \sim \UniformNoise{ \set{ -1, 1 } }$ to determine the output’s sign, and a geometric variable $X \sim \GeometricNoise{ e^{-\eps} }$ to determine the magnitude.
This decomposition is formalized in \cref{fact:discrete-laplace-decomposition}.

\begin{fact}[Discrete Laplace Decomposition]
\label{fact:discrete-laplace-decomposition}
    Given $\eps \in \R_+$, let $\dlcenter \doteq \P{ \DiscreteLapNoise{ e^{-\eps} } = 0 }$, 
    $B \sim \Bernoulli{ \dlcenter }$, 
    $S \sim \UniformNoise{ \set{ -1, 1 } }$, 
    and $X \sim \GeometricNoise{ e^{-\eps} }$. 
    Then the random variable defined by 
    $Y \doteq \indicator{ B = 0 } \cdot S \cdot \PAREN{ 1 + X }$
    satisfies $Y \sim \DiscreteLapNoise{ e^{-\eps} }$.
\end{fact}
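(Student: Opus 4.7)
The plan is to compute the probability mass function of $Y$ directly by case analysis on the value of $B$, and verify term by term that it matches the pmf of $\DiscreteLapNoise{e^{-\eps}}$ stated in \cref{def:discrete-laplace}. Since $B$, $S$, and $X$ are independent, splitting on $B$ gives two disjoint events whose probabilities can be multiplied out cleanly.

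First I would handle $t = 0$. When $B = 1$, the indicator $\indicator{B = 0}$ vanishes and $Y = 0$; when $B = 0$, we have $Y = S \cdot (1 + X)$ with $X \in \N$, so $1 + X \ge 1$ and $Y \ne 0$ regardless of $S$. Therefore $\P{Y = 0} = \P{B = 1} = \dlcenter = \P{\DiscreteLapNoise{e^{-\eps}} = 0}$, by the definition of $\dlcenter$.

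Next, for $t \in \Z \setminus \{0\}$, only the event $\{B = 0\}$ contributes. Using independence and the pmf of the geometric distribution from \cref{def:geometric-distribution},
\begin{equation*}
    \P{Y = t}
    = \P{B = 0} \cdot \P{S = \sgn{t}} \cdot \P{X = |t| - 1}
    = (1 - \dlcenter) \cdot \tfrac{1}{2} \cdot (1 - e^{-\eps}) \cdot e^{-\eps (|t| - 1)}.
\end{equation*}
The main (and only) algebraic step is substituting $\dlcenter = \frac{1 - e^{-\eps}}{1 + e^{-\eps}}$, which yields $1 - \dlcenter = \frac{2 e^{-\eps}}{1 + e^{-\eps}}$. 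Plugging this in and simplifying gives $\P{Y = t} = \frac{1 - e^{-\eps}}{1 + e^{-\eps}} \cdot e^{-\eps |t|}$, matching the discrete Laplace pmf.

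I do not anticipate any real obstacle: the argument is a direct computation, and the only place a mistake could creep in is the arithmetic identity relating $1 - \dlcenter$ to $\frac{2 e^{-\eps}}{1 + e^{-\eps}}$, which is routine. A brief sanity check that $\sum_{t \in \Z} \P{Y = t} = 1$ can be included at the end (summing the two-sided geometric series), but this is implied by the pmf match and need not be written out.
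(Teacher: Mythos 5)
Your proposal is correct and follows essentially the same route as the paper: case-split on $B$ to get $\P{Y=0}=\dlcenter$, then multiply out $\P{B=0}\cdot\P{S=\sgn{t}}\cdot\P{X=|t|-1}$ and simplify using $1-\dlcenter=\frac{2e^{-\eps}}{1+e^{-\eps}}$. The only cosmetic difference is that you treat all $t\ne 0$ uniformly via $\sgn{t}$ and $|t|$, whereas the paper writes out only $t\in\N_+$ and leaves the negative side to symmetry.
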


\begin{proof}[Proof of \cref{fact:discrete-laplace-decomposition}]
    Clearly $\P{Y = 0} = \P{B = 1} = \P{\DiscreteLapNoise{p} = 0}$.
    For each $t \in \N_+$, we have 
    \begin{align}
        \P{Y = t} 
            &= \P{B = 0} \cdot \P{S = 1} \cdot \P{X = t - 1} \\
            &= \PAREN{1 - \P{\DiscreteLapNoise{p} = 0}} \cdot \frac{1}{2} \cdot \PAREN{1 - p} \cdot p^{t - 1} \\
            &= \PAREN{1 - \frac{1 - p}{1 + p}} \cdot \frac{1}{2} \cdot \PAREN{1 - p} \cdot p^{t - 1} \\
            &= \frac{1 - p}{1 + p} \cdot p^{t}. 
    \end{align}
\end{proof}

One caveat is that $\dlcenter$ is a real number, and thus cannot be computed exactly or sampled from $\Bernoulli{\dlcenter}$ without approximation. 
However, as in \cref{lem:binary-expansion-discrete-laplace}, standard numerical techniques from \citet{Brent_Zimmermann_2010} allow us to compute an $\ell$-bit binary approximation $\dlcenter' = \BinaryExp{\dlcenter}{\ell}$ in $\Tilde{O}(\ell)$ time.
As long as $\ell \in \Omega \paren{ \log \frac{1}{\delta} }$, we can ensure that $\tvdistance{ \Bernoulli{\dlcenter} }{ \Bernoulli{\dlcenter'} } \in O(\delta)$.
It remains to discuss how to approximately sample from $\GeometricNoise{ e^{-\eps} }$ within total variation distance $O(\delta)$.

\subsubsection*{Decomposition of the Geometric Distribution.}
Applying our general framework (\cref{thm:time-oblivious-distribution-sampler}, \cref{algo:finite-Alias-method}) to approximately sample from $\GeometricNoise{e^{-\eps}}$ directly requires arrays whose lengths are proportional to the size of the core support set $\CoreSupportSet{\delta / 2} = \IntSet{0}{ \ceil{ \frac{1}{\eps} \cdot \ln \frac{2}{\delta} } }$ (see \cref{thm:time-oblivious-distribution-sampler} for the definition).

To reduce space usage, we cover the interval $\IntSet{0}{ \ceil{ \frac{1}{\eps} \cdot \ln \frac{2}{\delta} } }$ into smaller intervals of equal length: for $r \doteq 2^{\ceil{ \log_2 \frac{1}{\eps} }}$, define $\cI_i \doteq \IntSet{ i \cdot r }{ i \cdot r + r - 1 }$ for each $i = 0, \ldots, \ceil{ \ceil{ \frac{1}{\eps} \cdot \ln \frac{2}{\delta} } / r }$.
The key observation is that, conditioned on $Z$ falling into interval $\cI_i$, the relative position $t$ within the interval follows identical distribution $\GeometricNoise{e^{-\eps}}{\range{r}}$ (recall \cref{def:geometric-distribution}): for $Z \sim \GeometricNoise{e^{-\eps}}$,
\[
    \P{ Z = i \cdot r + t \mid Z \in \cI_i } \propto e^{-t \cdot \eps}, \quad \forall t \in \range{r}.
\]

This motivates a two-stage sampling procedure using independent geometric random variables: the first determines the interval index $i$, and the second determines the offset $t$ within $\cI_i$.
This decomposition is formalized in \cref{lemma:decomposition-of-geometric-distribution}.

\begin{lemma}[Geometric Decomposition]
    \label{lemma:decomposition-of-geometric-distribution}
    Let $\eps \in \R_+$, and $r \doteq 2^{\ceil{\log_2 \frac{1}{\eps}}}$.  
    Suppose $X \sim \GeometricNoise{e^{- r \cdot \eps}}$ and $Y \sim \GeometricNoise{e^{-\eps}}{\range{r}}$.  
    Then $r \cdot X + Y$ is distributed as $\GeometricNoise{e^{-\eps}}$.
\end{lemma}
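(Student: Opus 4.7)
The plan is to compute the probability mass function of $rX + Y$ directly and verify it matches the PMF of $\GeometricNoise{e^{-\eps}}$. The key observation to exploit is that, because $Y$ takes values in $\range{r} = \{0, 1, \ldots, r-1\}$, every nonnegative integer $t$ admits a \emph{unique} decomposition of the form $t = r \cdot k + s$ with $k \in \N$ and $s \in \range{r}$, namely $k = \lfloor t / r \rfloor$ and $s = t \bmod r$. Consequently, the event $\{rX + Y = t\}$ coincides with the event $\{X = k\} \cap \{Y = s\}$, and by independence the joint probability factors into the product of marginals.

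Next, I would substitute the PMF formulas from \cref{def:geometric-distribution} with $p = e^{-\eps}$. Setting $q \doteq e^{-r \cdot \eps} = p^r$, the infinite-support variable $X \sim \GeometricNoise{q}$ satisfies $\P{X = k} = (1 - q) \cdot q^k = (1 - p^r) \cdot p^{r \cdot k}$, and the truncated variable $Y \sim \GeometricNoise{p}{\range{r}}$ satisfies
\[
    \P{Y = s} = \frac{1 - p}{1 - p^r} \cdot p^s, \qquad \forall s \in \range{r}.
\]
Multiplying these two expressions, the factor $1 - p^r$ telescopes with its inverse, leaving
\[
    \P{rX + Y = t} = (1 - p) \cdot p^{r \cdot k + s} = (1 - p) \cdot p^{t},
\]
which is exactly the PMF of $\GeometricNoise{e^{-\eps}}$ evaluated at $t$.

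Since this holds for every $t \in \N$, the two distributions agree pointwise, completing the proof. There is no real obstacle here beyond careful bookkeeping: the cancellation of the truncation normalizer $1 - p^r$ against the unconditional tail mass of $X$ is precisely what makes the decomposition work, and it is the mathematical reason one may safely replace a single sample from the large-support geometric by two independent samples over smaller supports. Note that the result does not actually require $r$ to be a power of two or tied to $\eps$ in any specific way; the identity holds for every positive integer $r$, although the particular choice $r = 2^{\ceil{\log_2 (1/\eps)}}$ is what makes the downstream space bound in \cref{thm:approximate-discrete-laplace-sampler-formal} favourable.
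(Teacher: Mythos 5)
Your proposal is correct and matches the paper's own proof essentially line for line: both fix $t \in \N$, decompose it uniquely as $t = r\lfloor t/r\rfloor + (t \bmod r)$, factor the joint probability by independence, and observe that the truncation normalizer $1 - p^r$ cancels to yield $(1-p)p^t$. The student's side remark that the identity holds for any positive integer $r$ (the power-of-two choice mattering only for the downstream space bound) is also accurate.
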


\begin{proof}[Proof of \cref{lemma:decomposition-of-geometric-distribution}]
    For each $t \in \N$,
    \begin{align}
        \P{r \cdot X + Y = t}
            &= \P{X = \floor{t / r}} \cdot \P{Y = t \mod r} \\
            &= \PAREN{1 - e^{- \eps \cdot r }} \cdot e^{ - \eps \cdot r \cdot \floor{t / r} } \cdot \frac{1 - e^{-\eps}}{1 - e^{-\eps \cdot r}} \cdot e^{-\eps \cdot \PAREN{t \mod r}} \\
            &= \PAREN{1 - e^{-\eps}} \cdot e^{ 
                -\eps \cdot \PAREN{
                    r \cdot \floor{t / r}
                    + \PAREN{t \mod r}
                }
            } \\
            &= \PAREN{1 - e^{-\eps}} \cdot e^{ 
                -\eps \cdot t
            }
    \end{align}
\end{proof}

\noindent
Now we are ready to formally prove \cref{thm:approximate-discrete-laplace-sampler-formal}.

\begin{proof}[Proof of \cref{thm:approximate-discrete-laplace-sampler-formal}]
    The precomputation cost is similar to that in~\cref{thm:time-oblivious-discrete-laplace}.  
    It therefore suffices to analyze the memory usage, sampling cost, and sample quality.
    
    In \cref{algo:finite-discrete-laplace-noise}, 
    the sampler $\algoFiniteAlias^\downarrow$ for $\GeometricNoise{e^{-\eps}}{ \range{r} }$, within total variation distance $\delta / 3$, uses 
    $O \PAREN{ \frac{1}{\eps} \cdot \PAREN{ \ln \frac{1}{\delta} + \ln \frac{1}{\eps} } }$ bits of memory, 
    consumes $\log_2 \frac{1}{\eps} + \log_2 \frac{1}{\delta} + O(1)$ unbiased random bits per sample, 
    and runs in $O \PAREN{ \frac{1}{\omega} \cdot \ln \frac{1}{\eps} }$ time in the worst case.
    
    The sampler $\algoFiniteAlias^\uparrow$ for $\GeometricNoise{e^{- r \cdot \eps}}$, within total variation distance $\delta / 3$, uses 
    $O \PAREN{ \ln^2 \frac{1}{\delta} }$ bits of memory, 
    consumes $\log_2 \PAREN{ \ln \frac{1}{\delta} } + \log_2 \frac{1}{\delta} + O(1)$ unbiased random bits, 
    and runs in $O \PAREN{ \frac{1}{\omega} \cdot \ln \frac{1}{\delta} }$ time in the worst case.
    
    Finally, in \cref{algo:finite-discrete-laplace-noise}, we have 
    $\tvdistance{B}{\Bernoulli{\dlcenter}} \le \delta / 3$, 
    $\tvdistance{X}{\GeometricNoise{e^{- r \cdot \eps} }} \le \delta / 3$ 
    and $\tvdistance{Y}{\GeometricNoise{e^{-\eps}}{ \range{r}}} \le \delta / 3$.
    Then, by the data processing inequality (\cref{prop:data-processing-inequality}), sub-additivity of total variation distance (\cref{prop:tvd-subadditivity}), we obtain 
    \begin{align*}
        &\tvdistance{Z}{\DiscreteLapNoise{e^{-\eps}}} \\
            &\qquad \le \tvdistance{
                \set{B, S, X, Y}
            }{
                \set{
                    \Bernoulli{\dlcenter},
                    \UniformNoise{\set{-1, 1}},
                    \GeometricNoise{e^{- r \cdot \eps} },
                    \GeometricNoise{e^{-\eps}}{ \range{r}}
                }
            } \\
            &\qquad \le \tvdistance{B}{\Bernoulli{\dlcenter}} 
            + \tvdistance{X}{\GeometricNoise{e^{- r \cdot \eps} }}
            + \tvdistance{Y}{\GeometricNoise{e^{-\eps}}{ \range{r}}} \\
            &\qquad \le \delta.
    \end{align*}
\end{proof}

\newpage
\section{Proofs for \texorpdfstring{\cref{sec:dp-sparse-histogram}}{DP-Sparse-Histogram}}
\label{sec:proofs-for-dp-sparse-histogram}

\subsection{Privacy Guarantee}
\label{sec:missing-privacy-proofs-for-dp-sparse-histogram}

\begin{proof}[Proof of \cref{lem:dist-support-case-2a}]
    Recall that in \emph{Case 2}, we have  $\supp \paren{\hist} \cup \set{ i^* } = \supp \paren{\hist'}$, and therefore 
    \begin{align*}
            \hist'[i^*] = 1, \, \hist{i^*} = 0, \,
            \hist{j^*} > \hist'[j^*]  > 0.
    \end{align*}

    \noindent
    By definition of $\mu$ and $\mu'$, for each $S$, 
    \begin{align}
        \label{eq:expansion-of-mu}
        \mu(S) = \P{I = S} 
            &= \sum_{J \subseteq \supp \paren{\hist}} \P{ I = S \mid I_1 = J } \cdot \P{ I_1 = J }, \\
        \mu'(S) = \P{I' = S} 
                &= \sum_{J' \subseteq \supp \paren{\hist'}} \P{ I' = S \mid I_1' = J'} \cdot \P{ I_1' = J'}.
    \end{align}

    \noindent
    We can partition the subsets in $\supp \paren{\hist'}$ into the ones which contain $i^*$ and those which do not as follows: 
    \begin{align*}
        \set{ J' : J' \in \supp \paren{\hist'} }
            = \set{ J : J \in \supp \paren{\hist} }
            \cup \set{ J \cup \set{i^*} : J \in \supp \paren{\hist} }.
    \end{align*}
    Therefore
    \begin{align*}
        \P{I' = S} 
            &= \hspace{-4mm} \sum_{J \subseteq \supp \paren{\hist}} \hspace{-4mm}
                 \Big( 
                    \P{ I' = S \mid I_1' = J} \P{ I_1' = J } 
                + \P{ I' = S \mid I_1' = J \cup \set{i^*} } \P{ I_1' = J \cup \set{i^*} } \Big)
            .
    \end{align*}
    Based on the definition of $\distSupport'_{\bar{\eventSelectIstart}}$, we have 
    \begin{align}
        \distSupport'_{\bar{\eventSelectIstart}}(S)
            = \sum_{J \subseteq \supp \paren{\hist}} 
                    \P{ I' = S \mid I_1' = J} \cdot \P{ I_1' = J \mid i^* \notin I_1'} 
    \end{align}
    Comparing it with \cref{eq:expansion-of-mu}, it follows directly from the construction of \cref{algo:tail-item-padding} that 
    $$
        \P{ I' = S \mid I_1' = J } = \P{ I = S \mid I_1 = J }.
    $$
    To compare $\P{ I_1 = J }$ and $\P{ I_1' = J \mid i^* \notin I_1'}$, we need the following lemma.

    \begin{lemma}
        \label{lemma:expansion-of-probablity-for-neighboring-private-samples}
        For each $i \in [d]$, let $Z_i$ (or $Z_i'$) be the indicator for $i \in I_1$ (or $i \in I_1'$).
        For each $J \subseteq \supp \paren{\hist}$, define $r_J \doteq \P{ Z_{j^*} = \indicator{j^* \in J} } / \P{ Z_{j^*}' = \indicator{j^* \in J} }$. 
        Then 
        \begin{align*}
            \P{ I_1' = J } 
                &= (1 - p_\tau) \cdot r_J \cdot \P{ I_1 = J }, 
            \qquad
            \P{ I_1' = J \cup \set{i^*} } 
                = p_\tau \cdot r_J \cdot \P{ I_1 = J }.
        \end{align*}
    \end{lemma}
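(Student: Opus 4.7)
The plan is to expand $\P{I_1 = J}$, $\P{I_1' = J}$, and $\P{I_1' = J \cup \{i^*\}}$ as products using the independence of the noise additions across coordinates, cancel the factors that coincide under $\hist$ and $\hist'$, and identify the surviving residual terms with $p_\tau$ and $r_J$. Since each $\intermediateHist{i} = \PurifiedApproxDiscreteLaplaceMechanism{n}{\eps}{\eps\gamma / d}{\hist{i}}$ is drawn independently of the others, $\{Z_i\}_{i \in \supp(\hist)}$ are mutually independent Bernoullis and similarly $\{Z_i'\}_{i \in \supp(\hist')}$; hence for every $J \subseteq \supp(\hist)$ and $J' \in \{J,\, J \cup \{i^*\}\}$,
\[
  \P{I_1 = J} = \prod_{i \in \supp(\hist)} \P{Z_i = \indicator{i \in J}}, \qquad \P{I_1' = J'} = \prod_{i \in \supp(\hist')} \P{Z_i' = \indicator{i \in J'}}.
\]

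I would then invoke the Case~2 hypotheses: $\supp(\hist') = \supp(\hist) \cup \{i^*\}$, $\hist{i^*} = 0$, $\hist'[i^*] = 1$, and $\hist{i} = \hist'[i]$ for every $i \notin \{i^*, j^*\}$. For each $i \in \supp(\hist) \setminus \{j^*\}$ these force $Z_i$ and $Z_i'$ to be identically distributed, so those matched factors cancel when one forms the ratio $\P{I_1' = J'}/\P{I_1 = J}$. Only two residuals survive: (i) the factor at $i^*$, which exists only in the $\hist'$ product and equals $\P{Z_{i^*}' = 0}$ when $J' = J$ and $\P{Z_{i^*}' = 1}$ when $J' = J \cup \{i^*\}$; and (ii) the $j^*$ factor, which contributes the ratio $\P{Z_{j^*}' = \indicator{j^* \in J}}/\P{Z_{j^*} = \indicator{j^* \in J}}$, matching the quantity $r_J$ up to reciprocation.

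Finally, I would identify the $i^*$ contribution with $p_\tau$. Since $\hist'[i^*] = 1$, the algorithm sets $Z_{i^*}' = \indicator{\PurifiedApproxDiscreteLaplaceMechanism{n}{\eps}{\eps\gamma / d}{1} \ge \tau}$, so by the very definition of $p_\tau$ one has $\P{Z_{i^*}' = 1} = p_\tau$ and $\P{Z_{i^*}' = 0} = 1 - p_\tau$. Plugging this into the factorization immediately yields both claimed identities. No step presents a genuine difficulty; the proof is essentially bookkeeping made clean by the product structure. The one delicate point is getting the orientation of the $j^*$-ratio right—whether it ends up as $r_J$ or $1/r_J$ in the final expression—which a sanity check on a two-element support is enough to pin down unambiguously.
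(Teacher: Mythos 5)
Your proof is correct and mirrors the paper's: both expand $\P{I_1 = J}$ and $\P{I_1' = J'}$ as products over $\supp\paren{\hist}$ and $\supp\paren{\hist'}$, cancel the matched factors at $i \in \supp\paren{\hist} \setminus \set{j^*}$ where $\hist$ and $\hist'$ agree, read off $p_\tau = \P{i^* \in I_1'}$ from the lone $i^*$ factor, and fold the $j^*$ factor into $r_J$. Your hesitation about the orientation of $r_J$ is warranted: with $r_J$ defined as $\P{Z_{j^*} = \indicator{j^* \in J}} / \P{Z_{j^*}' = \indicator{j^* \in J}}$, the factorization actually yields $\P{I_1' = J} = (1 - p_\tau)\,\P{I_1 = J}/r_J$ and $\P{I_1' = J \cup \set{i^*}} = p_\tau\,\P{I_1 = J}/r_J$, so the displayed identities should carry $1/r_J$ in place of $r_J$ (equivalently, $r_J$ should be defined as the reciprocal ratio, which is what the paper's own proof silently substitutes when it replaces $\P{Z_{j^*}' = \indicator{j^* \in J}}$ by $r_J \cdot \P{Z_{j^*} = \indicator{j^* \in J}}$). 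This mismatch is immaterial to the downstream privacy argument, which only invokes the orientation-symmetric bound $r_J \in (e^{-\eps/2}, e^{\eps/2})$.
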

    \noindent
    Based on the privacy guarantee of the noise sampler in \cref{algo:tail-item-padding} (\cref{thm:purified-approximate-discrete-laplace-sampler-informal}), it holds that $r_J \in (e^{-\eps / 2}, e^{\eps / 2})$.
    Therefore, for each $J \subseteq \supp \paren{\hist}$, 
    \begin{align*}
        \P{ I_1' = J \mid i^* \notin I_1'}
            &= \frac{
                \P{ I_1' = J , i^* \notin I_1'}
            }{
                \P{ i^* \notin I_1' }
            } 
            = \frac{
                \P{ I_1' = J }
            }{
                1 - p_\tau
            } \\
            &= r_J \cdot \P{ I_1 = J } 
            \in (e^{-\eps / 2}, e^{\eps / 2}) \cdot \P{ I_1 = J },
    \end{align*}
    where by definition of $p_\tau$, $\P{ i^* \notin I_1' } = 1 - p_\tau$ and 
    $\P{ I_1' = J , i^* \notin I_1'} = \P{ I_1' = J }$ since for each 
    $J \subseteq \supp \paren{\hist}$, $i^* \notin J$.
    It concludes that 
    \begin{equation}
        \distSupport'_{\bar{\eventSelectIstart}}(S) \in (e^{-\eps / 2}, e^{\eps / 2}) \cdot \mu(S).
    \end{equation}
\end{proof}

\begin{proof}[Proof of \cref{lem:dist-support-case-2b}]
    Following the same argument as in the proof of \cref{lem:dist-support-case-2a},  
    for each $S$ we have
    \begin{align}
        \distSupport'_{\eventSelectIstart}(S)
            &= \sum_{J \subseteq \supp \paren{\hist}}
                \P{ I' = S \mid I_1' = J \cup \set{i^*} }
                \cdot
                \P{ I_1' = J \cup \set{i^*} \mid i^* \in I_1' } .
    \end{align}
    We compare this with
    \[
        \mu(S)
            = \P{I = S}
            = \sum_{J \subseteq \supp \paren{\hist}}
                \P{ I = S \mid I_1 = J } \cdot \P{ I_1 = J } .
    \]

    \paragraph{Step 1: Comparing the marginal probabilities.}
    By \cref{lemma:expansion-of-probablity-for-neighboring-private-samples},
    \begin{align*}
        \P{ I_1' = J \cup \set{i^*} \mid i^* \in I_1' }
            &= \frac{
                \P{ I_1' = J \cup \set{i^*},\, i^* \in I_1' }
            }{
                \P{ i^* \in I_1' }
            }
            = \frac{
                \P{ I_1' = J \cup \set{i^*} }
            }{
                p_\tau
            } \\
            &= r_J \cdot \P{ I_1 = J }
            \in (e^{-\eps/2}, e^{\eps/2}) \cdot \P{ I_1 = J },
    \end{align*}
    where $\P{ i^* \in I_1'} = p_\tau$ by definition.

    \noindent
    \paragraph{Step 2: Comparing the conditional probabilities.}
    We have the following lemma.
    \begin{lemma}
        \label{lemma:comparion-of-conditional-padding-probablities}
        Let $\ConditionalSupportRatio \doteq \frac{d}{k}$.
        For each $J \subseteq \supp \paren{\hist}$,
        \begin{equation}
            \P{ I' = S \mid I_1' = J \cup \set{i^*} }
            \le
            \ConditionalSupportRatio \cdot
            \P{ I = S \mid I_1 = J } .
        \end{equation}
    \end{lemma}

    \noindent
    Applying the lemma and combining both steps,
    \begin{align}
        \mu(S) 
            &\ge \sum_{J \subseteq \supp \paren{\hist}} \frac{1}{\ConditionalSupportRatio} \cdot \P{ I' = S \mid I_1' = J \cup \set{i^*} } \cdot e^{-\eps / 2} \cdot \P{ I_1' = J \cup \set{i^*} \mid i^* \in I_1' } \\
            &\ge \frac{1}{\ConditionalSupportRatio} \cdot e^{-\eps / 2} \cdot \distSupport'_{\eventSelectIstart} (S).
    \end{align}
\end{proof}

\begin{proof}[Proof of \cref{lem:dist-support-case-4}]
    Recall in this case, $\supp \paren{\hist} \cup \set{i^*} = \supp \paren{\hist'} \cup \set{ j^* }$.
    First, observe that the subsets in $\supp \paren{\hist}$ can be partitioned into those containing $j^*$ and those that do not.  
    For each $J \subseteq \supp \paren{\hist}$ with $j^* \notin J$, we have $J \subseteq \supp \paren{\hist} \cap \supp \paren{\hist'}$.  
    Hence, 
    \begin{equation}
        \begin{aligned}
            \mu(S) = \P{I = S} 
                &= \sum_{J \subseteq \supp \paren{\hist} \cap \supp \paren{\hist'} } { 
                    \P{ I = S \mid I_1 = J } \cdot \P{ I_1 = J } 
                } \\
                &+ \sum_{J \subseteq \supp \paren{\hist} \cap \supp \paren{\hist'}} { 
                    \P{ I = S \mid I_1 = J \cup \set{j^*} } \cdot  \P{ I_1 = J \cup \set{j^*} } . 
                }
        \end{aligned}
    \end{equation}
    Similarly,
    \begin{equation}
        \begin{aligned}
            \mu'(S) = \P{I' = S} 
                &= \sum_{J \subseteq \supp \paren{\hist} \cap \supp \paren{\hist'} } { 
                    \P{ I' = S \mid I_1' = J } \cdot \P{ I_1' = J } 
                } \\
                &+ \sum_{J \subseteq \supp \paren{\hist} \cap \supp \paren{\hist'}} { 
                    \P{ I' = S \mid I_1' = J \cup \set{i^*} } \cdot  \P{ I_1' = J \cup \set{i^*} } . 
                }
        \end{aligned}
    \end{equation}
    Further, by the definition of 
    $\distSupport_{\bar{\eventSelectJstart}}$, 
    $\distSupport'_{\bar{\eventSelectIstart}}$,
    $\distSupport'_{\eventSelectIstart}$, 
    $\distSupport_{\eventSelectJstart}$, 
    it holds that for each $S$
    \begin{align}
        \distSupport_{\bar{\eventSelectJstart}}(S)
            &= \sum_{J \subseteq \supp \paren{\hist} \cap \supp \paren{\hist'} } { 
                \P{ I = S \mid I_1 = J } \cdot \P{ I_1 = J \mid j^* \notin I_1 } 
            }. \\
        \distSupport_{\eventSelectJstart}(S)
            &= \sum_{J \subseteq \supp \paren{\hist} \cap \supp \paren{\hist'}} { 
                \P{ I = S \mid I_1 = J \cup \set{j^*} } \cdot  \P{ I_1 = J \cup \set{j^*} \mid j^* \in I_1 } . 
            } \\
        \distSupport'_{\bar{\eventSelectIstart}}(S)
            &= \sum_{J \subseteq \supp \paren{\hist} \cap \supp \paren{\hist'} } { 
                \P{ I' = S \mid I_1' = J } \cdot \P{ I_1' = J \mid i^* \notin I_1' } 
            }. \\
        \distSupport'_{\eventSelectIstart}(S)
            &= \sum_{J \subseteq \supp \paren{\hist} \cap \supp \paren{\hist'}} { 
                \P{ I' = S \mid I_1' = J \cup \set{i^*} } \cdot  \P{ I_1' = J \cup \set{i^*} \mid i^* \in I_1' } . 
            }
    \end{align}

    \subsubsection*{Comparison of $\distSupport_{\bar{\eventSelectJstart}}$ and $\distSupport'_{\bar{\eventSelectIstart}}$}
    Following directly from the construction of Algorithm~\ref{algo:tail-item-padding}, we have $$
        \P{ I' = S \mid I_1' = J } = \P{ I = S \mid I_1 = J }.
    $$
    Let the $Z_i$ ($Z_i'$) be defined as in Lemma~\ref{lemma:expansion-of-probablity-for-neighboring-private-samples}.
    For each $J \subseteq \supp \paren{\hist} \cap \supp \paren{\hist'}$, it holds that 
    \begin{align*}
        \P{ I_1' = J \mid i^* \notin I_1' } 
            &= \prod_{i \in \supp \paren{\hist'} \setminus \set{i^*}} \P{ Z_i' = \indicator{i \in J} } \\
            &= \prod_{i \in \supp \paren{\hist} \setminus \set{j^*}} \P{ Z_i = \indicator{i \in J} } 
            = \P{ I_1 = J \mid j^* \notin I_1 }. 
    \end{align*}
    Next,     
    \begin{align*}
        \P{ I_1' = J \cup \set{i^*} \mid i^* \in I_1' } 
            &= \prod_{i \in \supp \paren{\hist'} \setminus \set{i^*}} \P{ Z_i' = \indicator{i \in J} } \\
            &= \prod_{i \in \supp \paren{\hist} \setminus \set{j^*}} \P{ Z_i = \indicator{i \in J} } 
            = \P{ I_1 = J \cup \set{j^*} \mid j^* \in I_1 }. 
    \end{align*}
    Therefore, 
    \begin{equation}
        \label{eq:sampling-consistency}
        \begin{aligned}
            \P{ I_1 = J \mid j^* \notin I_1 } 
            &= \P{ I_1 = J \cup \set{j^*} \mid j^* \in I_1 } \\
            &= \P{ I_1' = J \mid i^* \notin I_1' } 
            = \P{ I_1' = J \cup \set{i^*} \mid i^* \in I_1' } . 
        \end{aligned}
    \end{equation}
        
    It immediately concludes that $\distSupport_{\bar{\eventSelectJstart}}= \distSupport'_{\bar{\eventSelectIstart}}$.

    \subsubsection*{Comparison of $\distSupport_{\bar{\eventSelectJstart}}$ and $\distSupport'_{\eventSelectIstart}$}
    Based on \cref{eq:sampling-consistency}, it suffices to compare for each $J \subseteq \supp \paren{\hist} \cap \supp \paren{\hist'}$, 
    $
        \P{ I = S \mid I_1 = J }\, \text{ and }\, \P{ I' = S \mid I_1' = J \cup \set{i^*} }.
    $
    There are four cases to be discussed.

    \subsubsection*{Case (i)}: $J \nsubseteq S$.
    Then
    $$
        \P{ I = S \mid I_1 = J } = \P{ I' = S \mid I_1' = J \cup \set{i^*} } = 0.
    $$

    \subsubsection*{Case (ii)}: $J \subseteq S$ and $i^* \notin S$. 
    Then 
    $$
        \P{ I = S \mid I_1 = J } \ge \P{ I' = S \mid I_1' = J \cup \set{i^*} } = 0.
    $$

    \subsubsection*{Case (iii)}: $J \subseteq S$ and $i^* \in S$.
    Recall the definition of $I_2$ in Algorithm~\ref{algo:tail-item-padding}, we have 
    \begin{align}
        \P{I = S \mid I_1 = J} 
            &= \P{ I_2 = S \setminus J }
            = \frac{1}{
                \binom{d - \card{J}}{n + k - \card{J}}
            }, \\
        \P{ I' = S \mid I_1' = J \cup \set{i^*} }
            &= \P{ I_2' = S \setminus \PAREN{J \cup \set{ i^* }} }
            = \frac{1}{
                \binom{d - \card{J} - 1}{n + k - \card{J} - 1}
            }, 
    \end{align}
    Denote $a = d - \card{J}$ and $b = n + k - \card{J}$.
    \begin{align}
        \frac{
             \P{I = S \mid I_1 = J} 
        }{
             \P{ I' = S \mid I_1' = J \cup \set{i^*} }
        }
        &= \frac{
            \binom{a - 1}{b - 1}
        }{
            \binom{a}{b}
        }
        = \frac{
            b
        }{
            a
        } \ge \frac{k}{d}.
    \end{align}    
    It concludes that $\distSupport_{\bar{\eventSelectJstart}} \ge \frac{k}{d} \cdot \distSupport'_{\eventSelectIstart}$.

    \subsubsection*{Comparison of $\distSupport'_{\bar{\eventSelectIstart}}$ and $\distSupport_{\eventSelectJstart}$}
    The comparison follows by symmetry from the analysis of
    $\distSupport_{\bar{\eventSelectJstart}}$ and $\distSupport'_{\eventSelectIstart}$.

\end{proof}

\subsection{Missing Proofs for \cref{sec:missing-privacy-proofs-for-dp-sparse-histogram}}

\begin{proof}[Proof of Lemma~\ref{lemma:expansion-of-probablity-for-neighboring-private-samples}]
        Recall in Algorithm~\ref{algo:tail-item-padding} that 
        $$
            p_\tau \doteq \P{ 1 + \ApproxDiscreteLaplaceNoise{1}{n}{\eps}{\eps \gamma / d} \ge \tau } = \P{\intermediateHist'[i^*] \ge \tau} = \P{i^* \in I_1'}.
        $$
        For each $J \subseteq \supp \paren{\hist}$, it holds that 
        \begin{align*}
            \P{ I_1' = J } 
                &= \P{i^* \notin I_1'} \cdot \P{ Z_{j^*}' = \indicator{j^* \in J} } \cdot \prod_{i \in \supp \paren{\hist} \setminus \set{j^*}} \P{ Z_i' = \indicator{i \in J} } \\
                &= (1 - p_\tau) \cdot r_J \cdot \P{ Z_{j^*} = \indicator{j^* \in J} } \cdot \prod_{i \in \supp \paren{\hist} \setminus \set{j^*}} \P{ Z_i = \indicator{i \in J} } \\
                &= (1 - p_\tau) \cdot r_J \cdot \P{ I_1 = J }, 
        \end{align*}
        Similarly, 
        \begin{align*}
            \P{ I_1' = J \cup \set{i^*} } 
                &= \P{i^* \in I_1'} \cdot \P{ Z_{j^*}' = \indicator{j^* \in J} } \cdot \prod_{i \in \supp \paren{\hist} \setminus \set{j^*}} \P{ Z_i' = \indicator{i \in J} } \\
                &= p_\tau \cdot r_J \cdot \P{ Z_{j^*} = \indicator{j^* \in J} } \cdot \prod_{i \in \supp \paren{\hist} \setminus \set{j^*}} \P{ Z_i = \indicator{i \in J} } \\
                &= p_\tau \cdot r_J \cdot \P{ I_1 = J }.
        \end{align*}

\end{proof}

    \begin{proof}[Proof of Lemma~\ref{lemma:comparion-of-conditional-padding-probablities}]
        There are several cases. 
    
        \subsubsection*{Case One}: $J \nsubseteq S$, then 
        $$
            \P{I = S \mid I_1 = J} = \P{I' = S \mid I_1' = J} = \P{I' = S \mid I_1' = J \cup \set{ i^* } }= 0.
        $$
        
        \subsubsection*{Case Two}: $J \subseteq S$ and $i^* \notin S$, then 
        \begin{align}
            \P{I' = S \mid I_1' = J \cup \set{ i^* } } 
                &= 0
                \le \P{ I = S \mid I_1 = J }.
        \end{align}
        
        \subsubsection*{Case Three}: $J \subseteq S$ and $i^* \in S$.
        First, 
        \begin{align}
            \P{I = S \mid I_1 = J} 
                &= \P{ I_2 = S \setminus J }
                = \frac{1}{
                    \binom{d - \card{J}}{n + k - \card{J}}
                }, \\
            \P{I' = S \mid I_1' = J \cup \set{ i^* }} 
                &= \P{ I_2' = S \setminus \PAREN{J \cup \set{ i^* }} }
                = \frac{1}{
                    \binom{d - \card{J} - 1}{n + k - \card{J} - 1}
                }. 
        \end{align}
        Denote $a = d - \card{J}$ and $b = n + k - \card{J}$.
        \begin{align}
            \frac{
                \P{I' = S \mid I_1' = J \cup \set{ i^* }} 
            }{
                \P{I = S \mid I_1 = J} 
            }
            &= \frac{
                \binom{a}{b}
            }{
                \binom{a - 1}{b - 1}
            }
            = \frac{
                a
            }{
                b
            }.
        \end{align}
        There is an intuitive explanation for this ratio: conditioned on $i^* \notin I_1 = J$, $\P{i^* \in S} = \P{i^* \in I_2 }= \frac{b}{a}$.
        It is easy to see that conditioned on $i^* \in I_2$, the set $I_2$ has exactly the same distribution as $I_2'$ and therefore
        $$
            \P{I_2 = S \setminus J \mid i^* \in I_2} 
            = 
            \P{I_2' = S \setminus \PAREN{J \cup \set{ i^* }} }.
        $$

    \end{proof}

\newpage
\section{Circuit-Based MPC Protocol for Sparse Histograms}
\label{subsec:mpc-circuit-sparse-hist}

In this section, we present a prototype secure multiparty computation (MPC) protocol for releasing an $\eps$-DP sparse histogram with optimal $\ell_\infty$ error guarantees.  
Specifically, we provide a circuit-based implementation of \cref{algo:tail-item-padding}.  
Our design leverages the simple structure of \cref{algo:tail-item-padding}, which facilitates efficient circuit realization. 

It is well known that any polynomial-size circuit family can be securely implemented in the MPC setting using standard techniques \citep{BGW88,GMW87}.  
We leave the optimization of the MPC protocol—such as selecting between binary or arithmetic secret sharing, or incorporating local computation under the semi-honest model—to future work.

\subsubsection*{\bf Circuit Primitives}  
We rely on the following standard circuit primitives, which are widely supported or readily implementable in modern MPC frameworks such as MP-SPDZ~\citep{mp-spdz}.  
Throughout, we assume that all inputs are represented as binary integers of bit-length $\omega$, unless otherwise specified.

\begin{itemize}[leftmargin=6mm, label={\scriptsize $\triangleright$}]
    \item \textsc{Add}: Given $a, b \in \N$, outputs $a + b$. \hfill (Gates: $O(\omega)$)
    
    \item \textsc{Mul}: Given $a, b \in \N$, outputs $a \cdot b$. \hfill (Gates: $O(\omega^2)$)
    
    \item \textsc{LessThan}: Given $a, b \in \N$, outputs $1$ if $a < b$, and $0$ otherwise. \hfill (Gates: $O(\omega)$)
    
    \item \textsc{Equal}: Given $a, b \in \N$, outputs $1$ if $a = b$, and $0$ otherwise. \hfill (Gates: $O(\omega)$)

    \item \textsc{Mux}: Given a bit $b \in \{0, 1\}$ and values $x, y \in \N$, outputs $b \cdot x + (1 - b) \cdot y$. \hfill (Gates: $O(\omega)$)

    \item \textsc{Sort}: Given $a_1, \ldots, a_n \in \N$, returns a sorted sequence $(b_1, \ldots, b_n)$. \hfill 
    (Gates: $O(\omega \cdot n \ln n)$ \citep{AjtaiKS83})

    \item \textsc{RandomTableAccess}: Given a public vector $\vec{a}$ and an index $J \in [m]$, returns $\vec{a}_J$. \hfill (Gates: $O(\omega \cdot m)$ for table of size $m$)
\end{itemize}

\noindent
We also use the following functionalities, which can be efficiently implemented using the above primitives:
\begin{itemize}[leftmargin=6mm, label={\scriptsize $\triangleright$}]
    \item \textsc{Clamp}: Given $x \in \N$ and bounds $a, b \in \N$, outputs $\max(a, \min(x, b))$. This can be implemented using \textsc{LessThan} and \textsc{Mux}. 
    \hfill (Gates: $O(\omega)$)

    \item \textsc{Bernoulli}$(p)$: Outputs $1$ with probability $p$, and $0$ otherwise. This can be implemented using a random input, \textsc{LessThan}, and \textsc{Mux}.  
    \hfill (Gates: $O(\log(1/\delta))$ for precision $\delta$)
\end{itemize}

\noindent
We also assume that uniform random numbers are provided as part of the circuit input.

\subsubsection*{\bf Protocol}
The protocol is presented in \cref{algo:eps-dp-sparse-histogram-circuit}, with the main properties summarized below.

\begin{theorem}[Private Sparse Histogram Circuit]
    \label{thm:purified-approximate-discrete-laplace-sampler-formal}\
    Given integers 
    $n, d, a_\eps, b_\eps, a_\gamma, b_\gamma \in \N_+$ 
    that fit in a constant number of machine words, 
    and $\gamma \doteq a_\gamma / b_\gamma \in \paren{ 1 / n^{O(1)}, 1 }$,  
    there exists a circuit with cost
    \[
        O \PAREN{
            \SortCircuitCost{n}{\omega} 
            + n \cdot \omega 
            + n \cdot \MulCost{\omega}
            + n \cdot \Bigparen{
                \frac{1}{\eps} \cdot \Bigparen{ \ln \frac{1}{\eps} + \ln \frac{d}{\gamma} }
                + \ln^2 \Bigparen{ \frac{d}{\eps \gamma} }
            }
        },
    \]
    where $\SortCircuitCost{n}{\omega}$ is the circuit cost of sorting $n$ elements represented using $\omega$ bits, and $\MulCost{\omega}$ is the circuit cost of multiplying two $\omega$-bit numbers.  
    This circuit implements a mechanism that takes 
    a dataset $\DataSet$ of $n$ user-contributed elements from domain $[d]$ as input, and outputs a histogram 
    $\noisyhist \in \zeroton{n}^d$ approximating the original histogram $\hist$ of $\DataSet$, 
    such that $\norm{\noisyhist}_0 \in O(n)$ and 
    \begin{itemize}[leftmargin=0.6cm, label=$\triangleright$]
    
        \item \textbf{Privacy Guarantee:}  
        $\ourAlgo$ satisfies $2\eps$-differential privacy, where $\eps \doteq a_\eps / b_\eps \in \Q_+$.
    
        \item \textbf{Utility Guarantee:}  
        There exists a universal constant $c_\alpha \in \R_+$ such that for every $\beta \ge 2 \eps \gamma$, letting 
        $\alpha \doteq (c_\alpha / \eps) \cdot \ln \paren{ d / \beta }$,  
        the output $\noisyhist$ is an $(\alpha, \beta)$-simultaneous accurate estimator of $\hist$.
    
    \end{itemize}
\end{theorem}

\begin{algorithm}[H]
    \caption{High-Level Circuit Description of the $\eps$-DP Sparse Histogram Protocol}
    \label{algo:eps-dp-sparse-histogram-circuit}
    \begin{algorithmic}[1]
        \Statex \hspace{-4.8mm} {\bf Input:} 
            parameters $d, n, k = 3n, a_\eps, b_\eps, a_\gamma, b_\gamma \in \N_+$, s.t., $\eps = a_\eps / b_\eps$ and $\gamma = a_\gamma / b_\gamma$
        \Statex \hspace{-4.8mm} \hspace*{\widthof{\textbf{Input:}}} 
            parameter $\tau \in \N_+$
            \Comment{as defined in \cref{eq:def-tau}}
        \Statex \hspace{-4.8mm} \hspace*{\widthof{\textbf{Input:}}} 
            Participants' data $\UserData{1}, \ldots, \UserData{n}$

        \vspace{2mm}
        \Statex \hspace{-4.8mm} \textbf{Histogram Construction}
        \vspace{1mm}
        
        \State \textsc{Sort} the array $\UserData{1}, \ldots, \UserData{n}$ to obtain $\SortedUserData{1}, \ldots, \SortedUserData{n}$. 

        \State $\textit{firstMinusOne} \gets 0$
        \For{$i \in [n]$}
            \If{$\SortedUserData{i} \neq \SortedUserData{i + 1})$} \Comment{$i$ is the last occurrence of $\SortedUserData{i}$}
                \State $c_i \gets i -  \textit{firstMinusOne}$ 
                    \Comment{compute frequency}  
                \State $\textit{firstMinusOne} \gets i$ 
                \State \textsc{Create pair:} $(\SortedUserData{i}, c_i)$ 
            \Else 
                \State \textsc{Create dummy pair:} $(\bot, \bot)$ 
            \EndIf
        \EndFor
        \State Collect all $n$ created pairs into array $I_1$.
        \Comment{Resulting array used for further processing}
        
        \vspace{3mm}
        \Statex \hspace{-4.8mm} \textbf{Noise Addition and Thresholding}
        \vspace{1mm}

        \For{$i \in [n]$} \Comment{each pair in $I_1$}
        \If{$I_1[i]$ is non-dummy}
            \begin{itemize}[leftmargin=1.2cm, label=$\circ$]
                    \item \textsc{Create triple:} 
                        $(y, c, Z)$, where 
                        $Z \gets \PurifiedApproxDiscreteLaplaceMechanism{n}{\eps}{\eps \gamma / d}(c)$.
                    \item If $Z < \tau$, convert the triple into a dummy triple $(\bot, \bot, \bot)$.
                    \item Otherwise, set the third component to $1$ to indicate membership in $I_1$.
                \end{itemize}
            \Else \,
                \textsc{Create dummy triple:} $(\bot, \bot, \bot)$
            \EndIf
        \EndFor
        \State Collect all $n$ created triples into array $I_1$. \Comment{Reuse the notation}
        \State Sort $I_1$ so that non-dummy triples precede dummy triples.
    
        \vspace{3mm}
        \Statex \hspace{-4.8mm} \textbf{Privacy Blanket Sampling}
        \vspace{1mm}
        \State Sample a uniform random subset $I_2 \subseteq [d]$ of size $n + k$
        \State Convert each $i \in I_2$ to a triple $(i, 0, 2)$ 
        \Comment{the label 2 indicates membership in $I_2$}

        \vspace{3mm}
        \Statex \hspace{-4.8mm} \textbf{Privacy Blanket Merging}
        \vspace{1mm}
        \State Merge $I_1$ and $I_2$ to obtain $I$:
        \begin{itemize}[label=$\circ$]
            \item Append $I_2$ to $I_1$
            \item Sort the array by the first component, breaking ties by preferring label 1 over 2
            \item If two triples are equal (based on their first component), keep the first and convert the other to dummy triple.
            \item Re-sort the array by the third component, using the order $1 < 2 < \bot$
            \item Drop the third component from all triples to obtain final pairs
        \end{itemize}

        \vspace{1mm}
        \For{$i \in [n + k]$}
            \State Let $(y, c) \gets I[i]$
            \State Sample $Z \gets \PurifiedApproxDiscreteLaplaceMechanism{n}{\eps}{\eps \gamma / d}(c)$
            \State Replace $I[i]$ with $(y, Z)$
        \EndFor
        \State \Return the first $n + k$ pairs in $I$
    \end{algorithmic}
\end{algorithm}

The key idea is to replace the dictionary-based operations used in the central model with equivalent operations implemented via sorting in circuits.
The protocol proceeds in four phases:
\emph{histogram construction}, 
\emph{noise addition and thresholding}, 
\emph{privacy blanket sampling}, 
and 
\emph{privacy blanket merging}.
We describe each component in detail below.

\subsubsection*{\bf Histogram Construction.}  
Given the participants' data $\UserData{1}, \ldots, \UserData{n}$,  
\cref{algo:eps-dp-sparse-histogram-circuit} constructs an array $I_1$ consisting of all pairs $\paren{y, \hist{y}}$ for $y \in \supp(\hist)$, along with $n - \card{\supp(\hist)}$ dummy pairs to hide the true support size.  
This is achieved by first sorting the inputs $\UserData{1}, \ldots, \UserData{n}$ into $\SortedUserData{1}, \ldots, \SortedUserData{n}$.  
Then, for each $i \in [n]$, if $i$ corresponds to the last occurrence of $\SortedUserData{i}$ in the sorted array,  
the algorithm creates a pair consisting of $\SortedUserData{i}$ and the difference between $i$ and the index of $\SortedUserData{i}$'s first appearance minus one—yielding its count $\hist{y}$.  
Otherwise, the algorithm generates a dummy pair.  
Denote the array of the generated pairs as $I_1$.

This phase uses the following circuit primitives: one \textsc{Sort}, and $O(n)$ instances each of \textsc{Mux}, \textsc{Add}, \textsc{Sub}, and \textsc{Equal}. 
Therefore, it has cost 
$$O(\SortCircuitCost{n}{\omega} + n \cdot \omega).$$

\subsubsection*{\bf Noise Addition and Thresholding.}  
In this phase, for each $i \in [n]$, if $I_1[i] = (y, c)$ is a non-dummy pair,  
\cref{algo:eps-dp-sparse-histogram-circuit} samples a noisy count $Z$ from the mechanism  
$\PurifiedApproxDiscreteLaplaceMechanism{n}{\eps}{\eps \gamma / d}(c)$ and creates a triple $(y, c, Z)$.  
If $Z < \tau$, the triple is replaced by a dummy triple $(\bot, \bot, \bot)$.  
If $I_1[i]$ is a dummy pair, the algorithm directly creates a dummy triple.
Finally, the algorithm collects all triples and sorts them so that non-dummy triples precede dummy ones.  
We reuse the notation and denote the resulting array as $I_1$ again.

After this step, the set of elements appearing in the first component of the triples in $I_1$ in  
\cref{algo:eps-dp-sparse-histogram-circuit} exactly matches those in $I_1$ of \cref{algo:tail-item-padding}.

This phase uses the following circuits: one~\textsc{Sort}, and $O(n)$ instances of \textsc{Mux}, \textsc{LessThan}, and the circuit for $\PurifiedApproxDiscreteLaplaceMechanism{n}{\eps}{\eps \gamma / d}$.  
Hence, the cost is
\[
    O \bigl(
        \SortCircuitCost{n}{\omega} + n \cdot \omega + n \cdot \mathrm{Cost}\bigl(
            \PurifiedApproxDiscreteLaplaceMechanism{n}{\eps}{\eps \gamma / d}
        \bigr)
    \bigr).
\]

\noindent
\textit{Circuit for $\PurifiedApproxDiscreteLaplaceMechanism{n}{\eps}{\eps \gamma / d}$.}
It remains to describe how to implement $\PurifiedApproxDiscreteLaplaceMechanism{n}{\eps}{\eps \gamma / d}$.  
We will show that the circuit for it can be implemented with cost
\[
    O\left(
        \frac{1}{\eps} \cdot \Bigparen{ \ln \frac{1}{\eps} + \ln  \frac{d}{\gamma} } 
        + \ln^2 \Bigparen{ \frac{d}{\eps \gamma} }
        + \MulCost{\omega}
    \right).
\]
Based on the proof of \cref{thm:purified-approximate-discrete-laplace-sampler} and the structure of \cref{alg:purification}, this mechanism can be implemented as follows:
\begin{enumerate}[leftmargin=0.8cm]
    \item Initialize an instance of $\DLapSampler{\eps}{\delta}$ as defined in \cref{thm:approximate-discrete-laplace-sampler-formal}, where $\delta = \frac{e^\eps - 1}{e^\eps + 1} \cdot \frac{\eps \gamma / d}{1 - \eps \gamma / d} \cdot \frac{1}{1 + n}$.
    
    \item Compute
    \begin{equation}
        \PurifiedApproxDiscreteLaplaceMechanism{n}{\eps}{\eps \gamma / d}(c)
            \doteq (1 - B) \cdot \clamp{c + Z'}{0}{n} + B \cdot \UniformNoise{\zeroton{n}}, \label{eq:purified-sampler-mpc}
    \end{equation}
    where $B \sim \Bernoulli{\eps \gamma / d}$ and $Z' \sim \DLapSampler{\eps}{\delta}$.
\end{enumerate}

\noindent
Recall that $\UniformNoise{\zeroton{n}}$ can be provided as part of the circuit input.  
We also assume that $\eps \gamma / d$ is replaced by the largest power of $2$ smaller than $\eps \gamma / d$, denoted as $\gamma''$.  
Accordingly, we adjust the parameter $\delta$ to 
\[
    \delta = \frac{e^\eps - 1}{e^\eps + 1} \cdot \frac{\gamma''}{1 - \gamma''} \cdot \frac{1}{1 + n}.
\]
This approximation does not affect the privacy guarantee, the asymptotic utility, or the asymptotic circuit complexity.

Under this setting, the Bernoulli variable $B \sim \Bernoulli{\gamma''}$ can be sampled exactly using the \textsc{Bernoulli}$(p)$ circuit primitive, which requires $O\left( \log_2 \frac{d}{\eps \gamma} \right)$ gates.  
Overall, the implementation of $\PurifiedApproxDiscreteLaplaceMechanism{n}{\eps}{\eps \gamma / d}(c)$ uses the following circuit components: 
one \textsc{Bernoulli}, one \textsc{Mux}, one \textsc{Clamp}, and one circuit for $\DLapSampler{\eps}{\delta}$.

\noindent
\textit{Circuit for $\DLapSampler{\eps}{\delta}$.}  
We will show that the circuit for $\DLapSampler{\eps}{\delta}$ can be implemented with cost
\[
    O\left(
        \frac{1}{\eps} \cdot \ln \frac{1}{\eps \delta} + \ln^2 \frac{1}{\delta} + \MulCost{\omega}
    \right).
\]
We briefly review the design of $\DLapSampler{\eps}{\delta}$ from \cref{algo:finite-discrete-laplace-noise}. Each sample $Z'$ from this mechanism has the form
\begin{equation*}
    Z' \gets (1 - B') \cdot S \cdot \paren{1 + r \cdot X + Y},
\end{equation*}
where:
\begin{itemize}[leftmargin=0.8cm, label=$\circ$]
    \item $B' \sim \algoBernoulli{\dlcenter}$, for 
    \[
        \begin{array}{c}
            \dlcenter \gets \BinaryExp{ 
                \P{ \DiscreteLapNoise{ e^{-\eps} } = 0 }
            }{
                \ell = \log_2 \frac{3}{\delta}
            }, 
        \end{array}
    \]
    the $\ell$-bit binary expansion of $\P{ \DiscreteLapNoise{ e^{-\eps} } = 0 }$;
    \item $S \sim \UniformNoise{\set{-1, 1}}$;
    \item $X$ and $Y$ are sampled using the finite Alias method (\cref{algo:finite-Alias-method}):
    \begin{align*}
        X \sim \algoFiniteAlias^\uparrow.\sample(), 
        \qquad 
        Y \sim \algoFiniteAlias^\downarrow.\sample(),
    \end{align*}
    where the alias samplers are initialized as: with $r = 2^{\ceil{\log_2 \frac{1}{\eps}}}$, 
    \begin{align*}
        \algoFiniteAlias^\uparrow &\gets \algoFiniteAliasInitialization \paren{ \operatorname{\mathbb{G}eo} \paren{ e^{- r \cdot \eps} }, \delta / 3}, \\
        \algoFiniteAlias^\downarrow &\gets \algoFiniteAliasInitialization \paren{\operatorname{\mathbb{G}eo} \paren{e^{-\eps}, \range{r}}, \delta / 3}.
    \end{align*}
\end{itemize}

\noindent
$S$ can be part of the input to the circuit, and $B'$ can be implemented with the \textsc{Bernoulli} circuit primitive.
It suffices to discuss the sampling of $X$ and $Y$.
We focus on $X$, as the procedure for $Y$ is analogous.

The sampler $\algoFiniteAlias^\uparrow$ consists of two arrays of length 
$m \in O\left( \frac{1}{r \cdot \eps} \cdot \ln \frac{1}{\delta} \right)$:  
a probability array $\vec{b}$, where each entry has bit-length $O\left( \ln \frac{1}{\delta} \right)$, and an alias array $\vec{a}$, where each entry has bit-length $O(\ln m)$.
Sampling from $\algoFiniteAlias^\uparrow$ proceeds in three steps:
\begin{itemize}[leftmargin=0.8cm, label=$\circ$]
    \item Sample a uniform random index $J \in \range{m}$.
    \item Retrieve $\vec{a}[J]$ and $\vec{b}[J]$.
    \item Sample a Bernoulli random variable $\Bernoulli{\vec{b}[J]}$; return $J$ if the outcome is $1$, otherwise return $\vec{a}[J]$.
\end{itemize}

\noindent
In this construction, the first step is provided as part of the circuit input (i.e., random index $J$ is given).
The second step is implemented using the circuit primitive \textsc{RandomTableAccess}, whose cost is 
\[
    O \PAREN{ \frac{1}{r \cdot \eps} \cdot \ln \frac{1}{\delta} \cdot \left( \ln \frac{1}{\delta} + \ln m \right) }
    = O \PAREN{ \ln^2 \frac{1}{\delta} },
\]
corresponding to the total bit-length required to store $\vec{a}$ and $\vec{b}$.
The third step can be efficiently implemented using the circuit primitives \textsc{Bernoulli} and \textsc{Mux}.

Similarly, the circuit for sampling $Y$ has cost 
\[
    O \PAREN{ r \cdot \left( \ln \frac{1}{\delta} + \ln r \right) }
    = O \PAREN{ \frac{1}{\eps} \cdot \ln \frac{1}{\eps \delta}  }.
\]

\subsubsection*{\bf Privacy Blanket Sampling.}  
As discussed in \cref{subsec:proof-private-sparse-histogram}, to generate a uniformly random subset of size $n + k$ from $[d]$, we use the following efficient method:
\begin{itemize}[leftmargin=0.8cm, label=$\circ$]
  \item Sample $m = 4 \cdot (n + k)$ elements independently and uniformly from $[d]$, where $k = n$.
  
  \item Remove duplicates by first sorting the sampled array. Then, compare each element with its predecessor: if two adjacent elements are equal, mark the first one as a dummy. Finally, sort again to move all dummy elements to the end.
  
  \item If the resulting array contains at least $n + k$ distinct elements, return the first $n + k$ elements. 
  
  \item Otherwise, terminate \cref{algo:eps-dp-sparse-histogram-circuit} early and return a fixed sparse histogram, 
  e.g., one in which elements $1$ through $n$ each have count $1$.
  While this may slightly degrade utility, it does not compromise privacy.  
  As shown in \cref{subsec:proof-private-sparse-histogram}, this increases the failure probability of the utility guarantee by at most an additive $\sqrt{n + k} \cdot \exp(- (n + k) / 16)$.
\end{itemize}

\noindent
This phase uses the following circuit primitives: two \textsc{Sort} operations, and $O(n)$ applications of \textsc{Equal} and \textsc{Mux}.
Therefore, it has cost 
$$O(\SortCircuitCost{n}{\omega} + n \cdot \omega).$$

\subsubsection*{\bf Privacy Blanket Merging.}  
We describe how to pad the non-dummy entries in $I_1$ with sampled entries from $I_2$ to produce an output array of size $n + k$.  
At this stage, all entries in both $I_1$ and $I_2$ are represented as triples.  
We begin by appending $I_2$ to the end of $I_1$, resulting in an array of size $2n + k$.  

Next, we sort the combined array based on the first component of each triple.  
We then scan the array to identify duplicates—elements appearing in both $I_1$ and $I_2$—and convert the copy from $I_2$ into a dummy triple, ensuring that the original from $I_1$ is retained.  
After this deduplication step, we re-sort the array so that all dummy triples appear at the end.  
Then, we drop the third component from each remaining triple, converting them into pairs.

Finally, for the first $n + k$ pairs, we sample fresh noise values for their second components and release the resulting array.

\noindent
This phase uses the following circuit primitives: two \textsc{Sort} operations, and $O(n)$ applications of \textsc{Equal}, \textsc{Mux}, and the circuit for $\PurifiedApproxDiscreteLaplaceMechanism{n}{\eps}{\eps \gamma / d}$.  
Therefore, it has cost 
\[
    O\bigl(\SortCircuitCost{n}{\omega} + n \cdot \omega + n \cdot \mathrm{Cost}\bigl(\PurifiedApproxDiscreteLaplaceMechanism{n}{\eps}{\eps \gamma / d}\bigr)\bigr).
\]

\newpage

\section{Extension to Add/Remove Model}
\label{sec:add-remove-model}

In \cref{sec:dp-sparse-histogram}, we present a deterministic linear-time algorithm achieving optimal $\ell_\infty$ error for the pure DP sparse histogram problem under the replacement neighboring model (see \cref{sec:problem-description} for its definition).
We now investigate the following question:

\begin{quote}
    \textbf{Research Question.} Can a comparable algorithm be designed under another commonly studied notion of neighboring datasets, the \emph{add/remove model}, defined formally below?
\end{quote}

\subsubsection*{\bf Add/Remove Neighboring Model.}
Two datasets $\DataSet$ and $\DataSet'$ are said to be \emph{neighboring}, denoted $\DataSet \sim \DataSet'$, if they differ by the addition or removal of a single entry; that is,
$\DataSet' = \DataSet \cup { \UserData{n+1} }$ or
$\DataSet' = \DataSet \setminus { \UserData{i} }$ for some $i \in [n]$.

Accordingly, two histograms $\hist, \hist' \in \N^d$ are called \emph{neighboring} if they are induced by neighboring datasets under this model.

In the add/remove model, there is no fixed upper bound on the dataset size, so it is often referred to as the \emph{unbounded setting}.
A commonly studied relaxation is the \emph{upper-bounded setting}, where a public upper bound on the dataset size is assumed.

This add/remove model is fundamentally more challenging than the replacement model—for example, the dataset size $n$ cannot be revealed under add/remove neighbors, whereas it can be published in the replacement model.

\subsubsection*{\bf Impossibility Result.}

The answer to the research question is \textbf{negative}.
If an algorithm’s running time depends deterministically on $n$, then its running time leaks $n$: neighboring datasets can have different sizes and thus different execution times.
Conversely, if an algorithm’s running time is independent of $n$, then achieving optimal $\ell_\infty$ error is impossible.
Since $n$ may be arbitrarily larger than the fixed running time, there exist inputs for which the algorithm does not even have enough time to read the entire dataset, making accurate histogram estimation unattainable.

On the other hand, we show that it is possible to design an expected linear-time algorithm with optimal $\ell_\infty$ error whose running time remains private, in the sense formalized below.

\subsubsection*{\bf Time Obliviousness.}
The privacy of the running time is captured with by an extended notion differential privacy.
Formally, we require the joint distribution of the algorithm’s output and running time to satisfy differential privacy.

\begin{definition}[$\paren{\eps, \delta}$-DP Time-Oblivious Algorithm~\citep{DovDNT23}] 
    \label{def:joint-output-timing-dp}
    Let $\eps \in \R_+$ and $\delta \in [0, 1]$.  
    A randomized algorithm $\cM: \cY \rightarrow \cZ$ is said to be \emph{$\paren{\eps, \delta}$-differentially private time-oblivious} if, for every $\DataSet, \DataSet' \in \cY$ such that $\DataSet \sim \DataSet'$,  
    the joint distributions of $\paren{ \cM (\DataSet), \runningtime{\cM (\DataSet)} }$ and $\paren{ \cM (\DataSet'), \runningtime{\cM (\DataSet')} }$ are $\paren{\eps, \delta}$-indistinguishable,  
    where $\runningtime{\cM (\DataSet)}$ denotes the running time of $\cM (\DataSet)$.
\end{definition}

A similar definition was proposed by \citet{RatliffV24}, under the name \emph{$\paren{\eps, \delta}$-Joint Output/Timing Privacy}. 
Their formulation is more comprehensive, as it accounts for the influence of the computational environment on execution time by modeling it as part of the algorithm’s input.  
In this work, we adopt the formulation of \citet{DovDNT23} for simplicity of presentation.

\subsection*{From Replacement to Add/Remove Neighboring Model}
\label{subsec:replacement-to-add-remove}

Our solution builds on the recent framework of \citet{ratliff2025securing} for converting differentially private, time-oblivious algorithms (referred to as Joint Output/Timing private algorithms in their work) from the \emph{upper-bounded setting} (see the discussion following \cref{def: Differential Privacy}) to the \emph{unbounded setting}, under the add/remove neighboring model. 
The original framework is developed in the RAM model, where each memory cell can store an arbitrarily large natural number.

\begin{proposition}[\citep{ratliff2025securing}]
    \label{prop:ub-to-unbounded}
    For all $\beta_1 > 0$, $\eps, \eps_1 > 0$ and $\eps$-Joint Output/Timing private mechanism $\cM : \cY \rightarrow \cZ$ in the RAM model for the upper-bounded setting, 
    there exists a $\eps + \eps_1$-Joint Output/Timing private mechanism $\cM' : \cY \rightarrow \cZ$ in the RAM model, for the unbounded setting, such that 
    \begin{equation}
        \tvdistance{\cM \paren{y} }{\cM' \paren{y} } \le \beta_1.
    \end{equation}
\end{proposition}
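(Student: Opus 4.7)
The plan is to construct $\cM'$ by first privately estimating an upper bound on $\card{y}$ and then invoking $\cM$ with that estimate as its public size bound. Concretely, given input $y \in \cY$ of true (private) size $n \doteq \card{y}$, I would (i) draw a noisy count $\tilde{n} \gets n + \LapNoise{1/\eps_1}$ (or its discrete analogue suited to the RAM model), (ii) set $N \doteq \tilde{n} + \Delta$ for a deterministic safety margin $\Delta \doteq \lceil \frac{1}{\eps_1} \ln(1/\beta_1) \rceil$ chosen so that $\P{N < n} \le \beta_1$, (iii) pad $y$ with dummy/canonical entries up to total size $N$ (if $N \ge n$; otherwise $y$ is truncated to size $N$ and the failure event is triggered), and (iv) invoke $\cM$ with the public upper bound $N$ on this padded input, returning whatever $\cM$ returns. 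Since $\cM$ already satisfies $\eps$-Joint Output/Timing privacy in the upper-bounded setting parameterised by $N$, the compound algorithm $\cM'$ is a well-defined mechanism in the unbounded setting.

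For the joint output/timing privacy guarantee, I would argue in two stages. First, the map $y \mapsto N$ is $\eps_1$-Joint Output/Timing private under add/remove neighbours: Laplace sampling can be implemented with deterministic (or time-oblivious) running time in the RAM model, and its output distribution is $\eps_1$-DP since the counting query has sensitivity $1$ under add/remove. Second, conditional on $N$, the padding step has running time determined entirely by $N$ and the dummy-generation routine; neighbouring inputs $y \sim y'$ produce padded inputs of identical size $N$, differing in at most one entry, so $\cM$'s $\eps$-Joint Output/Timing privacy in the upper-bounded setting applies. Composing via the basic composition theorem (which extends to joint output/timing DP because the pair $(N, (\cM(y), \runningtime{\cM(y)}))$ is treated as a single adaptive release) yields $(\eps+\eps_1)$-Joint Output/Timing privacy of $\cM'$.

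For the TV-distance bound, let $E$ denote the good event $\set{N \ge n}$. Conditioned on $E$, the padded input fed to $\cM$ has the same nonzero content as $y$ plus canonical padding up to $N$; by the specification of the upper-bounded $\cM$, this produces the same output distribution as $\cM(y)$ evaluated under its upper-bounded convention. Thus $\tvdistance{\cM'(y)}{\cM(y)} \le \P{\lnot E} \le \beta_1$ by the tail bound on $\LapNoise{1/\eps_1}$ with the chosen $\Delta$.

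\textbf{Main obstacle.}
The delicate point I expect to wrestle with is the timing half of the joint privacy claim. Output-privacy composition is routine, but running time is an additive quantity whose distribution on neighbours can differ in more subtle ways than the output. In particular, I must ensure that (a) the Laplace sampler used to produce $\tilde{n}$ is itself time-oblivious in the RAM model (otherwise the $\eps_1$-budget leaks via timing), and (b) the padding subroutine takes time that is a deterministic function of $N$ alone, not of $n$—so that neighbouring inputs whose padded sizes coincide also incur identical padding cost up to the same neighbouring-relation slack inherited from $\cM$. Provided both conditions are met, the cited framework of \citet{ratliff2025securing} applies essentially verbatim, and the composition plus TV bound above completes the argument.
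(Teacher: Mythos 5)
Your single‐shot Laplace estimate is a genuinely different route from what \citet{ratliff2025securing} (and the paper's \cref{alg:private-upper-bound-search}) actually do. The cited framework performs an \emph{iterative doubling search}: it constructs candidates $\hat n_1, \hat n_2, \ldots$, roughly doubling each time, and at round $k$ spends privacy budget $\eps_1/2^k$ and failure budget $\beta_1/2^k$ to test (via a noisy threshold on $\min(n,\hat n_k)$) whether $n \le \hat n_k$; the geometric decay makes the total budget sum to $\eps_1$ and the total failure probability to $\beta_1$. Two things that buys you, and that your plan does not: (i) the returned bound satisfies $n \le \hat n \le c\,n$ with high probability, i.e.\ a \emph{multiplicative} blow-up, whereas your $N = n + \LapNoise{1/\eps_1} + \Delta$ carries an \emph{additive} overhead of order $\Delta = \Theta(\eps_1^{-1}\ln(1/\beta_1))$, which can dominate $n$ when $n$ is small and is exactly what the paper's running-time analysis in \cref{subsec:replacement-to-add-remove} leans on; and (ii) the iterative scheme never needs to know $n$ exactly — at round $k$ it only touches $\min(n, \hat n_k)$ elements, a deterministic $O(\hat n_k)$-time operation whose cost depends only on the public guess, not on the private $n$. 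Your plan, by contrast, computes $n$ up front before adding noise, which is harmless only if the RAM model hands you $n$ in $O(1)$ time; in the unbounded setting where $n$ must be discovered by scanning, this step alone leaks $n$ deterministically through timing, which the iterative construction is specifically designed to avoid.

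Beyond the route difference, two of the conditions you flag as ``obstacles'' are in fact real gaps rather than routine checks: the paper's own \cref{sec:private-noise-samplers} makes clear that a fully time-oblivious sampler for an \emph{unbounded-support} distribution such as $\LapNoise{1/\eps_1}$ does not exist — you would need to commit to a truncated discrete Laplace on $[-\Delta,\Delta]$ and re-derive the privacy and tail guarantees of that truncation, which your sketch never pins down. Likewise, your condition (b) — that padding time depend on $N$ alone — requires reading a real element and synthesizing a dummy to cost \emph{identical} time, otherwise neighbours with the same realized $N$ still differ deterministically by the per-element read/generate gap. Both are fixable, but until they are fixed the joint output/timing claim is not established, and the proof as written does not yet reconstruct the proposition.
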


For clarity of presentation, we tailor the framework to the private sparse histogram problem in Word-RAM model.
The pseudocode is given in \cref{alg:private-upper-bound-search}, and the construction relies on two key components.

\begin{algorithm}[H]
    \caption{Applying \citeauthor{ratliff2025securing}'s Framework}
    \label{alg:private-upper-bound-search}
    \begin{algorithmic}[1]
        \Statex \hspace{-4.8mm} {\bf Input:} 
            parameters $d, n, k, a_\eps, b_\eps, a_\gamma, b_\gamma \in \N_+$, s.t., $\eps = a_\eps / b_\eps$ and $\gamma = a_\gamma / b_\gamma$
        \Statex \hspace{-4.8mm} \hspace*{\widthof{\textbf{Input:}}} 
            parameter $\tau \in \N_+$
            \Comment{as defined in \cref{eq:def-tau}}
        \Statex \hspace{-4.8mm} \hspace*{\widthof{\textbf{Input:}}} 
            histogram $\DataSet = \set{\UserData{1}, \ldots, \UserData{n}}$ 
        \Statex \hspace{-4.8mm} \hspace*{\widthof{\textbf{Input:}}} 
            additional parameters $a_{\eps_1}, b_{\eps_1}, a_{\beta_1}, b_{\beta_1} \in \N_+$, s.t., $\eps_1 = a_{\eps_1} / b_{\eps_1}$ and $\beta_1 = a_{\beta_1} / b_{\beta_1}$
        \For{$k \in \N_+$}
            \State $\eps_k \gets \eps_1 / 2^k$, \quad $\beta_k \gets \beta_1 / 2^k$
            \State $\hat{n}_k \gets \ceil{ \frac{8}{\eps_k} \cdot \ln \frac{1}{\beta_k} }$
            \State $t \gets \min\{n, \hat{n}_k\}$
            \State $\tilde{n} \gets \PurifiedApproxDiscreteLaplaceMechanism{\hat{n}_k}{\eps_k}{\beta_k}{ t }$
            \Comment{$\tilde{n} \approx \clamp{t + \DiscreteLapNoise(e^{-\eps_k})}{0}{\hat{n}_k}$}
            \If{$\tilde{n} < \frac{1}{2} \cdot \hat{n}_k$}
                \State Truncate $\DataSet$ to its first $\hat{n}_k$ entries if $n > \hat{n}_k$
                \State Construct $\hist$ based on $\DataSet$
                \State \Return $\ourAlgo(d, \hat{n}_k, k, a_\eps, b_\eps, a_\gamma, b_\gamma, \hist)$
                \Comment{\cref{algo:tail-item-padding}}
            \EndIf
        \EndFor
    \end{algorithmic}
\end{algorithm}

\subsubsection*{\bf Construction.}
First, suppose we have an estimate $\hat{n}$ such that $n \le \hat{n} \le c \cdot n$ for some constant $c$.
We run \cref{algo:tail-item-padding} with $n$ replaced by $\hat{n}$, keeping all other parameters fixed.
This modifies the algorithm in two ways:
(1) noise is now sampled over the range $\zeroton{\hat{n}}$, which preserves the original privacy guarantees and tail bounds; and
(2) more items are sampled in line~\ref{line:blanket-sampling}, which intuitively enhances the privacy of the selected item set $I$.
This intuition is formally supported by revisiting the proof of \cref{thm:private-sparse-histogram-formal}, presented in \cref{subsec:proof-private-sparse-histogram}.
Although replacing $n$ with $\hat{n}$ increases the running time, the asymptotic complexity remains the same, since $\hat{n} \in O(n)$.

Second, \citet{ratliff2025securing} show that, for given $\beta_1 \in (0,1)$, \cref{alg:private-upper-bound-search} is an $\eps_1$-DP time-oblivious algorithm that computes an estimate $\hat{n}$ satisfying $n \le \hat{n} \le c \cdot n$ with probability at least $1 - \beta_1$.
The algorithm constructs a sequence of candidate values $\hat{n}_k$, starting from a small initial guess and approximately doubling at each step, until it finds some $\hat{n}_k \ge n$. 
At each iteration, it spends privacy budget $\varepsilon_1 / 2^k$ to compare $\hat{n}_k$ to $n$: it proceeds with probability $1 - \beta_1 / 2^k$ if $n \ge \hat{n}_k$, and halts with the same probability if $n \le \hat{n}_k / 2$. 
The privacy budget and failure probability are geometrically decayed to ensure that their total sums remain bounded by the overall privacy and failure parameters.

There is one caveat: the iteration may terminate with some $\hat{n}_k < n$, though this occurs with low probability. 
If we then run \cref{algo:tail-item-padding} on a histogram constructed from the full dataset $\DataSet$, using $\hat{n}_k$ in place of $n = \card{\DataSet}$, the algorithm may fail. 
In particular, the privacy blanket sampling step (\cref{algo:tail-item-padding}, line~\ref{line:blanket-sampling}) requires $\hat{n}_k + k \ge \card{I_1}$, where $\card{I_1}$ can be as large as $n$. 
To address this, \cref{alg:private-upper-bound-search} truncates $\DataSet$ to its first $\hat{n}_k$ elements before constructing $\hist$ and invoking \cref{algo:tail-item-padding}. 
This truncation preserves the sensitivity of neighboring datasets and thus maintains privacy, at the cost of a small utility loss (with low probability).

\subsubsection*{\bf Privacy, Utility Guarantees, and Running Time.}
The joint output/time privacy guarantees and the utility bounds follow directly from \cref{prop:alias-method-exact} and the properties of \cref{algo:tail-item-padding}, as formalized in \cref{thm:private-sparse-histogram-formal}.

Analyzing the running time of \cref{alg:private-upper-bound-search} in the word-RAM model requires more care. 
When $k \in O(\log n)$, the input parameters and internal variables such as $\varepsilon_k$, $\beta_k$, and $\hat{n}_k$ all fit within a constant number of words. 
If $\ourAlgo$ (\cref{algo:tail-item-padding}) is not invoked, each iteration runs in $\tilde{O}(1 / \varepsilon)$ deterministic time, and there are at most $O(\log n)$ such iterations.

When the algorithm terminates, it satisfies $\Pr[n \le \hat{n}_k \le c \cdot n] \ge 1 - \beta_1$ for some constant $c$. 
Thus, with probability at least $1 - \beta_1$, the cost of invoking $\ourAlgo$ is 
$O(n)$.

For larger $k \in \N$ such that $\hat{n}_k > c \cdot n$, the variables $\varepsilon_k$, $\beta_k$, and $\hat{n}_k$ may exceed word size and require $O(k)$ bits to represent. 
This increases the cost of both $\PurifiedApproxDiscreteLaplaceMechanism{\hat{n}_k}{\varepsilon_k}{\beta_k}{t}$ and $\ourAlgo$ by a factor of $O(k)$. 
However, since the probability of reaching iteration $k$ decays faster than geometrically once $\hat{n}_k > c \cdot n$, the overall expected running time remains bounded by 
$O(n)$.

\newpage
\section{Related Work}
\label{sec:related work}

In this section, we review related work. 
We begin with possible side-channel attacks and defenses, then discuss noise sampling algorithms for differential privacy, and finally cover differentially private frequency estimation algorithms.

\subsection{Side-Channel Attacks}
While differentially private mechanisms offer strong theoretical guarantees, practical implementations can introduce side channels—such as floating-point rounding artifacts and timing variability—that can be exploited to compromise privacy.

\subsubsection*{\bf Floating Point Attack}
\citet{Mironov12} was the first to expose the vulnerability of implementing the Laplace mechanism using double-precision floating-point numbers.  
He observed that certain floating-point values cannot be generated due to the finite precision and rounding effects inherent in floating-point arithmetic, which can make it possible to distinguish between neighboring inputs and thereby break the guarantees of differential privacy.  
\citet{JinMRO22} later showed that similar attacks are also possible against the Gaussian mechanism.
One solution to these vulnerabilities is to adopt discrete versions of the Laplace and Gaussian mechanisms \citep{GhoshRS09, Canonne0S20}, which avoid the pitfalls of floating-point arithmetic.

\subsubsection*{\bf Timing Attack}
\citet{JinMRO22} further demonstrated that even discrete mechanisms designed to avoid floating-point issues---such as those by \citet{Canonne0S20} and \citet{google_noise}---can still leak information through timing side channels.
These mechanisms typically employ \emph{geometric sampling}, which when directly simulated, repeatedly performs biased coin tosses until the first ``head'' occurs.  
This introduces a positive correlation between the sampled value and the algorithm's running time, potentially leaking information.

\paragraph{Time-Oblivious Sampling.}
\citet*{DovDNT23} systematically study noise sampling algorithms resilient to timing-attack, and their implications for DP mechanisms. 
An algorithm is defined as time-oblivious if its output distribution and running time distribution are independent. 
Their key findings include:
1) a discrete distribution admits a time-oblivious sampling if and only if it has a finite support and a rational probability mass function, for which they provide sampling algorithms with optimal number of unbiased random bits;
and 2) such a distribution admits a worst-case time complexity sampling algorithm if the least common multiple of the denominators of its rational probabilities is a power of $2$.

As a time-oblivious sampling algorithm can require significantly more random bits than the classical Knuth-Yao sampler \citep{Knuth1976TheCO}, \citet{DovDNT23} introduce a relaxed notion of $\PAREN{\eps, \delta}$ time-oblivious sampler (see \cref{def:approx-time-oblivious-sampler}).
They design both $\eps$-pure and $\PAREN{0, \delta}$-approximate time-oblivious samplers that use logarithmic number of random bits but exponential space.

\paragraph{DP Time-Oblivious Algorithm.}
Extending their notion of $(\eps, \delta)$ time-oblivious sampler, \citet{DovDNT23} defines DP time-oblivious mechanisms (see \cref{def:joint-output-timing-dp}).
They show that any pure DP time-oblivious  mechanism over infinitely many datasets gives an irrelevant output with some constant probability.
Nevertheless, they demonstrate that a pure DP mechanism can be transformed into a time-oblivious with nearly the same privacy and utility guarantees, although the efficiency of this transformation remains an open problem due to its reliance on manipulating countably many probability values.

A similar definition was proposed by \citet{RatliffV24} under the name \emph{$(\eps, \delta)$-Joint Output/Timing Privacy}. 
Their formulation is more comprehensive, as it explicitly models the computational environment as part of the algorithm’s input, thereby capturing its influence on execution time. 
They also introduce a general framework that composes timing-stable programs with randomized delays to enforce timing privacy. 
However, this framework guarantees only approximate timing differential privacy.

In follow-up work, \citet{ratliff2025securing} present a new framework for converting pure, time-oblivious differentially private algorithms from the \emph{upper-bounded setting} (see the discussion following \cref{def: Differential Privacy}) to the \emph{unbounded setting}, under the add/remove neighboring relation and in the RAM model.

\subsection{Noise Generation}

This subsection reviews methods for generating random noise used in differentially private algorithms,  
with a focus on Bernoulli, discrete Laplace, and discrete Gaussian distributions.  
Many of these methods were developed in the MPC setting,  
and when simulated in the central model, they can prevent timing attacks.

\citet{DworkKMMN06} presented a distributed protocol for generating shares of approximate Gaussian or approximate geometric random variables, secure against malicious participants. 
In particular, they proved that each bit in the binary representation of a geometric random variable (over an interval whose length is a power of $2$) can be generated independently according to distinct Bernoulli distributions, and they provided closed-form formulas for these distributions.

Thus, the problem of efficiently generating geometric random variables reduces to that of generating Bernoulli random variables. 
\citet{DworkKMMN06} further studied efficient protocols for generating Bernoulli random variables $\Bernoulli{p}$ for some $p \in \paren{0, 1}$. 
They first observed that such a random variable can be generated using at most $2$ unbiased random bits in expectation.
Second, they presented deterministic-time protocols that, with high probability, generate a batch of $m$ biased bits whose statistical distance to $\Bernoulli{p}$ is at most $2^{-\ell}$. 
Their protocols are described as circuits:
\begin{enumerate}[leftmargin=0.6cm]
    \item The first has depth $\Theta \PAREN{ \ln^2 \paren{m \cdot \ell} }$, gate count $\Theta \PAREN{ m \cdot \ell \cdot \PAREN{ \ell + \ln m } \cdot \ln m }$, and input size $\Theta \PAREN{ m }$.
    \item The second has depth $\Theta \PAREN{ \ln m }$, gate count $\Theta \PAREN{ m^2 \cdot \ell }$, and input size $\Theta \PAREN{ m }$.
    \item The third has depth $\Theta \PAREN{ \ln \paren{ m + \ell } }$, gate count $\Theta \PAREN{ m \ell \cdot \ln \PAREN{ m + \ell } }$, and input size $\Theta \PAREN{ m  \ln \paren{ m + \ell } }$.
\end{enumerate}
When directly simulating these circuits on a central server, they require $\Omega \PAREN{ m \cdot \ell \cdot \ln \PAREN{ m + \ell } }$ running time.

\citet{ChampionSU19} design an MPC protocol that, given $p \in \PAREN{0, 1}$, samples $m$ random variables whose total variation distance to $m$ i.i.d. Bernoulli random variables $\Bernoulli{p}$ is at most $\ell$. The protocol achieves this with an amortized communication and computation cost of $O \PAREN{ \ln \PAREN{ \ell + \ln m } }$. 
A key ingredient in their approach is the circuit construction for stacks proposed by \citet{ZahurE13}.

\citet{Canonne0S20} show how to generate random variables that exactly follow discrete Gaussian or discrete Laplacian distributions under the word-RAM model using rejection sampling. 
Their paper also includes a subroutine for sampling an exact Bernoulli random variable $\Bernoulli{\exp \PAREN{ -\gamma }}$ for some positive rational number $\gamma \in \Q_+$.
Their algorithms require an unlimited amount of memory,  
run in $O(1)$ expected time,  
but could be vulnerable to potential timing attacks \citep{JinMRO22}.

\citet{KnottVHSIM21} present a software framework for secure MPC primitives in machine learning, which includes sampling algorithms for the Bernoulli, continuous Laplace, and Gaussian distributions. 
They represent floating-point values using fixed-point encoding with a length of $L$ bits. 
The framework also provides algorithms for evaluating complex functions such as $e^x$, $\sin x$, and $1 / x$. 
However, the authors do not explicitly discuss the running time or the number of bits required for initialization and intermediate computations to achieve a specified final precision when evaluating these functions via their algorithms.

\citet{WeiYFCW23} provide MPC protocols that realize the Bernoulli sampling and approximate geometric random sampling algorithms from \citet{DworkKMMN06}. 
Using these building blocks, they construct an MPC protocol for generating approximate discrete Laplace variables, which they further employ to develop an MPC protocol that approximates the discrete Gaussian sampling algorithm of \citet{Canonne0S20}.

\citet{KellerM0TZ24} propose MPC protocols that approximate the discrete Laplace and discrete Gaussian sampling algorithms of \citet{Canonne0S20} in a more direct and streamlined manner.

\citet{franzese2025secure} present an MPC protocol for noise generation over a finite domain, based on the table lookup method. 
Given a probability distribution $\mu$ over a finite domain, their method constructs $O(\ell)$ tables, each of size 
$O \paren{ \card{ \CoreSupportSet{2^{-\ell}} } }$,
where 
$\CoreSupportSet{2^{-\ell}}$
denote a subset of minimum size such that $\mu \paren{ \CoreSupportSet{2^{-\ell}} } \ge 1 - 2^{-\ell}$.
These tables are then used to sample a random variable whose statistical distance to $\mu$ is at most $O \PAREN{ 2^{-\ell} }$, with a running time proportional to the number of tables accessed during sampling.

\subsection{Differentially Private Frequency Estimation}

This subsection reviews differentially private algorithms for estimating item frequencies, including frequency oracles and private histograms.  
A private histogram explicitly releases a noisy version of the entire frequency vector, while a frequency oracle is a data structure that allows on-demand, query-based access to noisy frequency estimates.  
Every private histogram implicitly defines a frequency oracle.

\subsubsection*{\bf Additional Error Measure}
To motivate frequency oracle, we first introduce another commonly considered error measure, the per-query accuracy guarantee.

\begin{definition}[$(\alpha, \beta)$-Accurate Estimator]
\label{def:accurate-estimator}
    Let $\alpha \in \R_+$ and $\beta \in [0, 1]$.  
    A random variable $X$ is an $(\alpha, \beta)$-accurate estimator of $t \in \R$ if  
    $\P{ \card{X - t} \ge \alpha } \le \beta$.
\end{definition}
A histogram $\noisyhist$ achieves $(\alpha, \beta)$ per-query accuracy if, for each $i \in [d]$, $\noisyhist{i}$ is an $(\alpha, \beta)$-accurate estimator of $\hist{i}$.  
In the sparse histogram setting, these two notions coincide.  
\citet{BalcerV19} show that, for any $\beta \in (0, 1/2]$, an $\eps$-DP histogram $\noisyhist$ with $\norm{\noisyhist}_0 \le n'$ cannot achieve $(\alpha, \beta)$ per-query accuracy for  
\(
    \alpha \in \Omega \bigparen{ \min \set{ \frac{1}{\eps} \ln \frac{d}{n' \cdot \beta}, \, n } },
\)
which matches the simultaneous accuracy upper bounds obtained by our algorithm in \cref{sec:dp-sparse-histogram}.

There exist other private succinct representations of $\hist$ beyond sparse histograms, known as \emph{frequency oracles}, which can bypass the per-query accuracy lower bounds discussed below.

\subsubsection*{\bf Frequency Oracle.}
Let $\hist \in \IntSet{0}{n}^d$ be a histogram with at most $n$ non-zero entries. 
A \emph{frequency oracle} for $\hist$ is a data structure that, given $i \in [d]$, returns an estimate of $\hist{i}$.

\citet{BalcerV19} propose an $\eps$-DP frequency oracle using $O\PAREN{n \cdot \ln d}$ bits of space.  
It provides expected per-query error $O\PAREN{1/\eps}$, query time $\tilde{O}\PAREN{n/\eps}$, and expected simultaneous error $O\PAREN{(1/\eps) \cdot \ln d}$ over all $i \in [d]$.

\citet{LebedaAP22} give an $\PAREN{\eps, \delta}$-DP frequency oracle using $O\PAREN{n \cdot \ln \PAREN{d + n}}$ bits of space.  
For $\delta = 0$, it achieves expected per-query error $O\PAREN{1/\eps}$, query time $O\PAREN{\ln d}$, and expected simultaneous error $O\PAREN{(1/\eps) \cdot \ln d}$.  
For $\delta > 0$, the per-query error remains $O\PAREN{1/\eps}$, the query time becomes $O\PAREN{\ln \PAREN{1/\delta}}$, and the simultaneous error is $O\PAREN{(1/\eps) \cdot \ln \PAREN{1/\delta}}$.

\citet{LolckP24} further reduce the space to $O\PAREN{n \cdot \ln n}$ bits and the query time to $O\PAREN{\ln \ln d}$ in the $\delta = 0$ setting, using techniques from error-correcting codes.

As discussed, compared to sparse frequency histograms, these representations typically incur higher computational cost when estimating $\hist{i}$ for a given $i \in [d]$. 
Moreover, they are insufficient for solving the following commonly studied variant of the heavy hitters problem~\citep{BNS19,DBLP:conf/aistats/WuW22}: privately identifying all $i \in [d]$ such that $\hist{i} \ge \Delta$, for $\Delta$ as small as possible.
Addressing this task with a frequency oracle directly requires querying all $d$ domain elements, incurring $O(d)$ query overhead.
Thus, this line of work is orthogonal to our focus on sparse histograms.

\subsubsection*{\bf Private Histogram.}
A \emph{private histogram} is a data structure that releases a noisy version of the entire frequency vector $\hist$.
\citet{DworkMNS06} introduced the $\eps$-DP Laplace mechanism: it publishes a privatized histogram $\noisyhist$ of $\hist$ by adding continuous Laplacian noise $\LapNoise{\exp \paren{- \eps / 2}}$ to each entry independently, where $\LapNoise{\exp \paren{- \eps / 2}}$ is supported over $\R$ with density function $p(x)$ satisfying $p(x) \propto \exp \PAREN{ - \eps \cdot \card{x} / 2}, \forall x \in \R$.
The expected error is $O \PAREN{1 / \eps}$ for a single entry in $\noisyhist$, and the expected maximum error is $O \PAREN{1 / \eps \cdot \ln d}$ over all entries. 
It is known that these errors are asymptotically optimal \citep{HardtT10, BeimelBKN14}.
However, this mechanism has running time $O(d)$ and assumes real arithmetic.

To avoid real arithmetic, \citet{GhoshRS09} proposed replacing the continuous Laplace noise $\LapNoise{\exp \paren{- \eps / 2}}$ with a discrete variant, denoted $\DiscreteLapNoise{\exp \paren{- \eps / 2}}$.  
This distribution, referred to as the two-sided geometric distribution in the original paper, has probability mass function $P(z) \propto \exp \PAREN{- \eps \cdot \card{z} / 2}$ for all $z \in \Z$.  
The discrete variant preserves the optimal asymptotic error bounds of the Laplace mechanism.  
However, sampling from $\DiscreteLapNoise{\exp \paren{- \eps / 2}}$ requires unbounded memory access and has running time bounded only in expectation, exposing it to potential timing attacks.

To avoid the $O(d)$ running time, the \textit{stability-based histogram algorithm}, first proposed by \citet{KorolovaKMN09} and later presented by \citet{BunNS19}, adds $\DiscreteLapNoise{\exp \paren{- \eps / 2}}$ noise only to the nonzero entries of $\hist$ and releases only those entries whose noisy counts exceed a threshold $\tau \in \Theta \paren{ \frac{1}{\eps} \cdot \ln { \frac{1}{\delta} } }$.  
This method satisfies $(\eps, \delta)$-differential privacy, achieves a strict running time of $\Tilde{O}(n)$, and incurs an expected error of $O \paren{ \frac{1}{\eps} }$ for each entry whose true count is at least $\Omega \PAREN{ \frac{1}{\eps} \cdot \ln { \frac{1}{\delta} } }$.  
The expected maximum error over all entries is $O \PAREN{ \frac{1}{\eps} \cdot \ln { \frac{1}{\delta} } }$.

\citet{CormodePST12} present an $\eps$-DP variant of the stability-based histogram that adds $\DiscreteLapNoise{e^{- \eps / 2}}$ to all entries of $\hist$ and then releases only those whose noisy counts exceed a threshold $\tau \in \Theta \PAREN{ \frac{1}{\eps} \cdot \ln d }$.  
This variant achieves an expected error of $O \PAREN{ \frac{1}{\eps} }$ for each entry whose true count is at least $\Omega \PAREN{ \frac{1}{\eps} \cdot \ln d }$, and the expected maximum error over all entries is $O \PAREN{ \frac{1}{\eps} \cdot \ln d }$.  
However, a naive implementation of this method still requires $O(d)$ running time.
The key observation is that, for entries whose true count is zero, the probability that their noisy count exceeds the threshold $\tau$ is $p_\tau \in \Tilde{O} \PAREN{ \frac{1}{d} }$.  
Therefore, the number of such false positive entries follows the binomial distribution $\Binomial{d - \norm{\hist}_0}{p_\tau}$, which has an expected value of $\Tilde{O}(1)$, where $\norm{\hist}_0$ is the number of nonzero entries in $\hist$.  
Recently, \citet{DBLP:journals/corr/QiuY25} presented a variant of \citep{CormodePST12} and explicitly discussed how to sample $\Binomial{d - \norm{\hist}_0}{p_\tau}$ in the Real-RAM model in $O(1)$ expected time.
However, it remains open whether one can sample from $\Binomial{d - \norm{\hist}_0}{p_\tau}$ in strict $\Tilde{O}(n)$ time in the Word-RAM model. 

To overcome the aforementioned issues, \citet{BalcerV19} proposed releasing the $n$ largest noisy counts, which ensures that the output size is always bounded by $n$.  
Their approach achieves the same asymptotic error as that of \citet{CormodePST12}.  
Instead of adding noise to all entries in $\hist$ and then reporting the top $n$, they design a more sophisticated algorithm that samples these noisy counts directly, achieving a strict running time of $\Tilde{O}(n^2)$.

\citet{LebedaT24} study the problem of constructing private sparse histograms in the streaming setting.  
They propose an $\PAREN{\eps, \delta}$-DP algorithm that operates in strict time and outputs a sparse histogram using $2 \cdot k$ words of space.  
The histogram includes all elements whose frequencies are at least $n / \PAREN{1 + k} + \Omega \PAREN{1 / \eps \cdot \ln \PAREN{1 / \delta}}$, along with their frequency estimates.  
By treating the frequencies of elements not included in the histogram as zero, the expected maximum frequency estimation error is bounded by $n / \PAREN{1 + k} + O \PAREN{1 / \eps \cdot \ln \PAREN{1 / \delta}}$.  
Their algorithm also has an $\eps$-DP variant achieving expected maximum estimation error $n / \PAREN{1 + k} + O \PAREN{1 / \eps \cdot \ln d}$, which reduces to $O \PAREN{1 / \eps \cdot \ln d}$ when $k = n$.  
However, their approach relies on existing techniques for releasing the top-$n$ noisy counts, whereas the best known algorithm with strict time and bounded memory for this problem has time complexity $\tilde{O}\PAREN{n^2}$ \citep{BalcerV19}.

\end{document}